\newtheorem{theorem}{Theorem}
\newtheorem{lemma}{Lemma}
\def\mod#1#2#3{
  {#1}\equiv{#2}(\text{mod }{#3})
}
\newcommand{\N}{\mathbb{N}}
\newcommand{\poly}{\mbox{poly}}
\newcommand{\tO}{\tilde{O}}
\renewcommand{\circ}{ \mbox{{\Large $\cdot$}}}
\renewcommand{\bigcirc}{\bigodot}
\date{}
\author{
	Arturs Backurs\footnote{\texttt{backurs@mit.edu}}\\ MIT 
	\and Piotr Indyk\footnote{\texttt{indyk@mit.edu}}\\ MIT
}
\begin{document}
	\begin{titlepage}
\title{Which Regular Expression Patterns are Hard to Match?}
\clearpage\maketitle
\thispagestyle{empty}
\begin{abstract}
Regular expressions constitute a fundamental notion in formal language theory and are frequently used in computer science to define search patterns. In particular, regular expression matching  and membership testing are widely used computational primitives, employed in many programming languages and text processing utilities. A classic algorithm for these problems constructs and simulates a non-deterministic finite automaton corresponding to the expression, resulting in an $O(m n)$ running time (where $m$ is the length of the pattern and $n$ is the length of the text). This running time can be improved slightly (by a polylogarithmic factor), but no significantly faster solutions are known. At the same time, much faster algorithms exist for various special cases of regular expressions, including dictionary matching, wildcard matching, subset matching, word break problem etc.

In this paper, we show that the complexity of regular expression matching can be characterized based on its {\em depth} (when interpreted as a formula). Our results hold for expressions involving concatenation, OR, Kleene star and Kleene plus. For regular expressions of depth two (involving any combination of the above operators), we show the following dichotomy: matching and membership testing can be solved in near-linear time, except for ``concatenations of stars'', which  cannot be solved in strongly sub-quadratic time assuming the Strong Exponential Time Hypothesis (SETH). For regular expressions of depth three the picture is more complex. Nevertheless, we show that all problems can either be solved in strongly sub-quadratic time, or   cannot be solved in strongly sub-quadratic time assuming SETH. 

An intriguing special case of membership testing involves regular expressions of the form  ``a star of an OR of concatenations'', e.g., $[a|ab|bc]^*$. This corresponds to the so-called {\em word break} problem, for which a dynamic programming algorithm with a runtime of (roughly) $O(n \sqrt{m})$ is known. We show that the latter bound is not tight and improve the runtime to $O(n m^{0.44 \ldots})$.
\end{abstract}
\end{titlepage}

	\section{Introduction}

A regular expression (regexp) is a formula that describes a set of words over some alphabet $\Sigma$. It consists of individual symbols from $\Sigma$, as well as operators such as {\em OR} ``$|$" (an alternative between several pattern arguments), {\em Kleene star} ``${*}$" (which allows 0 or more repetitions of the pattern argument),  {\em Kleene plus} ``$+$" (which allows 1 or more repetitions of the pattern argument),  
wildcard ``$.$" (which matches an arbitrary symbol), 
etc.
For example, $[a|b]^+$ describes any sequence of symbols $a$ and $b$ of length at least $1$. See Preliminaries for the formal definition. 

In addition to being a fundamental notion in formal language theory, regular expressions are widely used in computer science to define search patterns. 
Formally, given a regular expression (pattern) $p$ of size $m$ and a sequence of symbols (text)  $t$ of length $n$, the goal of regular expression matching is to check whether a substring of $t$ can be derived from $p$.
A closely related problem is that of membership testing where the goal is to check whether the text $t$ itself can be derived from $p$.
Regular expression matching and membership testing are  widely used computational primitives, employed in several  programming languages and text processing utilities such as Perl, Python, JavaScript, Ruby, AWK, Tcl and Google RE2. Apart from text processing and programming languages, regular expressions are used in computer networks~\cite{kumar2006algorithms}, databases and data mining~\cite{garofalakis1999spirit}, computational biology~\cite{navarro2003fast}, human-computer interaction~\cite{kin2012proton} etc.

A classic algorithm for both problems constructs and simulates a non-deterministic finite automaton corresponding to the expression, resulting in the ``rectangular" $O(m n)$ running time. A sequence of improvements, first by Myers~\cite{myers1992four} and then by \cite{bille2009faster}, led to an algorithm  that achieves roughly $O(m n/\log^{1.5} n)$ running time. The latter result constitutes the fastest algorithm for this problem known to date, despite an extensive amount of research devoted to this topic. The existence of faster algorithms is a well-known open problem~(\hspace{1sp}\cite{galil1985open}, Problem 4).

However, significantly faster algorithms are known for various well-studied special cases of regular expressions. For example:
\begin{enumerate}
\item If the pattern is a concatenation of symbols (i.e., we search for a specific sequence of symbols in the text), the pattern matching problem corresponds to the ``standard'' string matching problem and can be solved in linear time, e.g., using the Knuth-Morris-Pratt algorithm~\cite{knuth1977fast}.
\item If the pattern is of the form $p_1 | p_2 | \ldots |p_k$ where $p_i$ are sequences of symbols, then the pattern matching problem corresponds to the {\em dictionary matching} problem that can be solved in linear time using the Aho-Corasick algorithm~\cite{aho1975efficient}.
\item If the pattern is a concatenation of symbols and single character wildcards ``$.$" matching any symbol, the pattern matching problem is known as the {\em wildcard matching} and can solved in (deterministic) $O(n \log m)$ time using convolutions~\cite{fischer1974string,indyk1998faster,kalai2002efficient,cole2002verifying}.
\item More generally, if the pattern is a concatenation of {\em single} character ORs of the form $s_1|s_2|\ldots |s_k$  for $s_1, \ldots, s_k \in \Sigma$ (e.g., $[a|b][a|c][b|c]$), the pattern matching problem is known as {\em superset matching} and can be solved in (deterministic) $O(n \log^2 m)$ time~\cite{cole1997tree,cole2002verifying}.
\item Finally, if the goal is to test whether a text $t$ can be derived from a pattern $p$ of the form $(p_1 | p_2 | \ldots |p_k)^+$ where $p_i$ are sequences of symbols, the problem is known as the  {\em word break} problem. It is a popular interview question~\cite{NC,LC}, and the known solutions can be implemented to run in   $\sqrt{m} n \log^{O(1)} n$ time.
\end{enumerate}

The first two examples were already mentioned in~\cite{galil1985open} as a possible reason why a faster algorithm for the general problem might be possible. Despite the existence of such examples, any super-polylogarithmic improvements to the algorithms of~\cite{myers1992four,bille2009faster} in the general case remain elusive. Furthermore, we are not aware of any systematic classification of regular expressions into ``easy'' and ``hard'' cases for the pattern matching and membership testing problems. The goal of this paper is to address this gap. 
\ \\
\paragraph{Results} Our main result is a classification of the computational complexity of regular expression matching and membership checking for patterns that involve operators ``$|$", ``$+$", ``${*}$" and concatenation ``${\circ}$", based on the pattern {\em depth}. Our classification enables us to distinguish between the cases that are solvable in sub-quadratic time (including the five problems listed above) and the cases that do not have strongly sub-quadratic time algorithms under natural complexity theoretic assumptions, such as Strong Exponential Time Hypothesis \cite{impagliazzo2001complexity}  and Orthogonal Vectors conjecture~\cite{williams2005new,followup2}. Our results therefore demonstrate a non-trivial dichotomy for the complexity of these problems. 

To formulate our results, we consider pattern formulas that are {\em homogeneous}, i.e., in which the operators at the same level of the formula are equal (note that the five aforementioned problems involve patterns that satisfy this condition). We say that  a homogeneous   formula of depth $k$ has {\em type} $o_1 o_2 \ldots o_k$ if for all levels $i$ the operators at level $i$ are equal to $o_i$ (note that, in addition to the operators, a level might also contain leaves, i.e. symbols;
for example the expression  $[a|b]a[b|c]$ is a depth-$2$ formula of type ``${\circ} |$"). We will assume that no two consecutive operators in the type descriptor are equal, as otherwise they can be collapsed into one operator. 

Our results are described in Table~\ref{t:d2} (for depth-2 expressions) and Table~\ref{t:d3} (for depth-3 expressions). The main findings can be summarized as follows:
\begin{enumerate}
\item Almost all pattern matching and membership problems involving depth-2 expressions can be solved in near-linear time. The lone exception involve patterns of type ``${\circ} {*}$'', for which we show that matching and membership problems cannot be solved in time $O((mn)^{1-\delta})$ for any constant $\delta>0$ and $m\leq n$ assuming the Strong Exponential Time Hypothesis (SETH)~\cite{impagliazzo2001complexity}.   Interestingly, we show that pattern matching with a very similar depth-2 type, namely ``${\circ} +$'', can be solved in $O(n \log^2 m)$ time. 
\item Pattern matching problems with depth-2 expressions contain a ``high density'' of interesting algorithmic problems, with non-trivial algorithms existing for types ``${\circ} +$'' (this paper), ``${\circ} |$''~\cite{cole2002verifying} , ``$| {\circ}$'' \cite{aho1975efficient} and ``$+  {\circ}$'' (essentially solved in \cite{knuth1977fast}, since $+$ can be dropped). In contrast, membership problems with depth-2 expressions have a very restrictive structure that makes them mostly trivially solvable in linear time, with the aforementioned exception for the ``${\circ} {*}$'' type
\item Pattern matching problems with depth-3 expressions have a more diversified structure. All types starting with  ${\circ}$ are SETH-hard; all types starting with $|$ are either-SETH hard (if followed by ${\circ}$) or easily solvable in linear time; all types starting from ${*}$ are trivially solvable in linear time (since ${*}$ allows zero repetitions); all types starting from $+$ inherit their complexity from the last two operators in the type description (since $+$ allows exactly one repetition). 
\item Finally, membership checking with depth-3 expressions presents the most complex picture.  As before, all types starting with  ${\circ}$ are SETH-hard. On the other hand, types starting with $|$ (except for $| {\circ} {*}$) have linear time algorithms, with difficulty levels that range from trivial observations to undergraduate-level exercises\footnote{Incidentally, the second author used some of the problems from Table~\ref{t:d3} in an undergraduate algorithms course.}. Types starting with ${*}$ or $+$ include ``${*} | {\circ}$'' and ``$+ | {\circ}$'', which correspond to the aforementioned word break problem~\cite{NC,LC}. This is the only problem in the table whose (conditional) complexity is not determined up to logarithmic factors. However, we show that the running time of the standard dynamic-programming based algorithm can be improved, from roughly $n m^{0.5}$ to roughly $n m^{0.5 \ - \ \nicefrac{1}{18}}$. 
\end{enumerate}

\begin{table*}
\normalsize
\hspace*{-0.8cm} 
\begin{tabular}{| l | l | l | l |}
\hline
Type & Example & Pattern matching & Membership\\
\hline
${\circ}  +$ & $a^+ a b^+$ & $O(n \log^2 m)$ (\iftoggle{cr}{full version}{Section~\ref{cp}}) &  $O(n+m)$ (immediate) \\
${\circ}  {*}$ & $a^* a b^*$ & $\Omega((mn)^{1-\alpha})$ (Section~\ref{cs}) &  $\Omega((mn)^{1-\alpha})$ (\iftoggle{cr}{full version}{Section~\ref{mcs}}) \\
${\circ}  |$& $[a|b] [b|c]$ & $O(n \log^2 m)$ ({\bf superset matching} \cite{cole2002verifying})& $O(n+m)$ (immediate)\\
\hline
$|  {\circ}$& $ab | c$ & $O(n+m)$ ({\bf dictionary matching} \cite{aho1975efficient})& $O(n+m)$ (immediate)\\
$|  {*}$ &   $a^* |a | b^*$ &  $O(n+m)$ (immediate)  &  $O(n+m)$ (immediate) \\
$|  +$ & $a^+ | a | b^+$ & $O(n+m)$ (immediate - reducible to ``$|$'') & $O(n+m)$ (immediate) \\
\hline
${*}  {\circ}$ & $[ab]^*$ &  $O(n+m)$ (immediate)   & $O(n+m)$ (immediate) \\
${*}  +$ &   $[a^+]^*$ &  $O(n+m)$ (immediate)  & $O(n+m)$ (immediate - reducible to ``$+$'') \\
${*}  |$ & $[a|b]^*$ &  $O(n+m)$ (immediate)  & $O(n+m)$ (immediate) \\
\hline
$+  {\circ}$ & $[ab]^+$ & $O(n+m)$ ({\bf string matching}  \cite{knuth1977fast})& $O(n+m)$ (immediate - equivalent to ``${*} {\circ}$'')  \\
$+  |$ &  $[a|b]^+$ & $O(n+m)$ (immediate - reducible to ``$|$'') & $O(n+m)$ (immediate - equivalent to ``${*} |$'') \\
$+  {*}$ & $[a^*]^+$ & $O(n+m)$ (immediate) &  $O(n+m)$ (immediate - reducible to ``$+$'') \\
\hline
\end{tabular}
\caption{Classification of the complexity of the pattern matching and the membership test problems for depth-2 expressions.  All lower bounds assume SETH and $m \le n$.
Some upper bounds use randomization (notably hashing). 
For an explanation of reducibility see ``Our techniques''.}
\label{t:d2}
\end{table*}

\begin{table*}
\normalsize
\iftoggle{cr}{}{\hspace*{-0.8cm}}
\begin{tabular}{| l | l | l | l |}
\hline
Type & Example & Pattern matching & Membership\\
\hline
${\circ} | {\circ}$ & $[a|bb][ba|b]$ &  $\Omega((mn)^{1-\alpha})$ (\iftoggle{cr}{full version}{Section~\ref{coc}})    &   $\Omega((mn)^{1-\alpha})$ (\iftoggle{cr}{full version}{Section~\ref{mcoc}})  \\
${\circ} | {*}$ & $[a^* | b^* ][c^*|b]$ & $\Omega((mn)^{1-\alpha})$  (from ``${\circ} {*}$'')     &     $\Omega((mn)^{1-\alpha})$  (from ``${\circ} {*}$'')  \\
${\circ} | +$ & $[a^+ | b^+ ][c^+|b]$ &  $\Omega((mn)^{1-\alpha})$ (\iftoggle{cr}{full version}{Section~\ref{cop}})      & $\Omega((mn)^{1-\alpha})$ (\iftoggle{cr}{full version}{Section~\ref{mcop}})   \\
${\circ} {+} {\circ}$ &  $[ab]^+[bca]^+$ & $\Omega((mn)^{1-\alpha})$ (\iftoggle{cr}{full version}{Section~\ref{cpc}})     &   $\Omega((mn)^{1-\alpha})$ (\iftoggle{cr}{full version}{Section~\ref{mcpc}})   \\
${\circ} {+} |$ & $[a|b]^+[a|c|d]^+$ &   $\Omega((mn)^{1-\alpha})$   (\iftoggle{cr}{full version}{Section~\ref{cpo}})    &   $\Omega((mn)^{1-\alpha})$   (\iftoggle{cr}{full version}{Section~\ref{mcpo}})    \\
${\circ} {+} {*}$ & $[a][a^+]^* [b^+]$ &  $\Omega((mn)^{1-\alpha})$  (from ``${\circ} {*}$'')     &  $\Omega((mn)^{1-\alpha})$  (from ``${\circ} *$'')   \\
${\circ} * {\circ}$ & $[ab]^*[bca]^*$ &  $\Omega((mn)^{1-\alpha})$ (from ``${\circ} *$'')       &  $\Omega((mn)^{1-\alpha})$  (from ``${\circ} *$'')   \\
${\circ} {*} |$ & $[a|b]^*[a|b|c]^*$ &  $\Omega((mn)^{1-\alpha})$ (from ``${\circ} *$'')     &  $\Omega((mn)^{1-\alpha})$  (from ``${\circ} *$'')   \\
${\circ} * {+}$ & $[a^*]b[b^+]^*$ & $\Omega((mn)^{1-\alpha})$  (from ``${\circ} *$'')       &  $\Omega((mn)^{1-\alpha})$  (from ``${\circ} *$'')   \\
\hline
$| {\circ} |$ & $[(a|b)(b|c)]  | [(a|c)b]$ &  $\Omega((mn)^{1-\alpha})$ (Section~\ref{oco})      &   $O(n+m)$ (immediate) \\
$| {\circ} *$ & $[a^*b^*] | [b^*c^*]$ &   $\Omega((mn)^{1-\alpha})$  (from ``${\circ} *$'')    &   $\Omega((mn)^{1-\alpha})$  (from ``${\circ} *$'')  \\
$| {\circ} {+}$ & $[a^+b^+] |  [b^+c^+]$ & $\Omega((mn)^{1-\alpha})$  (Section~\ref{ocp})     & $O(n+m)$ (\iftoggle{cr}{full version}{Section~\ref{run}})  \\
$| {*} {\circ}$ & $[abc]^*  |  [bc]^*$ &  $O(n+m)$ (immediate - reducible to ``$|$'')      & $O(n+m)$   (\iftoggle{cr}{full version}{Section~\ref{period}})   \\
$| {*} |$ & $[a|b|c]^*  |  [b|c]^*$ & $O(n+m)$ (immediate - reducible to ``$|$'')       &   $O(n+m)$ (immediate) \\
$| {*} {+}$ & $[a^+]^*  |  [b^+]^*$ &  $O(n+m)$ (immediate - reducible to ``$|$'')       & $O(n+m)$ (immediate)    \\
$| {+} {\circ}$ & $[abc]^+  |  [bc]^+$ &  $O(n+m)$  (reducible to ``$| {\circ}$'')       & $O(n+m)$   (\iftoggle{cr}{full version}{Section~\ref{period}})  \\
$| {+} |$ & $[a|b|c]^+  |  [b|c]^+$ &  $O(n+m)$   (immediate - reducible to ``$|$'')       &  $O(n+m)$ (same as ``$| {*} |$'')   \\
$| {+} {*}$ & $[a^*]^+  |  [b^*]^+$ & $O(n+m)$  (immediate - reducible to ``$|$'')       &   $O(n+m)$ (immediate)  \\
\hline
${*} {\circ} |$ & $[[a|b][b|c]]^*$  &  $O(n+m)$ (immediate)  &  $O(n+m)$ (immediate)  \\
${*} {\circ} {*}$ & $[a^*b^*c^*]^*$ & $O(n+m)$ (immediate)   &  $\Omega((mn)^{1-\alpha})$  (from ``${\circ} {*}$'')  \\
${*} {\circ} {+}$ & $[a^+b^+c^+]^*$ & $O(n+m)$ (immediate)   &   $O(n+m)$ (\iftoggle{cr}{full version}{Section~\ref{run}})  \\
${*} | {\circ}$ & $[a|ab|bc]^*$  & $O(n+m)$ (immediate)   & $O(n m^{0.44 \ldots})$ ({\bf word break} - \iftoggle{cr}{full version}{Section~\ref{wb}})  \\
${*} | {*}$ & $[a^*|b^*|c^*]^*$  &  $O(n+m)$ (immediate)  &    $O(n+m)$ (immediate)  \\
${*} | {+}$ & $[a^+|b^+|c^+]^*$  & $O(n+m)$ (immediate)   &    $O(n+m)$ (immediate)  \\
${*} {+} {\circ}$ & $[[abcd]^+]^*$  & $O(n+m)$ (immediate)   &   $O(n+m)$ (immediate)  \\
${*} {+} |$ & $[[a|b|c|d]^+]^*$  & $O(n+m)$ (immediate)   &    $O(n+m)$ (immediate) \\
${*} {+} {*}$ & $[[a^*]^+]^*$  &  $O(n+m)$ (immediate)  &   $O(n+m)$ (immediate)  \\
\hline
${+} {\circ} |$ & $[[a|b][b|c]]^+$  & $O(n \log^2 m)$  (reducible to ``${\circ} |$'')  & $O(n+m)$ (same as ``${*} {\circ} |$'')   \\
${+} {\circ} {*}$ & $[a^*b^*c^*]^+$ & $\Omega((mn)^{1-\alpha})$  (from ``${\circ} {*}$'')      &   $\Omega((mn)^{1-\alpha})$ (same as ``${*} {\circ} {*}$'') \\
${+} {\circ} {+}$ & $[a^+b^+c^+]^+$  &  $O(n \log^2 m)$  (reducible to ``${\circ} + $'') & $O(n+m)$ (same as ``${*} {{\circ}} +$'')    \\
${+} | {\circ}$ & $[a|ab|bc]^+$  &  $O(n+m)$ (reducible to ``$ | {\circ}$'') &   $O(n m^{0.44 \ldots})$ ({\bf word break} - \iftoggle{cr}{full version}{Section~\ref{wb}})  \\
${+} | {*}$ & $[a^*|b^*|c^*]^+$  &  $O(n+m)$  (reducible to ``$| {*}$'')  & $O(n+m)$ (same as ``${*} | {*}$'')\\
${+} | {+}$ & $[a^+|b^+|c^+]^+$  &  $O(n+m)$  (reducible to ``$ | +$'') & $O(n+m)$  (same as  ``${*} | {+}$'') \\
${+} {*} {\circ}$ & $[[abcd]^*]^+$ &  $O(n+m)$    (reducible to ``${*} {\circ}$'') & $O(n+m)$  (same as ``${*} {\circ}$'')   \\
${+} {*} |$ & $[[a|b|c|d]^*]^+$ & $O(n+m)$   (reducible to ``${*} |$'') & $O(n+m)$  (same as ``${*} |$'')   \\
${+} {*} {+}$ & $[[a^+]^*]^+$ & $O(n+m)$    (reducible to ``${*} {+}$'')   & $O(n+m)$   (same as ``${*} {+}$'')  \\
\hline
\end{tabular}
\caption{Classification of the complexity of the pattern matching and the membership test problems for depth-3 expressions. See Figure~\ref{figure_pattern_tree} for the visualization of the table.}
\label{t:d3}
\end{table*}

\paragraph{Our techniques} 
Our upper bounds for depth-2 expressions follow either from known near-linear time algorithms for specific variants of regular expressions, or relatively simple constructions of such algorithms. In particular, we observe that type ``${\circ} |$" expressions (concatenations of ORs)  correspond to superset matching, type ``$|{\circ}$" expressions (OR of sequences) correspond to dictionary matching and type ``$+{\circ}$'' reduces to ``${\circ}$'' and thus corresponds to the standard pattern matching problem. Furthermore, we give a near-linear time algorithm for pattern matching with type ``${\circ} +$" expressions, where patterns are concatenations of expressions of the form $s^{\ge k}$ or $s^{k}$, where $s^{\ge k}$ denotes a sequence of symbols $s$ repeated at least $k \ge 1$ times.\footnote{For example, the expression    $aa^+bc^+$ generates all words of the form $a^{\ge 2}b^1c^{\ge 1}$.} We show that this problem can be solved in near-linear time by reducing it to one instance of subset matching and one instance of wildcard matching. All other problems can be solved in linear time, with the exception of type ``${\circ} {*}$". The latter expressions correspond to patterns obtained by concatenating  patterns of the form $s^{\ge k}$ and  $s^{k}$. Unlike in the ``${\circ} +$" case, however, here we cannot assume that $k \ge 1$, since each symbol could be repeated zero times. We show that this simple change makes the problem SETH-hard. 
 This is accomplished by a reduction from an intermediate problem, namely the (unbalanced version of the) Orthogonal Vectors Problem (OVP)~\cite{williams2005new,followup2}.
 The problem is defined as follows: given  two sets $A,B\subseteq \{0,1\}^d$ such that $|A|=M$ and $|B|=N$, determine whether there exists $x \in A$ and $y \in B$ such that the dot product  $x\cdot y =\sum_{j=1}^d x_j y_j$  is equal to $0$.\footnote{ The reduction is somewhat complex, so we will not outline it here. However, we give an overview of other reductions from OVP in the next few paragraphs.}

Our results for depth-3 expression pattern matching are multi-fold. First, all types starting from ${*}$ are trivially solvable in linear time, since ${*}$ allows zero repetitions. Second, all types starting from $+$ inherit their complexity from the last two operators in the type description, since $+$ allows exactly one repetition. Third, all types starting from $| {*}$ or $| +$ have simple linear time solutions. 

The remaining cases lead to SETH-hard problems. For six types this follows immediately from the analogous result for type ``${\circ} {*}$".
For the six remaining types the hardness is shown via individual reductions from OVP. 
For some types, such a reduction is immediate. For example, for type ``$| {\circ} |$" expressions (ORs of concatenations of ORs), we form the text by concatenating all vectors in $B$ (separated by some special symbol), and we form the pattern by taking an OR of the vectors in $A$, modified by replacing $0$ with $[0|1]$ and $1$ with $0$. A similar approach works for type ``$| {\circ} +$" expressions. 

The remaining four types are grouped into two classes:  ``${\circ} {+} {\circ}$" is grouped with ``${\circ} | {\circ}$" and ``${\circ} {+} |$" is grouped with ``${\circ} | +$".
For each group, we first show hardness of the first type in the group (i.e., of ``${\circ} {+} {\circ}$" and ``${\circ} {+} |$", respectively). We then show that the second type in each group is hard by making changes to the hardness proof for the first type. 

\begin{figure*}
	\centering
	
	\begin{tikzpicture} [solid]
		\node at (0,3) {Pattern matching:};
		
		\draw (0,0) .. controls (1,1) and (2,2.7) .. (3,3);
		\node[anchor=south east] at (1.5,1.5) {$\circ$};
		\node[anchor=west] at (3,3) {hard};
		
		\draw (0,0) .. controls (1,-1) and (2,-2.7) .. (3,-3);
		\node[anchor=south east] at (1.5,0.45) {$|$};
		
		\draw (0,0) .. controls (1,0.2) and (2,0.9) .. (3,1);
		\node[anchor=north east] at (1.5,-0.45) {$*$};
		
		\draw (0,0) .. controls (1,-0.2) and (2,-0.9) .. (3,-1);
		\node[anchor=north east] at (1.5,-1.5) {$+$};
		\node[anchor=west] at (3,-1) {easy};
		
		\draw (3,1) .. controls (3.33,1.2) and (3.66,1.54) .. (4,1.6);
		\node[anchor=south east] at (3.6,1.25) {$\circ$};
		\node[anchor=west] at (4,1.6) {hard};
		
		\draw (3,1) .. controls (3.33,0.8) and (3.66,0.46) .. (4,0.4);
		\node[anchor=north east] at (3.7,0.7) {${*},{+}$}; 
		\node[anchor=west] at (4,0.4) {easy};
		
		\draw (3,-4+1) .. controls (3.33,-4+1.2) and (3.66,-4+1.54) .. (4,-4+1.6);
		\node[anchor=south east] at (3.6,-4+1.25) {$\circ$};
		
		\draw (4,-2.4) .. controls (4.2,-2.3) and (4.6,-2.05) .. (4.8,-2);
		\node[anchor=south east] at (4.5,-2.2) {\footnotesize $*$};
		\node[anchor=west] at (4.8,-2) {hard};
		
		\draw (4,-2.4) .. controls (4.2,-2.5) and (4.6,-2.75) .. (4.8,-2.8);
		\node[anchor=north east] at (4.6,-2.5) {\footnotesize ${|},{+}$}; 
		\node[anchor=west] at (4.8,-2.8) {easy};
		
		\draw (3,-4+1) .. controls (3.33,-4+0.8) and (3.66,-4+0.46) .. (4,-4+0.4);
		\node[anchor=north east] at (3.7,-4+0.7) {${|},{*}$}; 
		\node[anchor=west] at (4,-4+0.4) {easy};
		
		\begin{scope}[shift={(8,0)}]
			\node at (0,3) {Membership:};
			
			\draw (0,0) .. controls (1,1) and (2,2.7) .. (3,3);
			\node[anchor=south east] at (1.5,1.5) {$\circ$};
			\node[anchor=west] at (3,3) {hard};
			
			\draw (0,0) .. controls (1,-1) and (2,-2.7) .. (3,-3);
			\node[anchor=south east] at (1.5,0.45) {$|$};
			
			\draw (0,0) .. controls (1,0.2) and (2,0.9) .. (3,1);
			\node[anchor=north east] at (1.5,-0.45) {$*$};
			
			\draw (0,0) .. controls (1,-0.2) and (2,-0.9) .. (3,-1);
			\node[anchor=north east] at (1.5,-1.5) {$+$};
			
			\draw (3,1) .. controls (3.33,1.2) and (3.66,1.54) .. (4,1.6);
			\node[anchor=south east] at (3.6,1.25) {$\circ$};
			
			\draw (4,4-2.4) .. controls (4.2,4-2.3) and (4.6,4-2.05) .. (4.8,4-2);
			\node[anchor=south east] at (4.5,4-2.2) {\footnotesize $*$};
			\node[anchor=west] at (4.8,4-2) {hard};
			
			\draw (4,4-2.4) .. controls (4.2,4-2.5) and (4.6,4-2.75) .. (4.8,4-2.8);
			\node[anchor=north east] at (4.6,4-2.5) {\footnotesize ${|},{+}$}; 
			\node[anchor=west] at (4.8,4-2.8) {easy};
			
			\draw (3,1) .. controls (3.33,0.8) and (3.66,0.46) .. (4,0.4);
			\node[anchor=north east] at (3.7,0.7) {${*},{+}$}; 
			\node[anchor=west] at (4,0.4) {easy};
			
			
			\draw (3,-2+1) .. controls (3.33,-2+1.2) and (3.66,-2+1.54) .. (4,-2+1.6);
			\node[anchor=south east] at (3.6,-2+1.25) {$\circ$};
			
			\draw (4,-0.4) -- (4.8,-0.3);
			\node[anchor=south east] at (4.6,-0.4) {\footnotesize $*$};
			\node[anchor=west] at (4.8,-0.25) {\footnotesize hard};
			\draw (4,-0.4) -- (4.8,-0.5);
			\node[anchor=north east] at (4.7,-0.4) {\tiny ${|},{+}$};
			\node[anchor=west] at (4.8,-0.55) {\footnotesize easy};
			
			\node[anchor=south] at (3.7,-1.1) {\small $|$};
			\draw (3,-1) -- (4,-1);
			
			\draw (4,-0.6-0.4) -- (4.8,-0.6-0.3);
			\node[anchor=south east] at (4.6,-0.7-0.4) {\footnotesize $\circ$};
			\node[anchor=west] at (4.8,-0.6-0.25) {\footnotesize Word Break};
			\draw (4,-0.6-0.4) -- (4.8,-0.6-0.5);
			\node[anchor=north east] at (4.7,-0.6-0.4) {\tiny ${*},{+}$};
			\node[anchor=west] at (4.8,-0.6-0.55) {\footnotesize easy};

			\draw (3,-2+1) .. controls (3.33,-2+0.8) and (3.66,-2+0.46) .. (4,-2+0.4);
			\node[anchor=north east] at (3.7,-2+0.7) {$+$}; 
			\node[anchor=west] at (4,-2+0.4) {easy};
			
			
			\draw (3,-2-2+1) .. controls (3.33,-2-2+1.2) and (3.66,-2-2+1.54) .. (4,-2-2+1.6);
			\node[anchor=south east] at (3.6,-2-2+1.25) {$\circ$};
			
			\draw (4,-2-0.4) -- (4.8,-2-0.3);
			\node[anchor=south east] at (4.6,-2-0.4) {\footnotesize $*$};
			\node[anchor=west] at (4.8,-2-0.25) {\footnotesize hard};
			\draw (4,-2-0.4) -- (4.8,-2-0.5);
			\node[anchor=north east] at (4.7,-2-0.4) {\tiny ${|},{+}$};
			\node[anchor=west] at (4.8,-2-0.55) {\footnotesize easy};
			
			\node[anchor=south] at (3.7,-2-1.1) {\small $|$};
			\draw (3,-2-1) -- (4,-2-1);
			
			\draw (4,-2-0.6-0.4) -- (4.8,-2-0.6-0.3);
			\node[anchor=south east] at (4.6,-2-0.7-0.4) {\footnotesize $\circ$};
			\node[anchor=west] at (4.8,-2-0.6-0.25) {\footnotesize Word Break};
			\draw (4,-2-0.6-0.4) -- (4.8,-2-0.6-0.5);
			\node[anchor=north east] at (4.7,-2-0.6-0.4) {\tiny ${*},{+}$};
			\node[anchor=west] at (4.8,-2-0.6-0.55) {\footnotesize easy};

			\draw (3,-2-2+1) .. controls (3.33,-2-2+0.8) and (3.66,-2-2+0.46) .. (4,-2-2+0.4);
			\node[anchor=north east] at (3.7,-2-2+0.7) {$*$}; 
			\node[anchor=west] at (4,-2-2+0.4) {easy};
		\end{scope}
	\end{tikzpicture}
	
	\caption{Tree diagrams visualizing Table~\ref{t:d3}. Depth 3 types are classified as ``easy'' (near-linear time), ``hard'' (near-quadratic time, assuming SETH), or ``Word Break'' (whose complexity is not determined). The leftmost operators in each tree correspond to the leftmost operators in type descriptions.}
	\label{figure_pattern_tree}
\end{figure*}
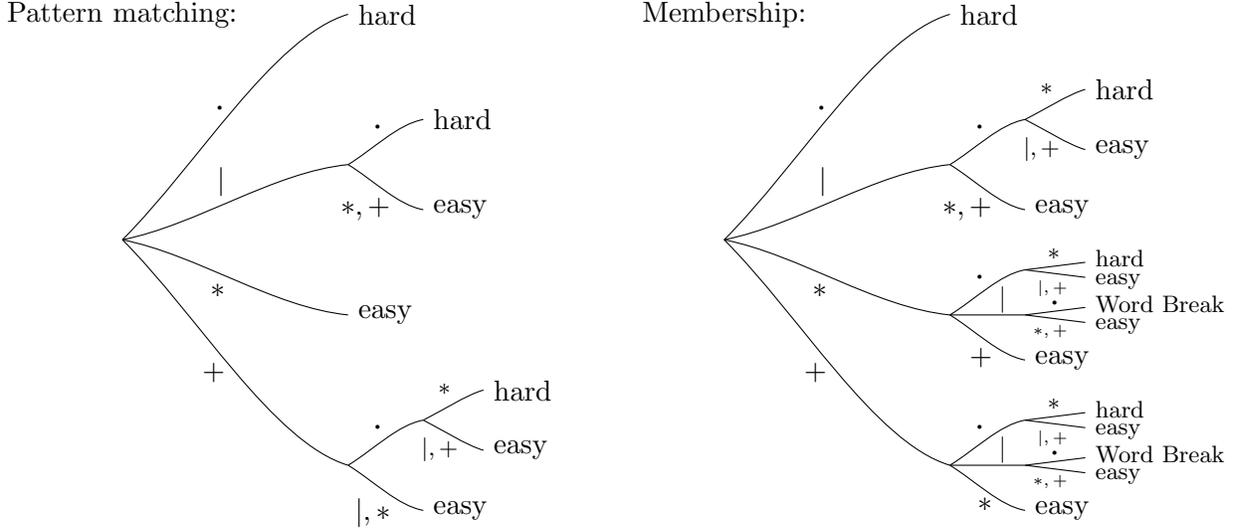

The hardness proof for ``${\circ} {+} {\circ}$"  proceeds as follows. 
We form the pattern $p$ by concatenating (appropriately separated) pattern vector gadgets for each vector in $A$, and form the text $t$ by concatenating (appropriately separated) text vector gadgets for each vector in $B$. We then show that if there is a pair of orthogonal vectors $a^i \in A, b^j \in B$ then $p$ can be matched to a substring of $t$, and vice versa. To show this,  we construct $p$ and $t$ so that any pair of gadgets (in particular the gadgets for $a^i$  and $b^j$) can be aligned.  We then show that (i) each vector gadget for a vector in $A$ can be matched with ``most" of the gadget for the corresponding $b \in B$ (ii) matching the gadgets for  {\em orthogonal} vectors $a^i$ and $b^j$ allows us to make a ``smaller step", i.e., to match the gadget for $a^i$ with a smaller part of the gadget for $b^j$, and (iii) at least one ``smaller step" is necessary to completely derive a substring of $t$ from $p$. We then conclude that there is a pair of orthogonal vectors $a^i \in A, b^j \in B$ if and only if $p$ can be matched to a substring of $t$. The hardness proof for  ``${\circ} {+} |$"  follows a similar general approach, although the technical development is different. In particular, we construct the gadgets such that the existence of orthogonal vectors makes it possible to make a ``bigger step", i.e., to derive a bigger part of $t$, and that one bigger step is necessary to complete the derivation. 

To show hardness of the second type in each group, we adapt the arguments for the first type in the group. In particular, to show hardness for type ``${\circ} | {\circ}$" , we construct $p$ and $t$ as in the reduction for type ``${\circ} {+} {\circ}$" and then transform $p$ into a type ``${\circ} | {\circ}$"  regular expression $p'$.
The transformation has the property that $p'$  is {\em less} expressive than $p$ (i.e., the language corresponding to $p$ is a \emph{superset} of the language corresponding to $p'$), but the specific substrings of the text $t$ needed for the reduction can be still derived from $p'$. The hardness proof for  ``${\circ} | +$" is obtained via a similar transformation of the hardness proof for ``${\circ} {+} |$". 

Finally, consider the membership checking problem for depth-3 expressions.  As before, all types starting with  ${\circ}$ are  shown  to be SETH-hard. The reductions are similar to those for the pattern matching problem, but in a few cases require some modifications.
On the other hand, types starting with $|$ (with the exception of $| {\circ} {*}$) have linear time algorithms. 
The algorithms are not difficult, but require the use of basic algorithmic notions, such as periodicity (for types ``$| {*} {\circ}$'' and ``$| {+} {\circ}$'') and run-length encoding (for type ``$| {\circ} +$'').
Types starting with ${*}$ are mostly solvable in linear time, with two exceptions: type ``${*} {\circ} {*}$'' inherits the hardness from ``${\circ} {*}$''\footnote{Let a regular expression $p$ and a text $t$ be a hard instance for the ``${\circ} {*}$'' membership problem. Let $x$ be a symbol that does not appear in $p$ or $t$. Then $p':=(x \circ p \circ x)^*$ and $t:=xtx$ is a hard instance for the membership problem of type ``${*} {\circ} {*}$''. Since $t'$ starts and ends with a unique symbol, we must use the argument regular expression $x \circ p \circ x$ exactly once. Thus we get ``${\circ} {*}$'' membership problem.}, while the type  ``${*} | {\circ}$''  corresponds to the aforementioned word break problem which we discuss in the next paragraph. Finally, types starting with $+$ are analogous to those starting with ${*}$.

The word break problem is the only problem in the table whose (conditional) complexity is not determined up to logarithmic factors.  There are several known solutions to this problem based on dynamic programming~\cite{NC,LC}. A careful implementation of those algorithms (using substring hashing and pruning)  leads to a runtime of $O( nm^{0.5} \log^{O(1)} n)$.
However, we show that this bound is not tight, and can be further improved  to roughly $O(n m^{0.5 \ - \ \nicefrac{1}{18}})$. Our new algorithm speeds up the dynamic program by using convolutions to pre-compute information that is reused multiple times during the execution of the algorithm. We note that the algorithm is randomized and has a one-sided error. \\
\iftoggle{cr}{Due to space limitations, most of the proofs have been omitted from this version of the paper.}{}

\paragraph{Related work} Our hardness results come on the heels of several recent works demonstrating quadratic hardness of sequence alignment problems assuming SETH or other conjectures. In particular, such results have been shown for Local Alignment~\cite{AWW}, Fr\'echet distance~\cite{Bring}, Edit Distance~\cite{BI} and Longest Common Subsequence~\cite{followup1,followup2}. As in our case, most of those results were achieved by a reduction from OVP, performed by concatenating appropriately constructed gadgets for the vectors in the input sets. The technical development in our paper is, however, quite different, since regular expression matching is not defined  by a sequence similarity  measure. Instead, our gadget constructions are tailored to the specific sets of operators and expression types defining the problem variants. Furthermore, we exploit the similarity between related expression types (such as ``${\circ} {+} {\circ}$" and ``${\circ} | {\circ}$") and show how to convert a hardness proof for one type into a hardness proof for the other type.

The reduction in Section~\ref{oco} has been independently discovered by Kasper Larsen and Raphael Clifford (personal communication). Conditional lower bounds (via reductions from 3SUM) for certain classes of regular expressions have been investigated in~\cite{amir2015mind}. Estimating the complexity of regular expression matching using {\em specific} algorithms  has also been  a focus of  several papers.  See e.g., ~\cite{weideman2016analyzing} and the references therein.

	\section{Preliminaries}
\ \\
\paragraph{Orthogonal Vectors problem} Our reductions use the unbalanced version of the {\em Orthogonal Vectors Problem}, defined as follows: given  two sets $A,B\subseteq \{0,1\}^d$ such that $|A|=M, |B|=N$, determine whether there exists $x \in A$ and $y \in B$ such that the dot product  $x\cdot y =\sum_{j=1}^d x_j y_j$ (taken over the reals) is equal to $0$.   
An equivalent formulation of this problem is:  given two collections of subsets of $\{1, \ldots, d\}$,  of sizes $M$ and $N$, respectively,  determine if  there is a set in the first collection that is contained in a set from the second collection. 

  It is known that, for any $M=\Theta(N^{\alpha})$ for some $\alpha \in (0,1]$ and any constant $\delta>0$,  any algorithm for OVP problem with an $O(MN)^{1-\delta}$ running time  would also yield a more efficient algorithm for SAT, violating the Strong Exponential Time Hypothesis, even in the setting when the dimension $d$ is arbitrary $d=\omega(\log N)$. 
  This was shown in~\cite{williams2005new} for the balanced case $M=N$, and extended to the unbalanced case in~\cite{followup2}. Therefore, in this paper we show that a problem is SETH-hard by reducing unbalanced OVP to it. \\
\paragraph{Subset Matching problem} In the subset matching problem, we are given a pattern string $p$ and a text string $t$ where each pattern and text location is a set of symbols drawn from some alphabet.  The pattern is said to occur at the text position $i$ if the set $p_j$ is a subset of the set $t_{i+j}$ for all $j$. The goal of the problem is find all positions where $p$ occurs in $t$. The problem can be solved in deterministic $O(N \log^2 N)$ time, where $N=\sum_i |t_i| + \sum_i |p_i|$ \cite{cole2002verifying}.\\
\paragraph{Superset Matching problem} This problem is analogous to subset matching except that we require that $p_j$ is a \emph{superset} of the set $t_{i+j}$ for all $j$. The aforementioned algorithm of~\cite{cole2002verifying} applies to this problem as well.\\
\paragraph{Wildcard Matching problem} In the wildcard matching problem, we are given a pattern string $p$ and a text string $t$ where each pattern and text location is an element from $\Sigma \cup \{.\}$, where ``$.$" is the special wildcard symbol.   The pattern is said to occur at the text position $i$ if for all j we have that (i) one of the symbols $p_j$ and  $t_{i+j}$ is equal to ``$.$", or (ii)  $p_j=t_{i+j}$. The goal of the problem is find all positions where $p$ occurs in $t$. The problem can be solved in deterministic $O(n \log n)$ time \cite{cole2002verifying}.\\
\paragraph{Regular expressions} 
Regular expressions over a symbol set $\Sigma$ and an operator set $O=\{ \circ, |, +, * \}$ are defined recursively as follows:
\begin{itemize}
\itemsep0em
\item $a$ is a regular expression, for any $a \in \Sigma$;
\item if $R$ and $S$ are regular expressions then so are $[R]|[S]$, $[R]\circ[S]$, $[R]^+$ and $[R]^*$.
\end{itemize}

For the sake of simplicity,  in the rest of this paper we will typically omit the concatenation operator $\circ$, and also omit some of the parenthesis if the expression is clear from the context. 

A regular expression $p$ determines a language $L(p)$ over $\Sigma$. Specifically, for any regular expressions $R$, $S$ and any $a \in \Sigma$, we have: $L(a)=\{a\}$; $L(R|S) =L(R) \cup L(S)$; $L(R \circ S)=\{uv: u \in L(R), v \in L(S)\}$; $L(R^+) = \cup_{i \ge 1} \circ_{j=1}^i L(R)$; and $L(R^*)=L(R^+) \cup \{ \epsilon \}$, where $\epsilon$ denotes the empty word. 

Regular expressions can be viewed as rooted labeled trees, with internal nodes labeled with operators from  $O$ and leaves labeled with symbols from $\Sigma$. Note that the number of children of an internal node is not fixed, and can range between $1$ (for $+$ and $*$) and $m$. We say that a regular expression is {\em homogeneous} if all internal node labels at the same tree level are equal. Note that this definition does not preclude expressions such as $a a^+$ where not all leaves have the same depth.
A homogeneous formula of depth $k$ has {\em type} $o_1 o_2 \ldots o_k$, $o_i \in O$, if for all levels $i$ the operators at level $i$ are equal to $o_i$. For example, $a a^+$ has type ``$\circ +$". 


To state our results, it is convenient to identify depth-3 homogeneous regular expressions that are equivalent to some depth-2 regular expressions.  
In particular, in all types starting from $+$, the operator $+$ can be removed, since $p^+$ occurs in the text $t$ if and only if $p$ occurs in $t$. Similarly, in all types starting from $| +$, the operator $+$ can be removed, for the same reasons. \\
\paragraph{Notation} For symbol (or sequence) $z$ and integer $i$, $z^i$ denotes symbol (or sequence) $z$ repeated $i$ times. For an integer $n$, $[n]$ denotes $\{1,2,3,\ldots n\}$. Given an integer $d$, $0_d$ ($1_d$, resp.) denotes the vector with all entries equal to $0$ ($1$, resp.) in $d$ dimensions. For sequences $s_1,s_2,\ldots,s_k$, we write $\bigcirc_{i=1}^k s_i$ to denote the concatenation $s_1\,s_2\,\ldots\,s_k$ of the sequences.\\
\paragraph{Simplifying assumptions} To simplify our proofs, we will make several simplifying assumptions about the Orthogonal Vectors instance (different assumptions are used in different proofs). In what follows, we describe what kind of assumptions we make, and how to satisfy them:
\begin{enumerate}
\itemsep0em
\item The number of vectors $M$ in set $A$ is odd or even (depending on the proof):  this can be achieved w.l.o.g.  since we can add a vector to set $A$ consisting only of $1$s.  
\item The dimensionality $d$ of vectors from sets $A$ and $B$ is odd or even (depending on the proof): this can be achieved w.l.o.g. since we can add new coordinate to every vector and set this coordinate to $0$. 
\item Let $A=\{a^1, \ldots, a^M\}$ and $B=\{b^1, \ldots, b^N\}$. If there are $i,j$ such that $a^i\cdot b^j = 0$ then there are $i',j'$ such that $a^{i'}\cdot b^{j'} = 0$ and $\mod{i'}{j'}{2}$: this can be assumed w.l.o.g. since we can define  a set $A'=\{a^M,a^1,a^2,a^3,\ldots,a^{M-1}\}$ and perform two reductions, one on the pair of sets $A$ and $B$ and another one on the pair of sets $A'$ and $B$. If  a pair of orthogonal vectors exists, one of these reductions will detect it. 
\item The dimensionality $d$ is greater than $100$: we can assume this since otherwise Orthogonal Vectors problem can be solved in $O(2^d\cdot N)=O(N)$ time. 
\item  $b^t_1=b^t_d=0$ for all $t \in [N]$ and $d$ is odd: first, we make $d$ odd as described above. Then we add two entries for every vector from $A$ or $B$, one at the beginning and one at the end, and set both entries to  $0$. 
\item The first vector $a^1$ from the set $A$ is not orthogonal to all vectors from $B$:  first, we can detect whether this is the case in $O(dN)$ time. If $a^1$ is orthogonal to a vector from $B$, we have found a pair of orthogonal vectors. Otherwise we proceed with the reduction.
\end{enumerate}

	
	\section{Reductions for the Pattern Matching problem}

We start this section by showing hardness for regular expressions of type ``$|{\circ}|$'' and ``$|{\circ}+$''. These hardness proofs are quite simple, and will help us introduce notation used in more complex reductions presented later. 
\iftoggle{cr}{After that}{Next} we present the hardness proof for regular expressions of type ``${\circ} *$''. 
\iftoggle{cr}{}{After that we present hardness proofs for the other cases.}

\subsection{Hardness for type ``$|{{\circ}}|$''}
\label{oco}
	\begin{theorem}
		Given sets $A=\{a^1,\ldots,a^M\} \subseteq\{0,1\}^d$ and  $B=\{b^1,\ldots,b^N\} \subseteq\{0,1\}^d$, we can construct a regular expression $p$  and a sequence of symbols $t$, in $O(Nd)$ time, such that  a substring of $t$ can be derived from $p$ if and only if there are $a \in A$ and $b \in B$ such that $a \cdot b=0$.
		Furthermore, $p$ has type  ``$|{\circ}|$'', $|p|\leq O(Md)$ and $|t|\leq O(Nd)$.
	\end{theorem}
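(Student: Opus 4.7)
The plan is to instantiate the construction sketched in the ``Our techniques'' paragraph for type ``$|{\circ}|$''. I use a fresh separator symbol $\#\notin\{0,1\}$ and define
\[
t \;=\; b^1 \,\#\, b^2 \,\#\, \cdots \,\#\, b^N,
\]
which has length $Nd+N-1=O(Nd)$. For the pattern, for each $a^i\in A$ I build a depth-$1$ ``concatenation of ORs'' gadget
\[
P_i \;=\; \bigcirc_{k=1}^{d} c^i_k, \quad \text{where } c^i_k \;=\; \begin{cases} 0 & \text{if } a^i_k=1,\\ [0|1] & \text{if } a^i_k=0,\end{cases}
\]
and take $p \;=\; P_1\,|\,P_2\,|\,\cdots\,|\,P_M$. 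By construction $p$ has type ``$|{\circ}|$'', $|p|=O(Md)$ and the whole construction runs in $O((M+N)d)$ time.

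First I verify the forward direction: if $a^i\cdot b^j=0$ then $P_i$ derives $b^j$ symbol by symbol. Indeed, for every coordinate $k$, either $a^i_k=0$ (so $c^i_k=[0|1]$ matches $b^j_k\in\{0,1\}$), or $a^i_k=1$ (so by orthogonality $b^j_k=0$, and $c^i_k=0$ matches $b^j_k=0$). Hence $P_i$, and therefore $p$, matches the substring $b^j$ of $t$.

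For the converse, suppose $p$ derives some substring $s$ of $t$. Since $p=P_1|\cdots|P_M$, some $P_i$ derives $s$, and because every leaf of $P_i$ is either the literal $0$ or the OR $[0|1]$, the substring $s$ is a string of length exactly $d$ over $\{0,1\}$. In particular $s$ contains no occurrence of the separator $\#$, so $s$ lies entirely inside one of the blocks $b^j$; since $|s|=|b^j|=d$, we must have $s=b^j$ for some $j$. Finally, for each $k$ with $a^i_k=1$ the leaf $c^i_k$ is the literal $0$, which forces $b^j_k=0$; hence $a^i\cdot b^j=0$, completing the equivalence.

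The only real subtlety — and the single step I would double-check — is the ``no straddling'' argument: one must rule out $P_i$ matching a length-$d$ window that crosses a separator. This is precisely why the separator symbol $\#$ is chosen outside the alphabet $\{0,1\}$ appearing in the leaves of $p$: any derivation yields a string over $\{0,1\}$, so a window containing $\#$ cannot be derived. Everything else is bookkeeping on sizes: $|p|=O(Md)$, $|t|=O(Nd)$, construction time $O((M+N)d)$, which matches the claimed bounds.
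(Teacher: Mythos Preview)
Your proof is correct and essentially identical to the paper's: the paper uses the same coordinate gadgets ($[0|1]$ when the $A$-entry is $0$, the literal $0$ when it is $1$), the same OR-of-concatenations pattern, and the same text built by concatenating the $b^j$'s with a fresh separator (the paper uses the symbol $2$ where you use $\#$). The forward and converse arguments are the same, including the ``no straddling'' step that hinges on the separator being outside $\{0,1\}$; your stated construction time $O((M+N)d)$ is the honest bound and agrees with the paper's $O(Nd)$ under the standing assumption $M\le N$.
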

	\begin{proof}
		First, we will construct our pattern $p$.		
		For an integer $v \in \{0,1\}$, we construct the following pattern coordinate gadget
		$$
			CG(v):=
				\begin{cases}
					[0|1]	& \text{if }v=0;\\
					[0]	& \text{if }v=1.\\
				\end{cases}
		$$
		For a vector $a \in \{0,1\}^d$, we define a pattern vector gadget
		$$
			VG(a):=CG(a_1)\,CG(a_2)\,CG(a_3)\, \ldots \,CG(a_d).
		$$
		Our pattern $p$ is then defined as ``$|$'' of all pattern vector gadgets:
		$$ p:=VG\left(a^1\right)\,|\,VG\left(a^2\right)\,|\,VG\left(a^3\right)\,|\,VG\left(a^4\right)\,|\,
				\ldots \,|\,VG\left(a^{M-1}\right)\,|\,VG\left(a^M\right).$$

		Now we construct the text $t$. First, for a vector $b \in \{0,1\}^d$, we define text vector gadget as concatenation of all entries of $b$:
		$
			VG'(b):=b_1\,b_2\,b_3\,b_4\,\ldots\,b_d.
		$
		Note that we can derive $VG'(b)$ from $VG(a)$ if and only if $a \cdot b=0$.
		Our text $t$ is defined as a concatenation of all text vector gadgets with a symbol $2$ in between any two neighbouring vector gadgets:
			$$t:=VG'\left(b^1\right)\,2\,VG'\left(b^2\right)\,2\,VG'\left(b^3\right)\,2\,VG'\left(b^4\right)\,2\,
			   \ldots \,2\,VG'\left(b^{N-1}\right)\,2\,VG'\left(b^N\right).$$

		We need to show that we can derive a substring of $t$ from $p$ if and only if there are two orthogonal vectors in $A$ and $B$. This follows from lemmas \ref{ort4} and \ref{nort4} below.
	\end{proof}

	\begin{lemma} \label{ort4}
		If there are two vectors $a \in A$ and $b \in B$ that are orthogonal, then a substring of $t$ can be derived from $p$.
	\end{lemma}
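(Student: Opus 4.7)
The plan is to exhibit an explicit derivation. Suppose $a = a^i$ and $b = b^j$ satisfy $a \cdot b = 0$. I would first observe that $VG'(b^j)$ appears as a contiguous substring of $t$ by construction, so it suffices to show that $VG'(b^j)$ lies in the language $L(VG(a^i))$, since $VG(a^i)$ is one of the alternatives offered by the top-level ``$|$'' in $p$.

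Because $VG(a^i)$ and $VG'(b^j)$ are both concatenations of $d$ pieces, one per coordinate, I would reduce the claim to the coordinate-wise statement: for every $k \in [d]$, the single symbol $b^j_k$ can be derived from the pattern coordinate gadget $CG(a^i_k)$. Here I would split into two cases using orthogonality. If $a^i_k = 0$, then $CG(a^i_k) = [0|1]$, which derives either of the two possible values $b^j_k \in \{0,1\}$. If $a^i_k = 1$, then orthogonality $a^i \cdot b^j = 0$ forces $b^j_k = 0$, and $CG(a^i_k) = [0]$ derives exactly the symbol $0$. In both cases the derivation succeeds.

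Concatenating the coordinate-wise derivations yields $VG'(b^j) \in L(VG(a^i)) \subseteq L(p)$, and since $VG'(b^j)$ is a substring of $t$, the lemma follows. There is no real obstacle here; the argument is essentially a direct unfolding of the gadget definitions, and the only thing to be careful about is correctly using orthogonality in the case $a^i_k = 1$ to conclude $b^j_k = 0$.
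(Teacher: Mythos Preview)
Your proposal is correct and follows essentially the same approach as the paper: pick the alternative $VG(a^i)$ from $p$ and derive the substring $VG'(b^j)$ of $t$, using orthogonality to handle each coordinate. The paper's proof simply asserts that this transformation ``is possible because of the orthogonality and the construction of the vector gadgets,'' whereas you explicitly carry out the coordinate-wise case analysis; this added detail is fine and does not deviate from the paper's argument.
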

	\begin{proof}
		Suppose that $a^i \cdot b^j=0$ for some $i \in [M]$, $j \in [N]$. We choose a pattern vector gadget $VG(a^i)$ from the pattern $p$ and transform it into a text vector gadget $VG'(b^j)$. This is possible because of the orthogonality and the construction of the vector gadgets.
	\end{proof}

	\begin{lemma} \label{nort4}
		If a substring of $t$ can be derived from $p$, then there are two orthogonal vectors.
	\end{lemma}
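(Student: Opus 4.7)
The plan is to reverse-engineer the derivation. First, since $p$ is a top-level OR of the pattern vector gadgets, any string derivable from $p$ must be derivable from some single $VG(a^i)$. Next, observe that every string derivable from $VG(a^i) = CG(a^i_1)\cdots CG(a^i_d)$ has length exactly $d$ and uses only the symbols $0$ and $1$, since each coordinate gadget $CG(v)$ produces a single symbol from $\{0,1\}$.

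Now I would examine how such a string can sit inside $t$. The text is the concatenation of the blocks $VG'(b^j)$, each of length $d$, separated by the symbol $2$. Since our derived substring has length exactly $d$ and contains no $2$, it cannot straddle any separator; therefore it must lie entirely within one block $VG'(b^j)$, and by length equality it must coincide with the whole of $VG'(b^j) = b^j_1 b^j_2 \cdots b^j_d$.

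Finally, I would compare the derivation coordinate by coordinate. For each $k \in [d]$, the symbol $b^j_k$ is produced from $CG(a^i_k)$. If $a^i_k = 1$ then $CG(a^i_k) = [0]$, which can only produce $0$, forcing $b^j_k = 0$. Hence for every coordinate $k$ we have $a^i_k \cdot b^j_k = 0$, so $a^i \cdot b^j = \sum_k a^i_k b^j_k = 0$, exhibiting an orthogonal pair $(a^i, b^j) \in A \times B$.

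There is no real obstacle here; the only thing worth stating carefully is the alignment argument in the second step, which relies on two facts baked into the construction: the separator symbol $2$ does not appear in any $VG(a^i)$ or $VG'(b^j)$, and all $VG'(b^j)$ have uniform length $d$. Both are immediate from the definitions, so the lemma follows in a handful of lines.
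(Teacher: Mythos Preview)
Your proposal is correct and follows essentially the same approach as the paper's proof: both argue that a derivation must come from a single $VG(a^i)$, that the resulting length-$d$ binary string cannot cross a separator $2$ and hence equals some $VG'(b^j)$, and then conclude $a^i\cdot b^j=0$ from the coordinate gadgets. You have simply spelled out the alignment and the coordinate-by-coordinate comparison more explicitly than the paper's terse version.
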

	\begin{proof}
		By the construction of pattern $p$, we have to choose one pattern vector gadget, say, $VG(a^i)$, that is transformed into a binary substring of $t$ of length $d$. The text $t$ has the property that it is a concatenation of binary strings of length $d$ separated by symbols $2$. This means that $VG(a^i)$ will be transformed into binary string $VG'(b^j)$ for some $j$. This implies that $a^i \cdot b^j=0$ by the construction of the vector gadgets.
	\end{proof}

\subsection{Hardness for type ``$|{\circ}+$''}
\label{ocp}
	\begin{theorem}
		Given sets $A=\{a^1,\ldots,a^M\} \subseteq\{0,1\}^d$ and  $B=\{b^1,\ldots,b^N\} \subseteq\{0,1\}^d$, we can construct a regular expression $p$  and a sequence of symbols $t$, in $O(Nd)$ time, such that  a substring of $t$ can be derived from $p$ iff there are $a \in A$ and $b \in B$ such that $a \cdot b=0$.
		Furthermore, $p$ has type ``$|{\circ}+$'', $|p|\leq O(Md)$ and $|t|\leq O(Nd)$.
	\end{theorem}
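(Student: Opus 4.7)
The plan is to adapt the reduction from the ``$|{\circ}|$'' case by replacing the inner ``$|$'' with a length-encoding trick that lets ``$+$'' simulate disjunction. The key observation is that $\sigma^+$ matches any positive number of copies of $\sigma$: if the text encodes each bit by the number of repetitions of a per-coordinate symbol, then $\sigma^+$ is ``flexible'' (matches either bit value) while the bare literal $\sigma$ is ``rigid'' (matches only the short encoding, which corresponds to bit $0$).

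Concretely, I would first pad each vector with a $0$ at both ends, obtaining $\bar a^i, \bar b^j \in \{0,1\}^{d+2}$ with $\bar a^i \cdot \bar b^j = a^i \cdot b^j$. Introduce $d+2$ distinct symbols $\sigma_1, \ldots, \sigma_{d+2}$. Define the text coordinate gadget $CG'(v,k) := \sigma_k$ if $v=0$ and $CG'(v,k) := \sigma_k \sigma_k$ if $v=1$; set $VG'(\bar b) := \bigcirc_{k=1}^{d+2} CG'(\bar b_k, k)$ and $t := VG'(\bar b^1)\,VG'(\bar b^2)\,\cdots\,VG'(\bar b^N)$ (no separator is needed because consecutive text vector gadgets meet at the distinct symbols $\sigma_{d+2}$ and $\sigma_1$). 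Dually, define the pattern coordinate gadget $CG(v,k) := \sigma_k^+$ if $v=0$ and $CG(v,k) := \sigma_k$ if $v=1$; set $VG(\bar a) := \bigcirc_{k=1}^{d+2} CG(\bar a_k, k)$ and $p := VG(\bar a^1)\,|\,\cdots\,|\,VG(\bar a^M)$. This $p$ has type ``$|{\circ}+$'', with $|p|=O(Md)$ and $|t|=O(Nd)$.

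For the forward direction, given orthogonal $a^i$ and $b^j$, I would derive $VG'(\bar b^j)$ from $VG(\bar a^i)$ coordinate-by-coordinate, as in the previous theorem: $\sigma_k^+$ matches any positive number of $\sigma_k$'s, and the literal $\sigma_k$ matches the singleton encoding whenever $\bar b^j_k = 0$, which is exactly what orthogonality guarantees when $\bar a^i_k = 1$. For the reverse direction, I would first argue that the matched substring lies entirely inside a single $VG'(\bar b^j)$: since the pattern begins with $\sigma_1$ and $\sigma_1$ appears only at the very start of each text vector gadget, while the $d+2$ distinct symbols of the pattern must appear as blocks in order, the substring is pinned to one cycle of symbols. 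The core combinatorial step is then an alignment argument: for each middle coordinate $2 \le k \le d+1$, the derived substring must consume the \emph{entire} $\sigma_k$-block of $VG'(\bar b^j)$ (because its immediate pattern neighbors use the distinct symbols $\sigma_{k-1}$ and $\sigma_{k+1}$), forcing the derived count to equal exactly $1 + \bar b^j_k$. When $\bar a^i_k = 1$, the pattern forces this count to be $1$, so $\bar b^j_k = 0$; this proves $\bar a^i \cdot \bar b^j = 0$ on every middle coordinate, which are precisely the original $d$ coordinates, and thus $a^i \cdot b^j = 0$.

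The main obstacle I anticipate is the boundary effect at $k = 1$ and $k = d+2$: the matched substring may start or end inside a $\sigma_k$-block, so the derived count is only required to satisfy $c_k \le 1 + \bar b^j_k$ rather than equality, and no orthogonality is enforced at these two positions. Padding with zeros at both ends places all original coordinates strictly in the interior where the alignment argument is tight, and since the padded entries are zero in every $\bar a^i$, they contribute nothing to the inner product. Hence $\bar a^i \cdot \bar b^j = a^i \cdot b^j$, and the reduction is valid.
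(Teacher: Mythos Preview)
Your reduction is correct and complete; the type is right, the sizes are right, and your boundary analysis via padding is sound. The overall strategy matches the paper's (an OR of per-vector patterns, with each coordinate encoded by the length of a run), but the implementation is genuinely different in two respects.

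First, the paper keeps the alphabet constant (three symbols $x,y,z$): all coordinates share the symbol $x$, a separator $y$ sits between coordinates, and $z$ sits between text vector gadgets. You instead introduce $d+2$ distinct symbols $\sigma_k$, which makes the alignment argument automatic at the cost of a non-constant alphabet.

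Second, and more interesting, the paper encodes a pattern-$1$ by $x^+x^+$ (``at least two'') against a text-$0$ encoded as $xx$ and text-$1$ as $x$. Because the constraint is a \emph{lower} bound on the run length, a match that starts or ends strictly inside the first or last $x$-block can only make things harder for the pattern, so no boundary slack arises and no padding is needed. Your encoding puts the ``exact'' constraint on the pattern side ($CG(1,k)=\sigma_k$), which is why you must pad with zeros at both ends to neutralize the boundary coordinates. Both devices work; the paper's buys a constant alphabet and avoids the padding step, while yours makes the coordinate-by-coordinate alignment more transparent.
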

	\begin{proof}
		First, we will construct our pattern.		
		For an integer $v \in \{0,1\}$, we construct the following pattern coordinate gadget
		$$
			CG(v):=
				\begin{cases}
					x^+		& \text{if }v=0;\\
					x^+x^+		& \text{if }v=1.
				\end{cases}
		$$
		For a vector $a \in \{0,1\}^d$, we define a pattern vector gadget as concatenation of coordinate gadgets for all coordinates with the symbol $y$ in between every two neighbouring coordinate gadgets:
		$$
			VG(a):=CG(a_1)\,y\,CG(a_2)\,y\,CG(a_3)\,y\, \ldots \,y\,CG(a_d).
		$$
		Our pattern $p$ is then defined as an OR (``$|$'') of all the pattern vector gadgets:
		$$ p:=VG\left(a^1\right)\,|\,VG\left(a^2\right)\,|\,VG\left(a^3\right)\,|\,VG\left(a^4\right)\,|\,
				 \ldots \,|\,VG\left(a^{M-1}\right)\,|\,VG\left(a^M\right).$$

		Now we proceed with the construction of our text $t$.
		For an integer $v \in \{0,1\}$, we define the following text coordinate gadget
		$$
			CG'(v):=
				\begin{cases}
					xx		& \text{if }v=0;\\
					x		& \text{if }v=1.
				\end{cases}
		$$
		For vector $b \in \{0,1\}^d$, we define the text vector gadget as
		$$ VG'(b):=CG'(b_1,1)\,y\,CG'(b_2,2)\,y\,CG'(b_3,3)\,y\,
					\ldots \,y\,CG'(b_d,d).$$
		Note that we can derive $VG'(b)$ from $VG(a)$ iff $a \cdot b=0$.
		Our text $t$ is defined as a concatenation of all text vector gadgets with the symbol $z$ in between any two neighbouring vector gadgets:
		$$t:=VG'(b^1)\,z\,VG'(b^2)\,z\,VG'(b^3)\,z\,VG'(b^4)\,z\,
				\ldots \,z\,VG'(b^{N-1})\,z\,VG'(b^N).$$

		We need to show that we can derive a substring of $t$ from $p$ iff there are two orthogonal vectors. This follows from lemmas \ref{ort5} and \ref{nort5} below. 
	\end{proof}

	\begin{lemma} \label{ort5}
		If there are two vectors $a \in A$ and $b \in B$ that are orthogonal, then a substring of $t$ can be derived from $p$.
	\end{lemma}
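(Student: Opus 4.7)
The plan is to exhibit an explicit derivation. Since $p$ is an OR over all pattern vector gadgets, it suffices to pick one disjunct $VG(a^i)$ and show that it derives some substring of $t$. Assume $a^i \cdot b^j = 0$ for some $i \in [M]$, $j \in [N]$; I will choose the disjunct $VG(a^i)$ and show it can be derived to exactly the text vector gadget $VG'(b^j)$, which is a substring of $t$ by the construction of $t$ (it appears between two separator symbols $z$, or at an endpoint).

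Because $VG(a^i)$ and $VG'(b^j)$ share the same ``skeleton'' of $d-1$ interleaved $y$ separators, the problem reduces coordinate-by-coordinate to showing that $CG(a^i_k)$ derives $CG'(b^j_k)$ for every $k \in [d]$. Orthogonality says that the pair $(a^i_k, b^j_k) = (1,1)$ never occurs, so I only need to handle the three remaining cases: (i) if $(a^i_k,b^j_k)=(0,0)$, then $CG(a^i_k)=x^+$ can match $CG'(b^j_k)=xx$ by choosing the Kleene plus to produce two copies of $x$; (ii) if $(a^i_k,b^j_k)=(0,1)$, then $x^+$ matches the single $x$ in $CG'(b^j_k)$ by producing one copy; (iii) if $(a^i_k,b^j_k)=(1,0)$, then $CG(a^i_k)=x^+x^+$ matches $CG'(b^j_k)=xx$ by letting each $x^+$ produce exactly one $x$.

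Concatenating these derivations across all coordinates, with the literal $y$ separators matching themselves, yields $VG(a^i) \Rightarrow^* VG'(b^j)$. Since $VG'(b^j)$ is a substring of $t$, and $VG(a^i)$ is a disjunct of $p$, we conclude that some substring of $t$ is derivable from $p$, as required. There is essentially no obstacle here: the construction of the coordinate gadgets was tailored precisely so that the three allowed orthogonality cases all work via the flexibility of $x^+$, and the only potential worry — the $(1,1)$ case in which $x^+x^+$ would need to match a single $x$ — is exactly what orthogonality rules out.
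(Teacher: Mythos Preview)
Your proof is correct and follows the same approach as the paper's own proof; you simply spell out in detail the coordinate-wise case analysis that the paper summarizes in a single sentence (``This is possible because of the orthogonality and the construction of the vector gadgets'').
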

	\begin{proof}
		Suppose that $a^i \cdot b^j=0$ for some $i \in [M]$, $j \in [N]$. We choose a pattern vector gadget $VG(a^i)$ from the pattern $p$ and transform it into a text vector gadget $VG'(b^j)$. This is possible because of the orthogonality and the construction of the vector gadgets.
	\end{proof}

	\begin{lemma} \label{nort5}
		If a substring of $t$ can be derived from $p$, then there are two orthogonal vectors.
	\end{lemma}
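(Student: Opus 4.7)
The plan is to argue that any substring of $t$ derivable from $p$ must be (essentially) a text vector gadget $VG'(b^j)$ that is derived from a single pattern vector gadget $VG(a^i)$, and that the alignment forces $a^i \cdot b^j = 0$. First, since every symbol in $p$ is either $x$ or $y$, the derived substring contains no $z$, so it lies entirely within a single text vector gadget $VG'(b^j)$. Moreover, since $p$ is an OR of pattern vector gadgets, any derivation selects exactly one $VG(a^i)$ to use.

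Next I would perform a $y$-counting argument. The pattern $VG(a^i)$ produces exactly $d-1$ occurrences of $y$: the coordinate gadgets $CG(a^i_k)$ generate only $x$'s, so the only $y$-symbols come from the explicit separators between the $d$ coordinate gadgets. The text gadget $VG'(b^j)$ also contains exactly $d-1$ $y$-symbols. Hence the matched substring must include all $d-1$ $y$-symbols of $VG'(b^j)$, which means it begins somewhere inside the first coordinate block $CG'(b^j_1)$ and ends somewhere inside the last block $CG'(b^j_d)$, while completely containing every middle block $CG'(b^j_k)$ for $2\leq k\leq d-1$.

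Finally, I would compare $CG(a^i_k)$ and $CG'(b^j_k)$ coordinate by coordinate. The block of $x$'s that $CG(a^i_k)$ derives lies inside (a subword of) $CG'(b^j_k)$, so it has length at most $|CG'(b^j_k)|$, which is $2$ if $b^j_k=0$ and $1$ if $b^j_k=1$. If $a^i_k=1$ then $CG(a^i_k)=x^+x^+$ requires at least two $x$'s, which forces $|CG'(b^j_k)|\geq 2$ and hence $b^j_k=0$; if $a^i_k=0$ then $CG(a^i_k)=x^+$ is always satisfiable. Therefore $a^i_k\cdot b^j_k=0$ for every $k$, giving $a^i\cdot b^j=0$. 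The only mildly subtle point is the treatment of the first and last coordinates, where the matched run of $x$'s could be a proper subword of $CG'(b^j_1)$ or $CG'(b^j_d)$; but since the constraint there is still "the matched $x$-run is at most $|CG'(b^j_k)|$ long", the same case analysis applies, so this is not a real obstacle.
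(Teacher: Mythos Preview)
Your proof is correct and follows essentially the same approach as the paper: both observe that the derived substring contains no $z$ and has exactly $d-1$ occurrences of $y$, hence must align with a single text gadget $VG'(b^j)$, after which the coordinate-by-coordinate comparison forces $a^i\cdot b^j=0$. Your treatment of the first and last coordinates is in fact more careful than the paper's own proof, which simply asserts that the matched ``nice'' substring equals $VG'(b^j)$ exactly without discussing the possibility of a proper subword at the endpoints.
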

	\begin{proof}
		We call a sequence of symbols \emph{nice} iff it can be derived from the regular expression $x^+yx^+yx^+y\ldots yx^+$, where ``$x^+$'' appears $d$ times.

		By the construction of pattern $p$, we have to choose one pattern vector gadget, say, $VG(a^i)$, that is transformed into a nice sequence. The text $t$ has the property that it is a concatenation of nice sequences separated by symbols $z$. This means that $VG(a^i)$ will be transformed into a sequence $VG'(b^j)$ for some $j$. This implies that $a^i \cdot b^j=0$ by the construction of vector gadgets.
	\end{proof}

\subsection{Hardness for type ``${\circ} *$''}
\label{cs}
	\begin{theorem} \label{cs_pm}
		Given sets $A=\{a^1,\ldots,a^M\} \subseteq\{0,1\}^d$ and $B=\{b^1,\ldots,b^N\} \subseteq\{0,1\}^d$  with $M\leq N$, we can construct the regular expression $p$ and  the text $t$ in time $O(Nd)$, such that
		 a substring of $t$ can be derived from $p$ iff there are $a \in A$ and $b \in B$ such that $a \cdot b=0$.
		Furthermore,  $p$ is of type ``${\circ} *$", $|p|\leq O(Md)$ and $|t|\leq O(Nd)$.
	\end{theorem}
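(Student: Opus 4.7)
The plan is to reduce from the unbalanced Orthogonal Vectors Problem, in the same overall spirit as the reductions in Sections~\ref{oco} and~\ref{ocp}. For each $a \in A$ I would construct a depth-$2$ ``${\circ}*$'' pattern vector gadget $VG(a)$, for each $b \in B$ a text vector gadget $VG'(b)$, and then assemble $p$ from the pattern vector gadgets and $t$ from the text vector gadgets (interspersed with carefully chosen separator and filler symbols). The coordinate-level encoding would be analogous in flavor to the ``$|{\circ}+$'' construction above: an entry $a_k = 0$ is encoded by a flexible subpattern built from $s^*$ (which may consume any number of symbols, including zero), while $a_k = 1$ is encoded by a rigid fixed-length subpattern. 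Matching text coordinate gadgets have length $1$ when $b_k=0$ and length $2$ when $b_k=1$ (separated by a distinguished marker symbol), so that a pattern vector gadget can derive a particular text vector gadget iff the corresponding vectors satisfy $a \cdot b = 0$.

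Unlike the ``$|{\circ}+$'' case, the outer operator is concatenation rather than OR, so we cannot simply ``select'' which $a^i$ to align against the text: the full concatenation $\bigcirc_{i=1}^{M} VG(a^i)$ must be derived from the chosen substring. To get around this, I would design both $p$ and $t$ so that most of the $VG(a^i)$'s can be absorbed into long runs of ``filler'' symbols in $t$ using their $s^*$ subexpressions, while the length and separator bookkeeping forces at least one $VG(a^{i^*})$ to align precisely onto some text vector gadget $VG'(b^{j^*})$. Concretely, $t$ would alternate blocks of filler symbols with the $VG'(b^j)$'s (bracketed by distinguished markers), and the total matched length of $p$ would be tuned so that any successful alignment covers exactly one full text vector gadget, with all other pattern vector gadgets collapsed onto filler.

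For the completeness direction ($\exists$ orthogonal pair $\Rightarrow$ match), given $a^{i^*} \cdot b^{j^*} = 0$ one exhibits the explicit alignment in which $VG(a^{i^*})$ derives $VG'(b^{j^*})$ using the coordinate-level matching property, while every other $VG(a^i)$ derives an appropriate filler substring using its $s^*$'s. The more delicate soundness direction (match $\Rightarrow$ $\exists$ orthogonal) is what I expect to be the main obstacle. Because $s^*$ may derive the empty string, one must rule out degenerate alignments in which pattern vector gadgets shrink or drift so that the orthogonality test is skipped entirely; this is precisely what makes ``${\circ}*$'' hard while ``${\circ}+$'' remains in near-linear time. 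I would control this by using the distinguished separator symbols and a length-parity argument: the markers inside the text vector gadgets prevent arbitrary shifts, and the choice of filler lengths forces the bookkeeping to balance only when some $VG(a^{i^*})$ lies exactly on top of some $VG'(b^{j^*})$, at which point the coordinate-level encoding guarantees $a^{i^*} \cdot b^{j^*} = 0$. As in Sections~\ref{oco} and~\ref{ocp}, the equivalence would then be stated as two short lemmas, one for each direction.
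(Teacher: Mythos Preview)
Your high-level framework is right (OVP reduction, coordinate gadgets encoding orthogonality, completeness/soundness split), and you correctly identify that the soundness direction is where all the difficulty lies. But the architecture you sketch for assembling $p$ and $t$ has a genuine gap.

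You propose that ``most of the $VG(a^i)$'s can be absorbed into long runs of filler symbols'' while exactly one aligns to some $VG'(b^{j^*})$. This does not work as stated. If $VG(a^i)$ contains any mandatory symbols---your ``rigid fixed-length'' gadgets for $a_k=1$, or the ``distinguished marker symbol'' between coordinates---then it cannot collapse onto amorphous filler; the filler would have to reproduce exactly the marker/rigid pattern, at which point it is indistinguishable from a text vector gadget and the ``exactly one alignment'' bookkeeping breaks. Conversely, if $VG(a^i)$ were built only from $s^*$ factors so that it \emph{could} collapse, it could no longer enforce orthogonality when it does align to a $VG'(b)$. Tuning ``the total matched length'' does not rescue this, because every factor of $p$ has a star and the length of any derivation is essentially unconstrained.

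The paper resolves this with an extra structural ingredient you are missing. Between consecutive $VG(a^i)$'s it inserts an auxiliary ``absorbing'' gadget $VG_0 = (y^*x^*)^{d+10}y^*$, and the text alternates a dummy block $VG_0'$ with each $VG'(b^j,j)$, all separated by rigid walls $x^{10}$ (the alphabet is just $\{x,y\}$; coordinate positions are distinguished by alternating $x$/$y$ rather than by a third marker symbol). In the default alignment, every $VG(a^i)$ lands on a $VG_0'$ and every $VG_0$ lands on a single $VG'(b^j,j)$. An orthogonal pair lets one $VG(a^k)$ land on $VG'(b^r,r)$ instead, and the adjacent $VG_0$'s stretch to cover two text blocks each, producing a net one-block shift. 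That shift is detected by $y^6$ bookends on $p$ together with a parity-dependent definition of $VG'(b,j)$ (the $y$-runs have length $6$ for even $j$ and $3$ for odd $j$), so the bookends can anchor only if a shift occurred. Soundness is then a case analysis on whether some $VG_0$ is ``nice'' (spans exactly one text block) or not; a non-nice $VG_0$ forces the neighboring $VG(a^{i})$ onto a genuine $VG'(b^{j})$, yielding orthogonality.

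So the missing idea is not the parity trick per se (you gesture at that), but the interleaved absorber gadgets $VG_0$ and the ``shift by one block'' mechanism they enable. Without them, there is no way to reconcile ``each $VG(a^i)$ must be able to vanish'' with ``each $VG(a^i)$ must be able to test orthogonality.''
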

	\begin{proof}
		W.l.o.g., we can assume that $\mod{M}{1}{2}$ and $\mod{d}{1}{2}$, $d\geq 100$.
		Also, if there are $i \in [M]$, $j\in[N]$ such that $a^i \cdot b^j=0$, then there are $i' \in [M]$, $j'\in[N]$ such that $a^{i'} \cdot b^{j'}=0$ and $\mod{i'}{j'}{2}$. Furthermore, we assume $b^j_1=b^j_d=0$ for all $j \in [N]$ and that $a^1$ is not orthogonal to any vector $b^j$.

		First, we will construct our pattern.		
		For an integer $v \in \{0,1\}$ and an integer $i \in [d]$, we construct the following pattern coordinate gadget
		$$
			CG(v,i):=
				\begin{cases}
					yy^*		& \text{if }v=0\text{ and }\mod{i}{1}{2};\\
					yyyy^*		& \text{if }v=1\text{ and }\mod{i}{1}{2};\\
					xx^*		& \text{if }v=0\text{ and }\mod{i}{0}{2};\\
					xxxx^*		& \text{if }v=1\text{ and }\mod{i}{0}{2}.
				\end{cases}
		$$
		For a vector $a \in \{0,1\}^d$, we define a pattern vector gadget
		$$
			VG(a):=CG(a_1,1)\,CG(a_2,2)\,CG(a_3,3)\, \ldots \,CG(a_d,d).
		$$
		We also need another pattern vector gadget $VG_0:=(y^*\,x^*)^{d+10}\,y^*$.

		Our pattern is then defined as follows:
		$$
			p:=y^6\, \bigcirc_{j \in [M-1]}\left(x^{10}\,VG(a^j)\,x^{10}\,VG_0\right)\,x^{10}\,VG(a^M)\,x^{10}\,y^6.
		$$

		Now we proceed with the construction of our text.
		For integers $v \in \{0,1\}, i \in [d]$, we define the following text coordinate gadget
		$$
			CG'(v,i):=
				\begin{cases}
					yyy		& \text{if }v=0\text{ and }\mod{i}{1}{2};\\
					y		& \text{if }v=1\text{ and }\mod{i}{1}{2};\\
					xxx		& \text{if }v=0\text{ and }\mod{i}{0}{2};\\
					x		& \text{if }v=1\text{ and }\mod{i}{0}{2}.
				\end{cases}
		$$
		For a vector $b \in \{0,1\}^d$ and an integer $\mod{j}{1}{2}$, we define the text vector gadget as
		$$
			VG'(b,j):=CG'(b_1,1)\,CG'(b_2,2)\,CG'(b_3,3)\,\ldots \,CG'(b_d,d).
		$$
		We also define $VG'(b,j)$, when $\mod{j}{0}{2}$. In this case, $VG'(b,j)$ is equal to $VG'(b,1)$ except that we replace every occurrence of the substring $y^3$ with the substring $y^6$.
		
		One can verify that for any {\em vectors} $a,b \in \{0,1\}^d$ and any integer $i$, $VG'(b,i)$ can be derived from $VG(a)$ iff $a \cdot b=0$.

		We will also need an additional text vector gadget
		$$
			VG_0':=y^3\,(x^3\,y^3)^{(d-1)/2}.
		$$

		Our text is then defined as follows:
		$$
			t:=\bigcirc_{j=-2N}^{3N}\left(x^{10}\,VG_0'\,x^{10}\,VG'(b^j,j)\right),
		$$
		where we assume $b^j:=011111\ldots 111110$ for $j \not \in [N]$.

		We have to show that we can derive a substring of $t$ from $p$ iff there are two orthogonal vectors. This follows from lemmas \ref{ort} and \ref{nort} below.
	\end{proof}

	\begin{lemma} \label{ort}
		If there are two vectors $a \in A$ and $b \in B$ that are orthogonal, then a substring of $t$ can be derived from $p$.
	\end{lemma}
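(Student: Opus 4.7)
The plan is to exhibit an explicit derivation. By simplifying assumptions~3 and~6 (and, if needed, a companion padding analogous to assumption~6 at the other end to exclude $i_0=M$), I pick a pair $(a^{i_0},b^{j_0})$ with $a^{i_0}\cdot b^{j_0}=0$, $i_0\equiv j_0\pmod 2$, and $i_0\in\{2,\ldots,M-1\}$. Set $j_s:=j_0-i_0+1$; then both $j_s-1=j_0-i_0$ and $j_s+M=j_0-i_0+M+1$ are even (using $j_0\equiv i_0\pmod 2$ and $M$ odd). I will show that the substring of $t$ extending from the last $y^6$-block of $VG'(b^{j_s-1},j_s-1)$ through the first $y^6$-block of $VG'(b^{j_s+M},j_s+M)$ is derivable from $p$. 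Both boundary $y^6$-blocks exist because the indices are even and $b^j_1=b^j_d=0$ for all $j$; both indices lie in $[-2N,3N]$ since $M\le N$.

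The derivation uses a ``straight alignment with a local shift around $i_0$'' scheme. For every $j\in[M]\setminus\{i_0\}$, the gadget $VG(a^j)$ derives an occurrence of $VG_0'$; this is possible because every block of $VG_0'$ has length exactly $3$, satisfying the $\ge 1$ and $\ge 3$ lower bounds imposed by the coordinate gadgets inside $VG(a^j)$. The gadget $VG(a^{i_0})$ derives $VG'(b^{j_0},j_0)$, which is possible precisely by the orthogonality $a^{i_0}\cdot b^{j_0}=0$ (via the note in the construction that $VG(a)\to VG'(b,\cdot)$ iff $a\cdot b=0$). Each $VG_0$-gadget away from position $i_0$ derives the single $VG'(b^?,?)$ aligned with it, which is immediate because $VG_0=(y^*x^*)^{d+10}y^*$ can derive any alternating $\{x,y\}$-string of modest block count.

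The two $VG_0$-gadgets adjacent to $VG(a^{i_0})$ do the crucial work: they absorb the phase mismatch caused by $VG(a^{i_0})$ aligning with a $VG'$ rather than a $VG_0'$. The $VG_0$ immediately before $VG(a^{i_0})$ derives $VG'(b^{j_0-1},j_0-1)\, x^{10}\, VG_0'$, and the one immediately after derives $VG_0'\, x^{10}\, VG'(b^{j_0+1},j_0+1)$. Each target string has exactly $d+1$ maximal $y$-runs and $d$ maximal $x$-runs, both comfortably within the capacities $d+11$ and $d+10$ of $VG_0$. Finally, every $x^{10}$-literal in $p$ aligns with a matching $x^{10}$-separator of $t$, and the leading/trailing $y^6$-literals of $p$ match the two boundary $y^6$-blocks identified above.

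The only nontrivial verification is bookkeeping around the two shifted absorptions: one must check that after both shifts the alternating sequence of $VG_0'$ and $VG'(b^?,?)$ gadgets in $t$ is traversed without gap or overlap, and that the boundary parity works out so that the $y^6$-literals of $p$ land on even-indexed $VG'$-gadgets in $t$. I expect this index-and-block-count arithmetic to be the main obstacle, but it becomes a routine check once the alignment is written down in full.
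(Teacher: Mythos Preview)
Your proposal is correct and follows essentially the same scheme as the paper's own proof: align every $VG(a^j)$ with a $VG_0'$ except at the orthogonal index, where $VG(a^{i_0})\to VG'(b^{j_0},j_0)$; have the two flanking $VG_0$'s each absorb one extra $VG'\,x^{10}\,VG_0'$ (resp.\ $VG_0'\,x^{10}\,VG'$) block to realize the one-unit shift; and anchor the leading/trailing $y^6$ literals on the $y^6$-blocks that exist at the ends of even-indexed $VG'$'s. Your block-count check ($d+1$ maximal $y$-runs, $d$ maximal $x$-runs versus capacities $d+11$, $d+10$ of $VG_0$) is exactly the relevant verification.

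One remark: you add a companion assumption to exclude $i_0=M$, which the paper does not list among its simplifying assumptions. You are right to flag this: the paper's written derivation uses the $VG_0$ \emph{after} $VG(a^k)$ to absorb the post-shift block, and there is no such $VG_0$ after $VG(a^M)$. Your fix (test $a^M$ against all of $B$ in $O(dN)$ time, analogous to assumption~6) is the clean way to close this; the paper appears to leave it implicit. Otherwise your argument and the paper's coincide.
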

	\begin{proof}
		W.l.o.g. we have that, $a^k \cdot b^k=0$ for some $k \in [M]$. The proof for the case when $a^k \cdot b^r=0$, $k \in [M]$, $r \in [N]$, $\mod{k}{r}{2}$ is analogous.
		
		The pattern $p$ starts with $y^6$. We transform it into $CG'(b^0_d,d)$ appearing in $VG'(b^0,0)$. We can do this since $b^0_d=0$.

		For $j=1,2,\ldots, k-2$, we transform $x^{10}\,VG(a^j)\,x^{10}\,VG_0$ into $x^{10}\,VG_0'\,x^{10}\,VG'(b^j,j)$ by transforming $VG(a^j)$ into $VG_0'$ and $VG_0$ into $VG'(b^j,j)$.

		Next we transform 
		$$
			x^{10}\,VG(a^{k-1})\,x^{10}\,VG_0\,x^{10}\,VG(a^{k})\,x^{10}\,VG_0
		$$ 
		into 
		\begin{align*}
			& x^{10}\,VG_0'\,x^{10}\,VG'(b^{k-1},k-1)\,x^{10}\,VG_0'\,x^{10}\, \\
			& VG'(b^{k},k)\,x^{10}\,VG_0'\, x^{10}\,VG'(b^{k+1},k+1)
		\end{align*}
		Notice that we use the fact that $k\geq 2$ (we assumed that $a^1$ is not orthogonal to any vector from $B$). Note that $VG'(b^{k+1},k+1)$ appears in the text $t$ even if $k=N$. This is because in the definition of the text $t$, integer $j$ ranges from $-2N$ up to $3N$.

		We perform the transformation by performing the following steps:
		\begin{enumerate}
			\item transform $VG(a^{k-1})$ into $VG'_0$;
			\item transform $VG_0$ into $VG'(b^{k-1},k-1)\,x^{10}\,VG'_0$;
			\item transform $VG(a^k)$ into $VG'(b^k,k)$ (we can do this since $a^k\cdot b^k=0$);
			\item transform $VG_0$ into $VG_0'\,x^{10}\,VG'(b^{k+1},k+1)$.
		\end{enumerate}

		Now, for $j=k+1,\ldots,M-1$ transform $x^{10}\,VG(a^j)\,x^{10}\,VG_0$ into $x^{10}\,VG_0'\,x^{10}\,VG'(b^{j+1},j+1)$ similarly as before.
		Next, transform $x^{10}\,VG(a^M)\,x^{10}$ into $x^{10}\,VG_0'\,x^{10}$. Finally, transform $y^6$ into $CG'(b^{M+1}_1)$ appearing in $VG'(b^{M+1},M+1)$. We can do this since $b^{M+1}_1=0$.
	\end{proof}

	\begin{lemma} \label{nort}
		If a substring of $t$ can be derived from $p$, then there are two orthogonal vectors.
	\end{lemma}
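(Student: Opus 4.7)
The plan is to reverse-engineer any derivation of a substring of $t$ from $p$ and extract an orthogonal pair. The very first step is rigidity: because $t$ is a strict alternation of $y$-blocks (sizes $1, 3, 6$) and $x$-blocks (sizes $1, 3, 10$), any run of ten consecutive $x$'s in $t$ coincides with an entire $x^{10}$-block, and any run of six consecutive $y$'s coincides with an entire $y^6$-block. Consequently, each literal $x^{10}$ in $p$ matches a full $x^{10}$-block of $t$, each literal $y^6$ matches a full $y^6$-block, and the surrounding $VG_0$'s and $VG(a^j)$'s cannot produce trailing $x$'s spilling into those blocks. Let $Q_1<\cdots<Q_{2M}$ denote the indices (within $t$'s list of $x^{10}$-blocks) hit by the $2M$ literal $x^{10}$'s of $p$.

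Next I would show that $VG(a^j)$ matches exactly one gadget of $t$, either a $VG_0'$ or a $VG'(b^k,k)$. Inside $VG(a^j)$ each coordinate gadget must match a single block (its boundary is pinned at a $y/x$-transition in $t$), so $VG(a^j)$ matches $d$ consecutive alternating blocks that begin and end with $y$. It has only $(d-1)/2$ $x$-blocks, whereas each $VG_0'$ or $VG'$ already contributes $(d-1)/2$ $x$-blocks; because $d>100$, absorbing even one internal $x^{10}$ of $t$ would overshoot the $x$-block budget, so $VG(a^j)$ sits over exactly one gadget. The same block-count arithmetic applied to $VG_0=(y^*x^*)^{d+10}y^*$ shows it can absorb at most one internal $x^{10}$, so each $VG_0$ matches one gadget or one of the combinations $VG_0'\,x^{10}\,VG'$, $VG'\,x^{10}\,VG_0'$. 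Crucially, $VG(a^j)$ aligned with a $VG_0'$ (all blocks of size $3$) imposes no constraint on $a^j$, whereas alignment with $VG'(b^k,k)$ forces $a^j\cdot b^k=0$, since $a^j_i=1$ demands that the $i$-th block have size $\geq 3$, i.e.\ $b^k_i=0$.

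I would then use the $y^6$ bookends. The only $y^6$-blocks in $t$ sit inside $VG'(b^k,k)$ for \emph{even} $k$; a $y^6$ immediately followed by $x^{10}$ occurs only at the last coordinate of such a gadget (using $b^k_d=0$), and symmetrically for $x^{10}$ followed by $y^6$ at the first coordinate. Hence $Q_1$ is forced to be the first $x^{10}$ of some text block $j_0+1$ with $j_0$ even, and $Q_{2M}$ the second $x^{10}$ of some text block $j_0'$ with $j_0'$ even. Labeling each $Q_i$ by its (text block $B_i$, half $H_i\in\{1,2\}$), orthogonality is needed precisely when an odd-indexed separator (matched by some $VG(a^j)$) has $H_i=2$, because then it faces a $VG'(b^{B_i},B_i)$ rather than a $VG_0'$.

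I would close with a parity count. Each single-gadget step flips $H$, each extra step preserves $H$. Going from $H_1=1$ to $H_{2M}=2$ requires an odd number of single-steps among the $2M-1$ total, so the number of extras $e$ is even; matching the total $x^{10}$-advance yields $j_0'-j_0=M+e/2$. Since $j_0,j_0'$ are both even but $M$ is odd, $e\neq 0$, hence $e\geq 2$. There is therefore a first $VG_0$ that absorbs an extra: just after it the running parity of extras turns odd, so the next odd-indexed separator $T_{2k^*+1}$ has $H_{2k^*+1}=2$ and is matched by $VG(a^{k^*+1})$ against $VG'(b^{B_{2k^*+1}},B_{2k^*+1})$, yielding $a^{k^*+1}\cdot b^{B_{2k^*+1}}=0$. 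I expect the main obstacle to be the rigidity step---proving that neither $VG(a^j)$ nor $VG_0$ can swallow extra $x^{10}$-blocks beyond what is enumerated above---since that is precisely what the hypothesis $d>100$ is purchasing; once this rigidity is in hand, the parity argument is essentially book-keeping.
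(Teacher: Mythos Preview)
Your proposal is correct and follows essentially the same route as the paper: both arguments hinge on (i) the rigidity that each literal $x^{10}$ in $p$ must occupy a full $x^{10}$-block of $t$, (ii) the $y$-run count showing $VG_0$ can cover at most two text gadgets (this is exactly where $d>100$ enters, via $3(d+1)/2>d+11$), and (iii) the $y^6$ bookend parity forcing at least one $VG_0$ to take an ``extra'' step, which then pins the adjacent $VG(a^j)$ against some $VG'(b^k,k)$ and yields $a^j\cdot b^k=0$.

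The only difference is organizational. The paper first assumes every $VG(a^i)$ lands on a $VG_0'$ and then splits into Case~1 (some $VG_0$ not ``nice'', which forces the next $VG(a^{i'+1})$ onto a $VG'$) and Case~2 (all nice, which contradicts the trailing $y^6$ since $j'+M$ is odd). You instead track the half-index $H_i$ directly and do a single parity computation, deriving $j_0'-j_0=M+e/2$ with $j_0,j_0'$ even and $M$ odd to get $e\geq 2$, then locate the first extra. Your bookkeeping makes the role of the three parities ($M$ odd, $j_0$ even, $j_0'$ even) more explicit, but the two arguments are interchangeable in content.
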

	\begin{proof}
		By the construction, every substring $x^{10}$ from $p$ must be mapped to a unique substring $x^{10}$ in $t$ (there are no substrings of $t$ that have more than $10$ symbols $x$). Because of this, every $VG(a^i)$ must be mapped to $VG'_0$ or $VG'(b^j,j)$ for some $j$. If the latter case occurs, the corresponding vectors are orthogonal and we are done. It remains to consider the case that \emph{all} vector gadgets $VG(a^i)$ get mapped to $VG'_0$. Consider any vector gadget $VG_0$ in $p$. To the left of it we have the sequence $x^{10}$ and to the right of it we have the sequence $x^{10}$. Each one of these two sequences $x^{10}$ in $p$ gets mapped to a unique sequence $x^{10}$ in $t$. We call the vector gadget $VG_0$ \emph{nice} if the two unique sequences $x^{10}$ are neighbouring in $t$, that is, there is no other sequence $x^{10}$ in $t$ between the two unique sequences. We consider two cases below.

\paragraph{Case $1$} There is a vector gadget $VG_0$ in $p$ that is \emph{not} nice. Take any vector gadget $VG_0$ that is not nice and denote it by $v$. The gadget $v$ is immediately to the right of the expression $VG(a^{i'})\,x^{10}$ in $p$ for some $i'\in [M]$. $VG(a^{i'})$ is mapped to $VG'_0$ (otherwise, we have found an orthogonal pair of vectors, as per the discussion above) and this $VG'_0$ is to the left of substring $x^{10}\,VG'(b^{j''},j'')$ in $t$ for some $j''$. Because $v$ is \emph{not} nice, a prefix of it must map to $VG'(b^{j''},j'')\,x^{10}\,VG'_0$. We claim that \emph{entire} $v$ gets mapped to $VG'(b^{j''},j'')\,x^{10}\,VG'_0$. If this is not the case, then a prefix of $v$ must be mapped to $VG'(b^{j''},j'')\,x^{10}\,VG'_0\,x^{10}\,VG'(b^{j''+1},j''+1)$ (since sequence $x^{10}$ in $p$ to the right of $v$ must be mapped to $x^{10}$). A prefix of $v$ can't be mapped to $VG'(b^{j''},j'')\,x^{10}\,VG'_0\,x^{10}\,VG'(b^{j''+1},j''+1)$ since $v=(y^*\,x^*)^{d+10}y^*$ can produce sequence with at most $d+11$ substrings of maximal length consisting entirely of symbols $y$ but sequence $VG'(b^{j''},j'')\,x^{10}\,VG'_0\,x^{10}\,VG'(b^{j''+1},j''+1)$ has $3{d+1 \over 2}>d+11$ (if $d\geq 100$) subsequences of maximal length consisting entirely of $y$. Therefore, we are left with the case that entire $v$ is mapped to $VG'(b^{j''},j'')\,x^{10}\,VG'_0$. The gadget $v$ is to the left of vector gadget $VG(a^{i'+1})$ and $VG'(b^{j''},j'')\,x^{10}\,VG'_0$ is to the left of vector gadget $VG'(b^{j''+1},j''+1)$. We conclude that $VG(a^{i'+1})$ must be mapped to $VG'(b^{j''+1},j''+1)$. This implies that $a^{i'+1}\cdot b^{j''+1}=0$ and we are done.

\paragraph{Case $2$} All vector gadgets $VG_0$ in $p$ are nice. We start $p$ with $y^6$ followed immediately by $x^{10}$. This means that $y^6$ is mapped to $CG'(b^{j'}_d,j')$ for some \emph{even} $j'$ (by the construction of coordinate gadgets $CG'$). Consider vector gadgets in $p$ from left to the right. We must have that $VG(a^1)$ is mapped to $VG'_0$, that $VG_0$ is mapped to $VG'(b^1,1)$ (since every $VG_0$ in $p$ is nice), that $VG(a^2)$ is mapped to $VG'_0$, that $VG_0$ is mapped to $VG'(b^2,2)$ (since every $VG_0$ in $p$ is nice) and so forth. Since $M$ is odd, we have that $VG(a^M)$ is mapped to $VG'_0$ and that this vector gadget $VG'_0$ is followed by $x^{10}\,VG'(b^{j'+M},j'+M)$. $VG(a^M)$ is followed by $x^{10}y^6$ and this means that $y^6$ is mapped to the beginning of $VG'(b^{j'+M},j'+M)$. This is impossible since $VG'(b^{j'+M},j'+M)$ does not contain a substring of length $6$ or more consisting of symbols $y$ (observe that $j'+M$ is odd and see the construction of vector gadget $VG'$). We get that Case $2$ can't happen.
\end{proof}

\subsection{Hardness for type ``${\circ}{+}{\circ}$''}
 \label{cpc}
	\begin{theorem} \label{cpct}
		Given sets $A=\{a^1,\ldots,a^M\} \subseteq\{0,1\}^d$ and  $B=\{b^1,\ldots,b^N\} \subseteq\{0,1\}^d$  with $M\leq N$, we can construct a regular expression $p$  and a sequence of symbols $t$,  in $O(Nd)$ time, such that  a substring of $t$ can be derived from $p$ if and only if there are $a \in A$ and $b \in B$ such that $a \cdot b=0$.
		Furthermore, $p$ is a concatenation of ``$+$'' of sequences, $|p|\leq O(Md)$ and $|t|\leq O(Nd)$.
	\end{theorem}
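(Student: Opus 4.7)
The plan is to follow the same general template as the previous reductions in this section, adapted to the ``${\circ}{+}{\circ}$'' syntax. I would construct the pattern $p$ as a concatenation of $M$ pattern vector gadgets $VG(a^i)$, interleaved with long ``anchor'' sequences $x^\ell$ of a symbol $x$ that occurs nowhere else. Each $VG(a^i)$ is a concatenation of coordinate sub-patterns of the form $[s_v]^+$ — exactly the shape allowed by the type — where the sequence $s_v$ depends on the coordinate value $v\in\{0,1\}$. The text $t$ is a concatenation of $\Theta(M+N)$ text vector gadgets $VG'(b^j)$, including ``wildcard'' filler copies on both sides (as in the ${\circ}{*}$ reduction), again separated by the same $x^\ell$-anchors. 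Because $x$ only occurs inside anchors, each pattern anchor must match a unique text anchor, so each $VG(a^i)$ is forced to derive precisely the fragment lying between two consecutive text anchors.

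The core design principle is that each $VG(a^i)$ can derive such a between-anchor fragment in two modes. In the \emph{default mode}, $VG(a^i)$ consumes a fixed long ``step'' of $t$ — roughly all of one $VG'(b^j)$ plus a fixed overflow — regardless of the relationship between $a^i$ and $b^j$, which gives property (i). In the \emph{small-step mode}, available only when $a^i\cdot b^j=0$, the same pattern gadget instead derives a strictly shorter fragment, by using fewer copies of some $s_v$ in one of the $[s_v]^+$ factors; this is enabled by a coordinate-level design where more copies of $s_v$ are required when $v=1$ than when $v=0$, so coordinate-wise orthogonality relaxes the lower bound on the number of copies. The construction is arranged so that running all $M$ pattern gadgets in default mode produces a derivation whose length exceeds the length of any anchor-to-anchor substring of $t$ by exactly one step-size-difference — hence exactly one small step is both necessary and sufficient to complete a valid match.

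With this in place, the ``if'' direction is immediate: given $a^i\cdot b^j=0$, align $VG(a^i)$ with $VG'(b^j)$, run the first $i-1$ and the last $M-i$ pattern gadgets in default mode, and use small-step mode at position $i$; the lengths match by construction. For the ``only if'' direction, the anchors force each $VG(a^i)$ to be paired with some between-anchor fragment of $t$, and a global length count shows that an all-default-mode derivation cannot produce a valid substring of $t$, so at least one $VG(a^i)$ must use small-step mode — which by coordinate-level analysis forces the corresponding $a^i,b^j$ to be orthogonal. The main technical obstacle is designing the coordinate gadgets so that (a) the default step length is uniform across all $(i,j)$ pairs, (b) small-step mode is available \emph{only} for orthogonal pairs, and (c) exactly one small step produces a matching length; the parity and padding simplifications from the Preliminaries will be used to clean up boundary effects.
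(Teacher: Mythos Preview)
Your outline captures the high-level shape that the paper's ``Our techniques'' paragraph advertises (anchors, default vs.\ small step, one small step needed), but the concrete plan has an internal inconsistency that would block the proof.

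You commit to anchors $x^\ell$ built from a symbol $x$ that ``occurs nowhere else''. With that choice, each pattern anchor can only produce a run of $x$'s, which must sit inside a single text anchor block; hence, as you say yourself, each $VG(a^i)$ is forced to derive exactly the between-anchor fragment of $t$. But then that fragment has a fixed identity and a fixed length, so there is no room for a ``default mode'' of one length and a ``small-step mode'' of a strictly shorter length coming from the \emph{same} $VG(a^i)$. The ``plus a fixed overflow'' clause in your default mode cannot be realised: the overflow would have to cross an $x^\ell$ anchor, and your own anchoring argument forbids that. Relatedly, the ``global length count'' you invoke for the only-if direction is fragile for type ``${\circ}{+}{\circ}$'': every factor $[s]^+$ can be repeated any number of times, so the pattern does not have a well-defined default output length to compare against.

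The paper resolves this by \emph{not} using a fresh anchor symbol. The alphabet is just $\{x,y\}$, and the ``stretching'' capability lives in auxiliary gadgets $VG_0=[x^4y^4]^+([x]^+[y]^+)^{d/2}$ interleaved with the $VG(a^i)$'s: the factor $[x^4y^4]^+$ can expand to either one copy or $2+d/2$ copies, letting $VG_0$ swallow one text gadget or two. The $VG(a^i)$ itself never changes length; instead it maps either to a filler $VG'(0_d)$ (always possible) or to $VG'(b^j)$ (only if $a^i\cdot b^j=0$). Anchoring at the ends is enforced structurally via $VG_1$/$VG_1'$ and the $y^8$-divisibility trick, and the only-if direction is a structural case analysis on where the first $[x^4y^4]^+$ fails to take the long expansion --- not a length count. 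If you want to salvage your approach, the fix is to drop the ``$x$ occurs nowhere else'' restriction and build the flexibility into a $[s]^+$ whose base sequence $s$ itself contains the anchor pattern.
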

	\begin{proof}
		W.l.o.g., $d$ is even.		
		First, we will construct our pattern $p$.		
		For an integer $v \in \{0,1\}$ and an integer $i \in [d]$ (remember that $[d]=\{1,2,\ldots,d\}$), we construct the following pattern coordinate gadget
		$$
			CG(v,i):=
				\begin{cases}
					[x]^+		& \text{if }v=0\text{ and }\mod{i}{1}{2};\\
					[xx]^+		& \text{if }v=1\text{ and }\mod{i}{1}{2};\\
					[y]^+		& \text{if }v=0\text{ and }\mod{i}{0}{2};\\
					[yy]^+		& \text{if }v=1\text{ and }\mod{i}{0}{2}.
				\end{cases}
		$$
		For a vector $a \in \{0,1\}^d$, we define a pattern vector gadget as $[x^4]^+\,[y^4]^+$ followed by the concatenation of all coordinate gadgets:
		$$
			VG(a):=[x^4]^+\,[y^4]^+\,CG(a_1,1)\,CG(a_2,2)\,CG(a_3,3)\, \ldots \,CG(a_d,d).
		$$
		We also need two other pattern vector gadgets
		$$
			VG_0:=[x^4\,y^4]^+\,([x]^+\,[y]^+)^{d/2};
			\hspace{1.5 cm}
			VG_1:=[x^4]^+\,[y^4]^+\,([x]^+\,[y^8]^+)^{d/2}.
		$$
		Our pattern is then defined as follows:
		$$
			p:=VG_1\,\left(\bigcirc_{j \in [M]}\left(VG(1_d)\,VG_0\,VG(a^j)\,VG_0\,\right)\right)\,VG(1_d)\,VG(1_d)\,VG_1.
		$$

		Now we proceed with the construction of our text.
		For integers $v \in \{0,1\}, i \in [d]$, we define the following text coordinate gadget
		$$
			CG'(v,i):=
				\begin{cases}
					xx		& \text{if }v=0\text{ and }\mod{i}{1}{2};\\
					x		& \text{if }v=1\text{ and }\mod{i}{1}{2};\\
					yy		& \text{if }v=0\text{ and }\mod{i}{0}{2};\\
					y		& \text{if }v=1\text{ and }\mod{i}{0}{2}.
				\end{cases}
		$$
		For vector $b \in \{0,1\}^d$, we define the text vector gadget as $x^4\,y^4$ followed by the concatenation of all coordinate gadgets:
		$$
			VG'(b):=x^4\,y^4\,CG'(b_1,1)\,CG'(b_2,2)\,CG'(b_3,3)\,\ldots \,CG'(b_d,d).
		$$
		
		In what follows, we will use the following important property of vector gadgets $VG$ and $VG'$. First, observe that for integers $u,v\in\{0,1\},i \in [d]$, we can derive $CG'(v,i)$ from $CG(v,i)$ if and only if $uv=0$. 
		It means that for any {\em vectors} $a,b \in \{0,1\}^d$, $VG'(b)$ can be derived from $VG(a)$ if and only if $a \cdot b=0$.

		We will also need additional text vector gadgets
		$$
			VG_0':=x^4\,y^4\,(x^4\,y^4)^{d/2};
			\hspace{1.5 cm}
			VG_1':=x^4\,y^4\,(x\,(y)^8)^{d/2}.
		$$

		Our text is then defined as follows:
		$$
			t:=\bigcirc_{j=-5N}^{6N}\left(VG'(0_d)\,VG_0'\,VG'(b^j)\,VG'(0_d)\,VG_0'\,VG_1'\right),
		$$
		where we assume $b^j:=1_d$ for $j \not \in [N]$. That is, $b^j$ is vector with $d$ entries all equal to $1$ for $j \not \in [N]$.

		We have to show that we can derive a substring of $t$ from $p$ if and only if there are two orthogonal vectors in $A$ and $B$. This follows from lemmas \ref{ort2} and \ref{nort2} below. 
	\end{proof}

	\begin{lemma} \label{ort2}
		If there are two vectors $a \in A$ and $b \in B$ that are orthogonal, then a substring of $t$ can be derived from $p$.
	\end{lemma}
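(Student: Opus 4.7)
The plan is to build an explicit alignment of $p$ to a substring of $t$, using the orthogonality $a^i \cdot b^j = 0$ in exactly one place. By the simplifying assumptions (padding $A$ with $1_d$ at the front and the back if necessary, and using that we can assume no $b^\ell = 0_d$), I may assume $1 < i < M$.

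At the outset I would catalog the following derivations, each of which is a direct check from the regex definitions: $VG_1 \to VG_1'$; $VG(1_d) \to VG'(0_d)$ and $VG(1_d) \to VG_0'$; $VG(a^k) \to VG'(0_d)$ for every $k$ (since $a^k \cdot 0_d = 0$); $VG(a^i) \to VG'(b^j)$ (the one use of orthogonality, via the key property of the vector gadgets noted in the construction); $VG_0 \to T$ for every single text block $T$ of the form $VG'(0_d), VG_0', VG'(b^?)$, or $VG_1'$ (take $k=1$ in the $[x^4y^4]^+$ part and match the last $d/2$ $x$-$y$ pairs with $([x]^+[y]^+)^{d/2}$); and $VG_0 \to VG_0' \cdot T$ for every such $T$, obtained by letting $[x^4 y^4]^+$ produce the $(d/2+2)$-pair leading run $(x^4 y^4)^{d/2+2}$ of $VG_0' \cdot T$ and letting $([x]^+[y]^+)^{d/2}$ produce the remaining $d/2$ pairs.

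Write the six consecutive text blocks of the unit with index $u$ as $T_1(u), T_2(u), T_3(u), T_4(u), T_5(u), T_6(u)$, equal to $VG'(0_d), VG_0', VG'(b^u), VG'(0_d), VG_0', VG_1'$ respectively, and set $u_0 := j - i + 1$. The alignment is: the leading $VG_1$ maps to $T_6(u_0 - 1)$; for each $k < i$, iteration $k$ maps to all of text unit $u_0 + k - 1$ via $VG(1_d) \to T_1, \, VG_0 \to T_2 T_3, \, VG(a^k) \to T_4, \, VG_0 \to T_5 T_6$; the orthogonal iteration ($k = i$) maps to $T_1(j) T_2(j) T_3(j) T_4(j)$ via $VG(1_d) \to T_1(j), \, VG_0 \to T_2(j), \, VG(a^i) \to T_3(j), \, VG_0 \to T_4(j)$; the transitional iteration ($k = i + 1$) maps to $T_5(j) T_6(j) T_1(j+1) T_2(j+1) T_3(j+1)$ via $VG(1_d) \to T_5(j), \, VG_0 \to T_6(j), \, VG(a^{i+1}) \to T_1(j+1), \, VG_0 \to T_2(j+1) T_3(j+1)$; each iteration $k \in \{i+2, \ldots, M\}$ maps to the six-block chunk $T_4(j+k-i-1) T_5(j+k-i-1) T_6(j+k-i-1) T_1(j+k-i) T_2(j+k-i) T_3(j+k-i)$ via the symmetric decomposition $VG(1_d) \to T_4, \, VG_0 \to T_5 T_6, \, VG(a^k) \to T_1', \, VG_0 \to T_2' T_3'$; and the trailing $VG(1_d) VG(1_d) VG_1$ maps to $T_4(j+M-i) T_5(j+M-i) T_6(j+M-i)$ via $VG(1_d) \to T_4, \, VG(1_d) \to T_5, \, VG_1 \to T_6$. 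Validity of each step follows from the cataloged derivations, and the at most $M + 1$ text units touched fit comfortably in $\{-5N, \ldots, 6N\}$ since $M \le N$.

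The main delicate point is that the orthogonal iteration consumes only four text blocks instead of the six consumed by a default iteration, which, together with the transitional iteration consuming five, saves three blocks and shifts the subsequent alignment by three positions within the six-block unit cycle. Every post-transitional iteration then uses a symmetrically shifted default that straddles two consecutive units, and the derivations enabling these shifted blocks rely on precisely the flexibility provided by the $[x^4 y^4]^+$ head of $VG_0$. I expect that verifying this shift handling, and in particular checking that $VG_0 \to VG_0' \cdot T$ holds for each $T$ occurring in the alignment, is the only nontrivial step.
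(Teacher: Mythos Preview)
Your proof is correct and follows essentially the same alignment as the paper's: the pre-orthogonal iterations each consume a full six-block unit, the orthogonal and next iteration together consume nine blocks (you present this as $4+5$, the paper bundles it as a single nine-block step), and the remaining iterations each consume six blocks shifted by three positions, with the tail $VG(1_d)\,VG(1_d)\,VG_1$ mapping to $VG'(0_d)\,VG_0'\,VG_1'$. The only cosmetic difference is your explicit padding of $A$ by $1_d$ to force $1<i<M$, which cleanly handles the boundary cases the paper glosses over.
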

	\begin{proof}
		We assume that $a^k \cdot b^r=0$ for some $k \in [M]$, $r \in [N]$.
		
		Observe that the pattern $p$ starts with a prefix $VG_1$, which is then followed by the sequence
		$$
			VG(1_d)\,VG_0\,VG(a^i)\,VG_0
		$$
		repeated $M$ times for different $i \in [M]$ and ends with the suffix 
		$
			VG(1_d)\,VG(1_d)\,VG_1.
		$ 
		We refer to
		$$
			VG(1_d)\,VG_0\,VG(a^i)\,VG_0
		$$ 
		corresponding to a specific $i \in [M]$ as the $i$-th {\em group} of $p$. Similarly, we observe that the text $t$ is concatenation of sequences 
		$$
			VG'(0_d)\,VG_0'\,VG'(b^j)\,VG'(0_d)\,VG_0'\,VG_1'
		$$ 
		corresponding to different $j=-5N,\ldots,6N$. We refer to that sequence corresponding to particular $j=-5N,\ldots,6N$ as the $j$-th  {\em group} of $t$.

		We transform $p$ into the following substring $t'$ of $t$:
		$$
			t':=VG_1'\,\bigcirc_{j=1+r-k}^{M+r-k}\left(VG'(0_d)\,VG_0'\,VG'(b^j)\,VG'(0_d)\,VG_0'\,VG_1'\right).
		$$
		Note that $t'$ starts with the prefix $VG_1'$ which is a suffix of the $(r-k)$-th group of $t$. Then it consists of groups $1+r-k$ to $M+r-k$ of $t$. 
		We transform $p$ into $t'$ from left to right.  First, we transform $VG_1$ into $VG_1'$ in the unique way.
		Then, for $i=1,\ldots,k-1$, we transform the $i$-th group of $p$ into the $(i+r-k)$-th group of $t'$. We do that as follows. For each $i \in \{1,\ldots,k-1\}$, we perform the following transformations: transform $VG(1_d)$ into $VG'(0_d)$; transform $VG_0$ into $VG_0'\,VG'(b^{i+r-k})$; transform $VG(a^i)$ into $VG'(0_d)$; transform $VG_0$ into $VG_0'\,VG_1'$.

		Now we transform  the $k$-th and the $(k+1)$-th group of $p$ into the prefix 
		$$
			VG'(0_d)\,VG_0'\,VG'(b^r)\,VG'(0_d)\,VG_0'\,VG_1'\,VG'(0_d)\,VG_0'\,VG'(b^{r+1})
		$$ 
		of the remainder of $t'$. This is done by transforming $VG(1_d)$ into $VG'(0_d)$, $VG_0$ into $VG_0'$, $VG(a^k)$ into $VG'(b^r)$ (which can be done because $a^k\cdot b^r=0$), $VG_0$ into $VG'(0_d)$, $VG(1_d)$ into $VG_0'$, $VG_0$ into $VG_1'$, $VG(a^{k+1})$ into $VG'(0_d)$ and $VG_0$ into $VG_0'\,VG'(b^{r+1})$. The remainder of the pattern $p$ consists of groups $k+2,\ldots,M$ and the suffix $VG(1_d)\,VG(1_d)\,VG_1$, while the remainder of $t'$ is
		$$
			t'':=\left(\bigcirc_{j=r+2}^{M+r-k}\left(VG'(0_d)\,VG_0'\,VG_1'\,VG'(0_d)\,VG_0'\,VG'(b^j)\right)\right)\,VG'(0_d)\,VG_0'\,VG_1'.
		$$
		We transform group $i=k+2,\ldots,M$ of $p$ into a substring $VG'(0_d)\,VG_0'\,VG_1'\,VG'(0_d)\,VG_0'\,VG'(b^{i+r-k})$, as follows. We transform $VG(1_d)$ into $VG'(0_d)$, $VG_0$ into $VG_0'\,VG_1'$, $VG(a^i)$ into $VG'(0_d)$ and $VG_0$ into $VG_0'\,VG'(b^{i+r-k})$.
		It remains to transform the suffix $VG(1_d)\,VG(1_d)\,VG_1$ of $p$ into the suffix $VG'(0_d)\,VG_0'\,VG_1'$ of $t''$, which can be done in a non-ambiguous way.
	\end{proof}

	\begin{lemma} \label{nort2}
		If a substring of $t$ can be derived from $p$, then there are two vectors $a \in A$ and $b \in B$ that are orthogonal.
	\end{lemma}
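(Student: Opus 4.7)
The plan is to pin the matching down to an essentially unique alignment and then extract an orthogonal pair from the one remaining degree of freedom. At the high level: the $VG_1$/$VG_1'$ gadgets anchor the endpoints of the matched substring, the $x^4 y^4$-prefixed pattern sub-gadgets force a rigid partition of the text sub-gadgets among pattern sub-gadgets, and the only way to fit the pattern inside the available budget is for some $VG(a^i)$ to derive a $VG'(b^j)$, which requires $a^i \cdot b^j = 0$.

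I would begin by exploiting the distinctive $y^8$ substrings. The only text sub-gadget that contains $y^8$ is $VG_1'$; the other text sub-gadgets ($VG'(b)$, $VG'(0_d)$, $VG_0'$) produce only $y^4$-blocks or shorter. The pattern $VG_1$ must derive $d/2$ disjoint $y^8$-blocks separated by $x$'s, so each $VG_1$ in $p$ must be derived from a substring lying inside a single $VG_1'$ in $t$. Hence the leading and trailing $VG_1$'s of $p$ anchor to two specific $VG_1'$'s of $t$, at text groups $j_1 < j_2$. Next I would pin down the middle of the matching: every remaining pattern sub-gadget starts with $[x^4]^+[y^4]^+$ or $[x^4 y^4]^+$, forcing its derivation to open with an $x^4 y^4$ substring. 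Since $x^4 y^4$ occurs in $t$ only at sub-gadget boundaries or at $8$-aligned offsets inside a $VG_0'$, and since the $[xx]^+$ and $[yy]^+$ factors in $VG(1_d)$ and $VG(a^i)$ can produce only even-length runs of $x$ and $y$, I would argue by case analysis that the derivation induces a partition of the text strip between $VG_1'_{j_1}$ and $VG_1'_{j_2}$ into consecutive blocks of text sub-gadgets, one block per pattern sub-gadget, where: each $VG(1_d)$ consumes exactly one text sub-gadget of type $VG'(0_d)$ or $VG_0'$; each $VG(a^i)$ consumes exactly one of $VG'(0_d)$, $VG_0'$, or $VG'(b^j)$ with $a^i \cdot b^j = 0$; and each $VG_0$ consumes one or two consecutive text sub-gadgets from a short permitted list (for instance $VG_0'$ alone, or $VG_0' \cdot X$ with $X \in \{VG_1', VG'(b), VG'(0_d)\}$).

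With this rigid picture in hand, the conclusion follows by counting. The pattern middle has $4M+2$ sub-gadgets while the text strip strictly between the anchored $VG_1'$'s has $6(j_2-j_1)-1$ text sub-gadgets; combined with the short list of permitted $VG_0$-absorptions, this forces a specific value of $j_2 - j_1$ and a rigid distribution of how many text sub-gadgets each $VG_0$ absorbs. The strip contains $j_2 - j_1$ copies of $VG'(b^j)$. A counting argument (using that each $VG_0$ absorbs at most one $VG'(b)$ among its one or two absorbed sub-gadgets, and that no $VG(1_d)$ absorbs a $VG'(b^j)$ with $b^j \ne 0_d$) shows that at least one copy of $VG'(b^j)$ in the strip cannot be accounted for by the $VG_0$'s and the $VG(1_d)$'s, and so must be derived by some $VG(a^i)$. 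By the $VG(a^i)$ bullet above, this forces $a^i \cdot b^j = 0$, yielding the desired orthogonal pair. The main obstacle will be the second paragraph: ruling out exotic matchings in which a $VG_0$'s Kleene-plus stretches to swallow many text sub-gadgets, or in which a $VG(1_d)$ or $VG(a^i)$ straddles a sub-gadget boundary, using the parity and run-length constraints imposed by the gadget definitions; once this rigidity is established the remaining counting is routine.
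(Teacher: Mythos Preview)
Your opening move---anchoring both copies of $VG_1$ to copies of $VG_1'$ via the $y^8$ blocks---matches the paper. After that the approaches diverge, and the divergence is where your sketch has a genuine gap.

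The paper does \emph{not} attempt a global rigidity-plus-counting argument. After anchoring the leading $VG_1$ to the $VG_1'$ of some group $r$, it looks for the \emph{first} $VG_0$ in $p$ whose prefix $[x^4y^4]^+$ is not mapped to exactly $(x^4y^4)^{2+d/2}$. If there is none, then every pattern group maps to exactly one text group, and the trailing $VG(1_d)\,VG(1_d)\,VG_1$ is forced to begin at the start of group $M+r+1$; this would make $VG_1$ derive $VG'(b^{M+r+1})$, which is impossible. So a first such $VG_0$ exists, say in the $i$-th pattern group, and a short local case split (depending on whether it is the $VG_0$ before or after $VG(a^i)$) shows that $VG(a^i)$ must derive $VG'(b^{i+r})$, giving $a^i\cdot b^{i+r}=0$.

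Your proposed counting step does not close as stated. There are $2M$ copies of $VG_0$ in the pattern middle but only $j_2-j_1\le M$ copies of $VG'(b^{\,\cdot})$ (and $j_2-j_1-1$ copies of $VG_1'$) in the strip; since each $VG_0$ can absorb one $VG'(b)$ or one $VG_1'$, cardinality alone is perfectly consistent with the $VG_0$'s absorbing \emph{all} of them and no $VG(a^i)$ ever landing on a $VG'(b^j)$. Nor does counting force a single value of $j_2-j_1$: the constraints only give a range. To make the argument work you would have to exploit the left-to-right order of text sub-gadgets and track what happens the first time a $VG_0$ absorbs only one sub-gadget instead of two---and that is precisely the paper's first-deviation argument. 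In other words, the obstacle is not the rigidity you flag in your second paragraph; it is that the third paragraph's counting is too coarse, and sharpening it collapses your plan onto the paper's.
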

	\begin{proof}
		Notice that $p$ starts and ends with $VG_1$ and that this vector gadget does not appear anywhere else in the pattern.
		$VG_1$ contains $d/2$ patterns $[y^8]^+$. This means that it must map to a substring of $t$ containing $d/2$ substrings consisting of only symbols $y$ of length divisible by $8$. Text $t$ contains vector gadgets $VG_1'$ that has substrings of symbols $y$ of length divisible by $8$. No other vector gadget in $t$ has this property. Therefore, both vector gadgets $VG_1$ of $p$ must map to $VG_1'$ in $t$. Consider $VG_1$ at the beginning of $p$. Suppose that it maps to $VG_1'$ that comes from the $r$-th group of $t$, for some $r$. The pattern $p$ contains $2M$ vector gadgets $VG_0$. Each one of them starts with $[x^4\,y^4]^+$. In the proof of Lemma \ref{ort2} we transform $[x^4\,y^4]^+$ into $x^4\,y^4$ so that $VG_0$ maps to a single corresponding vector gadget in $t$ or we transform $[x^4\,y^4]^+$ into $(x^4\,y^4)^{2+d/2}$ so that $VG_0$ maps to two vector gadgets in $t$. Here we consider the first copy $v$ of the vector gadget $VG_0$ of $p$ such that the corresponding $[x^4\,y^4]^+$ \emph{does not} map to $(x^4\,y^4)^{2+d/2}$. If there is no such a vector gadget, we can check that the $j$-th group of $p$ must be transformed into the $j+r$-th group of $t$ for all $t \in [M]$. That means that the suffix $VG(1_d)\,VG(1_d)\,VG_1$ of $p$ must be transformed into the prefix of the $M+r+1$-st group of $t$. This implies that $VG_1$ is transformed into $VG'(b^{M+r+1})$, which is impossible by the construction of the vector gadgets. Therefore, there must be vector gadget $v$ with the stated properties.
		
Suppose that $v$ comes from the $i$-th group in $p$. For all vector gadgets of type $VG_0$ to the left of $v$, the prefix $[x^4\,y^4]^+$ got mapped to $(x^4\,y^4)^{2+d/2}$. This means that group $k=1,\ldots, i-1$ of $p$ is transformed into the group $k+r$ of $t$. There are two cases to consider, depending on whether $v$ comes before or after the vector gadget $VG(a^i)$ in the group $i$ of $p$.

	\paragraph{Case 1: $v$ comes after $VG(a^i)$} We show that this case cannot occur. We start by observing that $VG(a^i)$ maps to $VG'(0_d)$.  The sequence $[x^4\,y^4]^+$ in $v$ cannot map to $3+d/2$ or more copies of $x^4\,y^4$ because it would imply that the first substring $x\,(y)^8$ of $VG_1'$ can be derived from $[x^4\,y^4]^+$, which is impossible. It follows that $[x^4\,y^4]^+$ maps to $1+d/2$ or fewer copies of $x^4\,y^4$. Note that $v$ is followed by $VG(1_d)$ and $VG(1_d)$ starts with $[x^4]^+\,[y^4]^+$.
This means that $[x^4\,y^4]^+$ of $v$ must map to only one copy of $x^4\,y^4$ (i.e., $VG_0$ is mapped to $VG'_0$), so that $[x^4]^+\,[y^4]^+$ at the beginning of $VG(1_d)$ can map to $x^4 \, y^4$ at the beginning of $VG_1'$. This in turn implies that $VG(1_d)$ must be transformed into a suffix of $VG_1'$, which is again impossible since $VG_1'$ contain a single symbol $x$ surrounded by symbols $y$ on both sides and the regular expression $VG(1_d)$ can't produce such a substring. Thus, Case 1 cannot occur.

	\paragraph{Case 2: $v$ comes before $VG(a^i)$} The argument is similar to the previous paragraph except we will conclude that $a^i \cdot b^{i+r}=0$. Similarly as before, we can conclude that $[x^4\,y^4]^+$ in $v$ cannot be mapped to $3+d/2$ or more copies of $x^4\,y^4$. This is because $VG'(b^{i+r})$ starts with $x^4\,y^4\,CG'(b^{i+r}_1,1)$ and, if $[x^4\,y^4]^+$ in $v$ was mapped to $3+d/2$ or more copies of $x^4\,y^4$, $CG'(b^{i+r}_1,1)$ would be a prefix of sequence that can be derived from $[x^4\,y^4]^+$. We can verify that it is impossible. Assume that $[x^4\,y^4]^+$ maps to $1+d/2$ or fewer copies of $x^4\,y^4$. Because $VG(a^i)$ starts with $[x^4]^+\,[y^4]^+$, we have that $VG(a^i)$ is transformed into $VG'(b^{i+r})$. This is possible only if $a^i \cdot b^{i+r}=0$ by the construction of the coordinate gadgets.
	\end{proof}

\subsection{Hardness for type ``${\circ}|{\circ}$''} \label{coc}
	\begin{theorem} \label{coc_pm}
		Given sets $A=\{a^1,\ldots,a^M\} \subseteq\{0,1\}^d$ and  $B=\{b^1,\ldots,b^N\} \subseteq\{0,1\}^d$ with $M\leq N$, we can construct a regular expression $p$  and a sequence of symbols $t$,  in $O(Nd)$ time, such that  a substring of $t$ can be derived from $p$ if and only if there are $a \in A$ and $b \in B$ such that $a \cdot b=0$. Furthermore, $p$ is of type ``${\circ}|{\circ}$", $|p|\leq O(Md)$ and $|t|\leq O(Nd)$.
	\end{theorem}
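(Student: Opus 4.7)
The plan is to prove the theorem by reusing the construction from Theorem \ref{cpct} (for type ``${\circ}{+}{\circ}$'') essentially unchanged, and then restricting every $+$-subexpression to a bounded ``$|$'' of concatenations so that the resulting pattern has type ``${\circ}|{\circ}$'' while preserving both directions of the reduction. The key observation is that the derivation used in Lemma~\ref{ort2} instantiates each $[s]^+$ to only a handful of specific exponents, so a language-restriction of $p$ that keeps exactly those exponents is enough to realize the same witness.

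Concretely, I would start from the pattern $p$ of Theorem~\ref{cpct} and examine, subexpression-by-subexpression, which exponents each $[s]^+$ is actually used with in the proof of Lemma~\ref{ort2}. For the coordinate-level operators $[x]^+, [xx]^+, [y]^+, [yy]^+, [x^4]^+, [y^4]^+, [y^8]^+$ appearing inside $VG(a)$, $VG_1$, and inside the tail $([x]^+[y]^+)^{d/2}$ of $VG_0$, every occurrence is instantiated to one of only $O(1)$ exponents (values in $\{1,2,4,8\}$ are all that appear). The only ``large'' exponent is for the leading $[x^4 y^4]^+$ of $VG_0$, which is instantiated to either $1$, $1+d/2$, or $2+d/2$. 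I then define $p'$ by replacing every $[s]^+$ in $p$ with $\bigl[\,s^{k_1}\,|\,s^{k_2}\,|\,\ldots\,|\,s^{k_\ell}\,\bigr]$, where $\{k_1,\ldots,k_\ell\}$ is exactly the set of exponents needed for that occurrence. Then $p'$ has type ``${\circ}|{\circ}$'' and total size $O(Md)$: the $O(Md)$ coordinate-level $+$'s each blow up by only $O(1)$, and the $O(M)$ copies of the leading $[x^4 y^4]^+$ each blow up to $O(d)$. The text $t$ is left unchanged.

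For the ``if'' direction, suppose $a^k\in A$ and $b^r\in B$ are orthogonal. The derivation exhibited in the proof of Lemma~\ref{ort2} instantiates every $[s]^+$ to an exponent in the finite set recorded above. Since each such exponent is explicitly available as an alternative in $p'$, the very same derivation witnesses that the target substring of $t$ is derivable from $p'$. For the ``only if'' direction, note that $L(p')\subseteq L(p)$ by construction: every word derivable from $\bigl[s^{k_1}|\cdots|s^{k_\ell}\bigr]$ is of the form $s^{k_i}$ and therefore also derivable from $[s]^+$. Hence any substring of $t$ derivable from $p'$ is derivable from $p$, and Lemma~\ref{nort2} produces a pair of orthogonal vectors.

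The main obstacle is the bookkeeping required to verify that the exponent multisets $K_s$ stay small for every $+$-subexpression and that the forward construction of Lemma~\ref{ort2} does not implicitly rely on some other exponent. Concretely one must walk through the four substitutions performed in the middle of Lemma~\ref{ort2} (the $k$-th and $(k+1)$-th group), checking for each $[s]^+$ which exponent is used in each of the two contexts (``$VG_0$ mapped to $VG_0'$'' versus ``$VG_0$ mapped to $VG_0'\,VG_1'$'' or to $VG_0'\,VG'(b^{\cdot})$, etc.). Once this audit is done, both the size bound $|p'|=O(Md)$ and the language inclusion $L(p')\subseteq L(p)$ are immediate, and the theorem follows with the same $|t|=O(Nd)$ as in Theorem~\ref{cpct}.
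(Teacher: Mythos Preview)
Your proposal is correct and follows essentially the same approach as the paper: restrict each $[s]^+$ in the ``${\circ}{+}{\circ}$'' pattern to a bounded OR of powers so that $L(p')\subseteq L(p)$ (giving Lemma~\ref{nort2} for free) while keeping all exponents actually used in Lemma~\ref{ort2} (so the forward derivation still works). The paper is slightly coarser---it uses the uniform replacement $[s\,|\,s^2\,|\,\cdots\,|\,s^8]$ for every $s\neq x^4y^4$ and $[x^4y^4\,|\,(x^4y^4)^{2+d/2}]$ for the remaining one---whereas you tailor the exponent set per occurrence, but this difference is cosmetic and the size bound and correctness argument are the same.
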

	\begin{proof}
		We will modify the construction for ``${\circ}{+}{\circ}$'' so that it gives a hardness proof for ``${\circ}|{\circ}$''. The text $t$ remains the same. We will modify pattern $p$ as follows. First, recall that $p$ is a repeated concatenation of regular expressions of the form $[x]^+,[x\,x]^+,[y]^+,[y\,y]^+,[x^4]^+,[y^4]^+,[x^4\,y^4]^+,[y^8]^+$ in some order. In the proof of Lemma \ref{ort2}, all of those sequences get repeated at most $8$ times, except for the sequence $x^4\,y^4$ which gets repeated once or $2+d/2$ times. Therefore, we replace $[s]^+$ with 
	$
		[s\,|\,s^2\,|\,s^3\,|\,s^4\,|\,s^5\,|\,s^6\,|\,s^7\,|\,s^8]
	$
	for all $s$, except when $s=x^4\,y^4$. In the latter case we replace $[x^4\,y^4]^+$ with
	$
		[x^4\,y^4\,|\,(x^4\,y^4)^{2+d/2}].
	$
		The proof of Lemma \ref{ort2} goes through as before. The proof of Lemma \ref{nort2} also goes through because, after these modifications, if some sequence $z$ can be derived from the modified pattern, it can be also derived from the initial pattern.
	\end{proof}

	\subsection{Hardness for type ``${\circ} {+}|$''} \label{cpo}
	\begin{theorem} \label{cpo_theorem}
		Given sets $A=\{a^1,\ldots,a^M\} \subseteq\{0,1\}^d$ and  $B=\{b^1,\ldots,b^N\} \subseteq\{0,1\}^d$  with $M\leq N$, we can construct a regular expression $p$  and a sequence of symbols $t$, in $O(Nd)$ time, such that  a substring of $t$ can be derived from $p$ iff there are $a \in A$ and $b \in B$ such that $a \cdot b=0$.
		Furthermore $p$ has type ``${\circ} {+}|$'', $|p|\leq O(Md)$ and $|t|\leq O(Nd)$.
	\end{theorem}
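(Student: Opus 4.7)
I would follow the general template of the reduction for type ``${\circ}{+}{\circ}$'' in Section~\ref{cpc}, but with the inner concatenation replaced by ``$|$'' and with a different flexibility mechanism. For each $a^i \in A$, I would construct a pattern vector gadget $VG(a^i)$ of the form $C_1^+ C_2^+ \cdots C_d^+$, where each $C_k$ is an OR expression (over a small fixed alphabet, with parity-dependent symbols as in Section~\ref{cpc}) encoding the coordinate $a^i_k$. For each $b^j \in B$, I would construct a text vector gadget $VG'(b^j)$, a plain string whose $k$-th block encodes $b^j_k$. The gadgets would be arranged so that $VG'(b^j)$ can be derived from $VG(a^i)$ precisely when $a^i \cdot b^j = 0$; this is the ``default'' \emph{small-step} alignment of one pattern vector gadget against one text vector gadget.

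The pattern $p$ would then be a concatenation of the $VG(a^i)$ interleaved with fixed filler gadgets and bracketed by boundary anchors $VG_1$ at both ends, and the text $t$ a long concatenation of $VG'(b^j)$ gadgets (with matching fillers). The main design choice, following the hint in the introduction, is to make the pattern slightly \emph{too short} to be covered by $M$ consecutive small-step alignments, so that exactly one pattern vector gadget must make a \emph{big step} in which its $C_k^+$ factors simultaneously consume the $k$-th coordinate block of two consecutive text vector gadgets $VG'(b^j)\,VG'(b^{j+1})$. The OR alternatives inside each $C_k$ are chosen so that this double consumption is possible exactly when the relevant coordinate pattern is compatible with both $b^j_k$ and $b^{j+1}_k$, forcing the big step to witness orthogonality of $a^i$ with one of $b^j,b^{j+1}$.

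With this setup, the forward direction (analog of Lemma~\ref{ort2}) aligns pattern vector gadgets to text vector gadgets by small steps up to the index of the orthogonal pair $(a^k, b^r)$, performs the unique big step there, and then continues with small steps to the end. The backward direction (analog of Lemma~\ref{nort2}) uses the boundary anchors $VG_1$ and a length count to force that exactly one big step occurs, and uses the rigidity of the coordinate gadgets to conclude that the big step certifies an orthogonal pair.

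The main obstacle I expect is the combined design of the OR-gadgets $C_k$ and the boundary/filler gadgets. The ORs must simultaneously (i) realize the small-step derivation iff $a^i \cdot b^j = 0$, (ii) permit the big step iff the appropriate orthogonality condition on two consecutive text vectors holds, and (iii) rule out unintended alignments such as shifting by a fractional coordinate, splitting a big step across several pattern vector gadgets, or performing more than one big step. Enforcing all three simultaneously -- by using parity-dependent symbols $x,y$ for even versus odd coordinates as in Section~\ref{cpc}, by controlling the multiplicities allowed under ``$+$'', and by choosing anchors that can only be matched at specific positions in $t$ -- is the delicate technical part I would expect to work through most carefully.
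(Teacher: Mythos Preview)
Your plan diverges substantially from the paper's construction, and the specific big-step mechanism you describe has a structural problem.

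The paper does \emph{not} follow the Section~\ref{cpc} template here. It exploits the inner ``$|$'' in a qualitatively different way: each coordinate gadget $CG(a^i_k,k)$ is an OR over \emph{two kinds} of symbols---``padding'' symbols $x,y$ and ``data'' symbols $0,1,0',1'$---so that $VG(a^i)$ can either (a) output a pure $x/y$ block and be absorbed into the text's padding, or (b) output a data string $VG'(b^j)$, which is possible iff $a^i\cdot b^j=0$. The filler $VG_0$ is like $VG(0_d)$ but with the $x,y$ alternatives removed, so it \emph{must} match some $VG'(b^j)$. The pattern alternates $VG_0,\,[x|y]^+,\,VG(a^i),\,[x|y]^+$ and is bracketed by $x^+$; the text alternates padding blocks $x^{d+10}$, $y^{d+10}$ with the $VG'(b^j)$'s. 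In the default alignment every $VG(a^i)$ covers padding and every $VG_0$ covers a $VG'(b^j)$, but then the terminal $x^+$ lands on a $y$---so somewhere one $VG(a^i)$ must instead cover a $VG'(b^j)$, and the surrounding $[x|y]^+$ fillers stretch from $5$ to $d+10$ symbols to take up the slack. The ``big step'' is carried by the $[x|y]^+$ fillers, not by the vector gadget.

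The gap in your plan is the big step itself. You want a single $VG(a^i)=C_1^+\cdots C_d^+$ to consume two consecutive text gadgets $VG'(b^j)\,VG'(b^{j+1})$ with each $C_k^+$ covering ``the $k$-th coordinate block of both''. But in any natural text layout those two $k$-th blocks are separated by the remaining $d-1$ coordinate blocks of $VG'(b^j)$ (plus any separators), so a single factor $C_k^+$---which, being $[\text{a fixed set of symbols}]^+$, produces a contiguous run over that fixed alphabet---cannot span them without also producing all the intervening coordinate symbols, which would destroy the orthogonality check. The reason the Section~\ref{cpc} big step works is that it is carried by the \emph{filler} $[x^4y^4]^+$, whose inner argument is a \emph{concatenation} matching the periodic structure of the text; once the inner operator becomes ``$|$'', the only natural elasticity is homogeneous runs, which is precisely what the paper harnesses in the $[x|y]^+$ fillers rather than in the vector gadgets.
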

	\begin{proof}
		W.l.o.g., $\mod{M}{0}{2}$ and $\mod{d}{0}{2}$. W.l.o.g., if there are two orthogonal vectors, then there are $a^i \in A$, $b^j \in B$ with $a^i \cdot b^j=0$ and $i\equiv j(\text{mod }2)$.

		First, we will construct our pattern. We need the following coordinate gadget construction.
		For an integers $v \in \{0,1\}$ and $i \in [d]$,
		$$
			CG(v,i):=
				\begin{cases}
					[x\,|\,y\,|\,0\,|\,1]^+		& \text{if }v=0\text{ and }\mod{i}{1}{2};\\
					[x\,|\,y\,|\,0]^+					& \text{if }v=1\text{ and }\mod{i}{1}{2};\\
					[x\,|\,y\,|\,0'\,|\,1']^+		& \text{if }v=0\text{ and }\mod{i}{0}{2};\\
					[x\,|\,y\,|\,0']^+					& \text{if }v=1\text{ and }\mod{i}{0}{2}.
				\end{cases}
		$$
		For a vector $a \in \{0,1\}^d$, we define pattern vector gadget $VG(a)$ as concatenation of all coordinate gadgets for entries of the vector:
		$$
			VG(a):=CG(a_1,1)\,CG(a_2,2)\,CG(a_3,3)\, \ldots \,CG(a_d,d).
		$$
		We also need another vector gadget
		$$
			VG_0:=([0|1]^+\,[0'|1']^+)^{d/2},
		$$
		that is, $VG_0$ is equal to the vector gadget $VG(0_d)$ except it can't produce symbols $x$ and $y$.
		Our pattern $p$ is then defined as follows:
		$$
			p:=x^+\,\,\left(\bigcirc_{i \in [M]}\left(VG_0\,[x|y]^+\,VG(a^i)\,[x|y]^+\right)\right)\,VG_0\,\,x^+.
		$$
		
		Now we construct our text $t$. First, for a vector $b \in \{0,1\}^d$, we define text vector gadget
		$$
			VG'(b):=b_1\,b'_2\,b_3\,b'_4\,b_5\,\ldots\,b'_d,
		$$
		that is, it is concatenation of all entries and we put $'$ for every second entry.
		Note that we can derive $VG'(b)$ from $VG(a)$ iff $a \cdot b=0$. Also, we can derive $VG'(b)$ from $VG_0$ for any $b$.
		Our text $t$ is defined as
		\begin{align*}
			t:=\bigcirc_{j=-9N}^{10N}\left(x^{d+10}\,VG'(b^{2j})\,y^{d+10}\,VG'(b^{2j+1})\right),
		\end{align*}
		where, for $j \not \in [N]$, we set $b^j:=1_d$ (vector consisting of only $1$s). 
		
		We need to show that we can derive a substring of $t$ from $p$ iff there are two orthogonal vectors. This follows from lemmas \ref{ort3} and \ref{nort3} below.
	\end{proof}

	\begin{lemma} \label{ort3}
		If there are two vectors $a \in A$ and $b \in B$ that are orthogonal, then a substring of $t$ can be derived from $p$.
	\end{lemma}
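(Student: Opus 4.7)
The plan is to use the orthogonal pair $a^k \cdot b^r = 0$ with $\mod{k}{r}{2}$ (guaranteed by the simplifying assumptions) to exhibit an explicit substring of $t$ and a derivation of it from $p$. First I would set $\ell_0 := r - k$, which is even by the parity assumption, and target the substring of $t$ that starts in the $x^{d+10}$ block immediately preceding $VG'(b^{\ell_0})$ and ends in the $x^{d+10}$ block immediately following $VG'(b^{\ell_0 + M + 1})$. The existence of both boundary $x$-blocks is forced by parity: $\ell_0$ even gives the preceding $x^{d+10}$ block, and $M$ even makes $\ell_0 + M + 1$ odd, which gives the following $x^{d+10}$ block.

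Before writing down the matching I would record three facts about the gadgets. (i)~$VG_0 = ([0|1]^+[0'|1']^+)^{d/2}$ can derive any $VG'(b^{\ell})$, since its alternating form mirrors the alternating unprimed/primed structure of $VG'$. (ii)~$VG(a^i)$ can derive $VG'(b^{\ell})$ iff $a^i \cdot b^{\ell} = 0$: whenever $a^i_j = 1$, the coordinate gadget $CG(1, j)$ drops $1$ (or $1'$) from its alternatives, forcing $b^{\ell}_j = 0$. (iii)~Crucially, since every $CG(v, j)$ lists both $x$ and $y$ as alternatives, $VG(a^i)$ can also derive $x^{\ge d}$ or $y^{\ge d}$, so it can ``hide'' entirely inside a single separator block of $t$.

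The matching then goes as follows. Map the $M+1$ copies of $VG_0$ in $p$ to the text gadgets $VG'(b^{\ell_0}), VG'(b^{\ell_0+1}), \ldots, VG'(b^{\ell_0+M+1})$, skipping the index $\ell_0 + k = r$, which is reserved for $VG(a^k)$ (matching $VG'(b^r)$ by orthogonality). For each $i \ne k$, have the block $[x|y]^+\, VG(a^i)\, [x|y]^+$ in $p$ cover the single separator block of $t$ between the two adjacent text gadgets assigned to the surrounding $VG_0$'s: $VG(a^i)$ hides in the middle as $x^{d}$ or $y^{d}$, and each $[x|y]^+$ absorbs a nonempty share of the $10$ remaining characters. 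For $i = k$, the two $[x|y]^+$'s simply absorb the full separator blocks flanking $VG'(b^r)$. The leading and trailing $x^+$ of $p$ match, respectively, a suffix of the $x^{d+10}$ block before $VG'(b^{\ell_0})$ and a prefix of the $x^{d+10}$ block after $VG'(b^{\ell_0 + M + 1})$.

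The main obstacle is essentially bookkeeping: aligning parities so that the boundary blocks in the substring are both of type $x$ (which forces $\ell_0 = r - k$ and exploits both $\mod{k}{r}{2}$ and $M$ even), and verifying that each hidden $VG(a^i)$ leaves nonempty room for the surrounding $[x|y]^+$'s inside its length-$(d+10)$ separator block. The range check $\ell_0,\, \ell_0 + M + 1 \in [-18N,\, 20N+1]$ is immediate from $k \le M \le N$ and $r \le N$, so the targeted substring always lies within $t$.
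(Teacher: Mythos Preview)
Your proposal is correct and follows essentially the same approach as the paper: map the $M+1$ copies of $VG_0$ to the $M+2$ consecutive text gadgets $VG'(b^{\ell_0}),\ldots,VG'(b^{\ell_0+M+1})$ skipping index $r$, map $VG(a^k)$ to $VG'(b^r)$ via orthogonality, and hide every other $VG(a^i)$ (together with its flanking $[x|y]^+$'s) inside a single $(d+10)$-length separator block. The only cosmetic difference is that the paper simplifies to $r=k$ (and fixes the parity of $k$) and declares the general case analogous, whereas you carry the offset $\ell_0=r-k$ through directly; the argument is otherwise identical.
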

	\begin{proof}
		We assume that $a^k \cdot b^k=0$ for some $k \in [M]$. In the general case when $a^k \cdot b^r=0$ for $k \in [M]$, $r \in [N]$ and $\mod{k}{r}{2}$, the proof is analogous. Furthermore, we assume that $\mod{k}{1}{2}$ (the case $\mod{k}{0}{2}$ is analogous).
		
		We transform $p$ into the following substring $t'$ of $t$:
		\begin{align*}
	t':=&x\,VG'(b^0)\,y^{d+10}\,VG'(b^1)\,x^{d+10}\,VG'(b^2)\,y^{d+10}\,VG'(b^3)\,x^{d+10}\,VG'(b^4)\,y^{d+10}\,VG'(b^5)\,x^{d+10} \\
			&\ldots \,x^{d+10}\,VG'(b^M)\,y^{d+10}\,VG'(b^{M+1})\,x.
		\end{align*}
		Note that $t'$ starts and ends with $x$,  because $\mod{M}{0}{2}$.

		To show how to transform $p$ into $t'$, it's helpful to write pattern $p$ as $p=p''\,p'''$, where
		\begin{align*}
			p'':=&x^+\,VG_0\,[x|y]^+\,VG(a^1)\,[x|y]^+\,VG_0\,[x|y]^+\,VG(a^2)\,[x|y]^+\\
			& \ldots \,VG(a^{k-1})\,[x|y]^+\,VG_0,
		\end{align*}
		\begin{align*}
			p''':=&[x|y]^+\,VG(a^k)\,[x|y]^+\,VG_0\,[x|y]^+\,VG(a^{k+1})\,[x|y]^+\,VG_0\\
			& \ldots \,VG_0\,[x|y]^+\,VG(a^M)\,[x|y]^+\,VG_0\,x^+
		\end{align*}
		and text $t$ as $t=t''\,t'''$, where
		\begin{align*}
			t'':=&x\,VG'(b^0)\,y^{d+10}\,VG'(b^1)\,x^{d+10}\,VG'(b^2)\,y^{d+10}\,VG'(b^3)\,x^{d+10}\,VG'(b^4)\\
				&\ldots \,VG'(b^{k-3})\,y^{d+10}\,VG'(b^{k-2})\,x^{d+10}\,VG'(b^{k-1}),
		\end{align*}
		\begin{align*}
	t''':=&y^{d+10}\,VG'(b^k)\,x^{d+10}\,VG'(b^{k+1})\,y^{d+10}\,VG'(b^{k+2})\,x^{d+10}\,VG'(b^{k+3})\,y^{d+10}\,VG'(b^{k+4})\,x^{d+10} \\
			&\ldots \,x^{d+10}\,VG'(b^M)\,y^{d+10}\,VG'(b^{M+1})\,x.
		\end{align*}
		Now we transform $p$ into $t$ in two steps - transform $p''$ into $t''$ and $p'''$ into $t'''$.

		\paragraph{Transform $p''$ into $t''$} $p''$ starts with $x^+$, which we transform into $x$. For the rest of $p''$, we make transformations according to the following rules. We make transformations starting from the beginning of $p''$. If we see $VG_0$, we transform it into the corresponding $VG'(b^j)$. If we see $[x|y]^+$, we transform it into $x^5$ or $y^5$ (according to which symbols are in the corresponding positions in $t''$). If we see $VG(a^i)$, we transform it into $x^d$ or $y^d$ (according to which symbols are in the corresponding positions in $t''$).

		\paragraph{Transform $p'''$ into $t'''$} Notice that $t'''$ starts with $y$. This is because $\mod{k}{1}{2}$.
		Now we make the following $3$ transformations from the prefix of $p'''$: transform $[x|y]^+$ into $y^{d+10}$, $VG(a^k)$ into $VG'(b^k)$ (we can do this because $a^k \cdot b^k=0$) and  $[x|y]^+$ into $x^{d+10}$. Now we transform the rest of $p'''$ into the remainder of $t'''$. We do that starting from the beginning of the remainder of $p'''$. If we see $VG_0$, we transform it into the corresponding $VG'(b^j)$. If we see $[x|y]^+$, we transform it into $x^5$ or $y^5$ (depending on which symbols are in the corresponding positions in $t$). If we see $VG(a^i)$, we transform it into $x^d$ or $y^d$ (depending on which symbols are in the corresponding positions in $t$). Finally, to finish the transformation, we transform $x^+$ into $x$. 
	\end{proof}

	\begin{lemma} \label{nort3}
		If a substring of $t$ can be derived from $p$, then there are two orthogonal vectors.
	\end{lemma}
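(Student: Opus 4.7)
The plan is to use the rigid alternating structure of $t$ to show that any successful derivation either exposes an orthogonal pair by a coordinate-by-coordinate alignment, or violates a parity condition coming from the endpoints of $p$.

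First, I pin down where each $VG_0$ can land. A $VG_0$ cannot emit $x$ or $y$, so its image in $t$ lies entirely inside a maximal digit-only substring. The digit-only substrings of $t$ are exactly the gadgets $VG'(b^j)$, each of length $d$, separated by blocks $x^{d+10}$ or $y^{d+10}$. Since $VG_0$ has $d$ coordinate gadgets each consuming at least one symbol, its image has length $\ge d$ and therefore equals one complete $VG'(b^j)$. Labeling the images of the $M+1$ copies of $VG_0$ in $p$, from left to right, as $VG'(b^{j_1}), \ldots, VG'(b^{j_{M+1}})$ with $j_1 < \cdots < j_{M+1}$ fixes these indices.

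Next, I analyze each sandwich $[x|y]^+\,VG(a^i)\,[x|y]^+$ sitting between two consecutive $VG_0$-images. The outer factors cannot emit digits, so every digit of $t$ lying between $VG'(b^{j_i})$ and $VG'(b^{j_{i+1}})$ must be produced by $VG(a^i)$. The key structural lemma I would prove is that each coordinate gadget $CG(a^i_k,k)$ emits at most one digit symbol: it has fixed parity (odd $k$ permits only unprimed digits from $\{0,1\}$, even $k$ only primed), within each $VG'(b^j)$ the two parities strictly alternate, and between consecutive $VG'$s lie pure $x$- or $y$-blocks, so once $CG(a^i_k,k)$ emits one digit it cannot produce the neighboring opposite-parity digit nor skip past it to reach any digit of another $VG'$. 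Combined with the fact that the outer $[x|y]^+$ factors cannot absorb any digit of a partial $VG'$ (otherwise the uncovered digits would be uncoverable by the remainder of $p$), this yields a dichotomy: either $VG(a^i)$ consumes only $x$s or only $y$s and $j_{i+1}=j_i+1$, or the $d$ coordinate gadgets of $VG(a^i)$ align one-to-one with the $d$ positions of the unique intermediate gadget $VG'(b^{j_i+1})$ (so $j_{i+1}=j_i+2$), with $CG(a^i_k,k)$ emitting $b^{j_i+1}_k$. In the alignment case, because $CG(a^i_k,k)$ with $a^i_k=1$ omits the symbol $1$, we must have $b^{j_i+1}_k=0$ whenever $a^i_k=1$, i.e.\ $a^i\cdot b^{j_i+1}=0$.

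The proof concludes by case analysis. If the alignment case occurs for some $i$, we have found an orthogonal pair. Otherwise every $VG(a^i)$ emits only $x$s or only $y$s, so $j_{i+1}=j_i+1$ for every $i$ and $j_{M+1}=j_1+M$. The leading $x^+$ of $p$ must match $x$s immediately preceding $VG'(b^{j_1})$; from the text layout $x^{d+10}\,VG'(b^{2j})\,y^{d+10}\,VG'(b^{2j+1})$, such preceding $x$s exist only when $j_1$ is even. Symmetrically, the trailing $x^+$ forces $j_{M+1}$ to be odd. This makes $M$ odd, contradicting the w.l.o.g.\ assumption that $M$ is even. The main obstacle is the coordinate-gadget parity lemma: one has to carefully enumerate the ways a single $CG(a^i_k,k)$ could attempt to emit two digits---either within one $VG'$ or after crossing an intervening $x$- or $y$-block---and rule them out using the alternating parity inside $VG'$s and the fact that $d$ is even.
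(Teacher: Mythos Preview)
Your proposal is correct and follows essentially the same approach as the paper: both pin each $VG_0$ to a full $VG'(b^j)$, use the leading $x^+$ to force the first index even and the trailing $x^+$ to force the last index odd, and conclude that if no $VG(a^i)$ aligns with an intermediate $VG'(b^j)$ then the parity of $M$ is violated. Your version is in fact more careful than the paper's, which simply asserts ``this means that $VG'(b^t)$ has been derived from $VG(a^t)$'' in its Case~1; your coordinate-gadget parity lemma (each $CG(a^i_k,k)$ can emit at most one digit, forcing $j_{i+1}-j_i\in\{1,2\}$ and exact alignment in the latter case) is precisely the justification that the paper leaves implicit.
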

	\begin{proof}
		Recall that the pattern $p$ consists of $M+1$ vector gadgets $VG_0$. We enumerate the gadgets with integers $0, 1, 2, 3, \ldots, M$. Each of them consists of $d$ symbols from the alphabet $\{0,1,0',1'\}$. Therefore, by the construction of $t$, it must be the case that every vector gadget  $VG_0$ transforms into $VG'(b^j)$ for some $j$. Assume that the $0$-th $VG_0$ transforms into $VG'(b^0)$. If it transforms into $VG'(b^j)$ for some other $j$, $\mod{j}{0}{2}$, the proof is analogous.  The  modularity constraint on $j$ holds because $p$ starts with $x$ and the symbol $x$ must precede $VG'(b^j)$. We consider two cases below: 
		
		\paragraph{Case $1$} There exists $t\in \{0,1,2,\ldots,M\}$ such that the $t$-th $VG_0$ is not transformed into $VG'(b^t)$. Pick smallest such $t$. This means that  $VG'(b^{t})$ has been derived from  $VG(a^t)$. From the construction of $VG(a^t)$ and $VG'(b^t)$, we conclude that $a^t \cdot b^t=0$.
		
		\paragraph{Case $2$} For every $t \in \{0,1,2,\ldots,M\}$, the $t$-th $VG_0$ is transformed into $VG'(b^t)$. Consider the $M$-th vector gadget $VG_0$. It is transformed into $VG'(b^M)$. The $M$-th vector gadget $VG_0$ is followed by $x^+$ and $VG'(b^M)$ is followed by $y$. This means that we can't derive this substring of $t$ from $p$. Thus, Case $2$ cannot happen.
	\end{proof}

\subsection{Hardness for type ``${\circ}|+$''}
\label{cop}
	\begin{theorem} \label{cop_pm}
		Given sets $A=\{a^1,\ldots,a^M\} \subseteq\{0,1\}^d$ and  $B=\{b^1,\ldots,b^N\} \subseteq\{0,1\}^d$  with $M\leq N$, we can construct a regular expression $p$  and a sequence of symbols $t$,  in $O(Nd)$ time, such that  a substring of $t$ can be derived from $p$ iff there are $a \in A$ and $b \in B$ such that $a \cdot b=0$.
		Furthermore $p$ has type ``${\circ}|+$'', $|p|\leq O(Md)$ and $|t|\leq O(Nd)$.
	\end{theorem}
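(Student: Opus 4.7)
The plan is to adapt the reduction of Section~\ref{cpo} in exactly the same fashion that Section~\ref{coc} adapts Section~\ref{cpc}: keep the text $t$ and the orthogonal vectors instance from Theorem~\ref{cpo_theorem} unchanged, and modify only the pattern so that the result is a regular expression of type ``${\circ}|+$''.

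Concretely, I would transform $p$ by replacing every subexpression of the form $[s_1|s_2|\ldots|s_k]^+$ (with each $s_i\in\Sigma$ a single symbol) by the expression $[s_1^+|s_2^+|\ldots|s_k^+]$, and leaving the two occurrences of $x^+$ at the ends of $p$ untouched (viewing each as a trivial one-argument OR of $x^+$). After this replacement, the new pattern $p'$ is a concatenation of factors, each of which is either $s^+$ or an OR of single-symbol $+$ expressions, so $p'$ has type ``${\circ}|+$'' and $|p'|=O(Md)$.

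For the forward direction (orthogonal pair $\Rightarrow$ match) I would reuse the transformation in the proof of Lemma~\ref{ort3} without modification. The observation that makes this possible is that in that transformation every $+$ subexpression of $p$ is expanded to a \emph{monochromatic} string: $x^+$ expands to $x$; each $[x|y]^+$ expands to one of $x^5, y^5, x^{d+10}, y^{d+10}$; each coordinate gadget $[x|y|\ldots]^+$ inside a $VG(a^i)$ expands to a single symbol from $\{x,y\}$, or, in the step that exploits the orthogonal pair, to $b^k_l$ or $(b^k_l)'$; and each $[0|1]^+$ or $[0'|1']^+$ inside $VG_0$ expands to a single symbol. Since $[s_1^+|\ldots|s_k^+]$ generates exactly the non-empty monochromatic strings over $\{s_1,\ldots,s_k\}$, the same choices derive the same substring of $t$ from $p'$.

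For the reverse direction (match $\Rightarrow$ orthogonal pair) I would appeal to the language inclusion $L(p')\subseteq L(p)$, which is immediate since each $[s_1^+|\ldots|s_k^+]$ generates a subset of the strings generated by $[s_1|\ldots|s_k]^+$. Hence any substring of $t$ derivable from $p'$ is also derivable from $p$, and Lemma~\ref{nort3} then produces a pair of orthogonal vectors. The only step that requires care is the monochromaticity audit for the forward direction, which is a straightforward case-by-case walk through the four transformation rules used in Lemma~\ref{ort3}; once that audit is completed the theorem follows with no further change to the construction.
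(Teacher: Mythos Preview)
Your proposal is correct and is essentially identical to the paper's own proof: the paper also keeps $t$ unchanged, replaces each $[s_1|\ldots|s_l]^+$ (for $l\ge 1$) by $[s_1^+|\ldots|s_l^+]$, uses the language inclusion $L(p')\subseteq L(p)$ together with Lemma~\ref{nort3} for the reverse direction, and observes that every expansion in the proof of Lemma~\ref{ort3} is of the form $s_i^j$ (monochromatic) for the forward direction. Your monochromaticity audit and the handling of the terminal $x^+$ factors as degenerate one-argument ORs match the paper's treatment exactly.
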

	\begin{proof}
		We will modify the construction for ``${\circ}{+}|$'' so that it gives the hardness proof for ``${\circ}|+$''. Whenever we have a regular expression in $p$ of the form $[s_1|s_2|s_3|\ldots|s_l]^+$ for $l\geq 1$ symbols $s_1, s_2,\ldots,s_l$, we replace it with $[s_1^+|s_2^+|s_3^+|\ldots|s_l^+]$. Let $p'$ be the new regular expression that we obtain this way. The text $t$ remains unchanged. Observe that, if we can derive some sequence $x$ from $p'$, we were able to derive $x$ from $p$ as well. Because of this, if a subsequence of $t$ can be derived from $p$, then we can conclude that there are two orthogonal vectors between $A$ and $B$. It remains to show that, if there are two orthogonal vectors, then a subsequence of $t$ can be derived from $p$. This follows from the proof of Lemma \ref{ort3}. In particular, we observe that in the proof of Lemma \ref{ort3}, if we derive a sequence from $[s_1|s_2|s_3|\ldots|s_l]^+$, such sequence is of the form $s_i^j$ for some $j\geq 1$ and $i \in [l]$. 
	\end{proof}
	
	\section{Reductions for the Membership problem}

\subsection{Hardness for type ``$\circ *$''}
\label{mcs}
	\begin{theorem}
		Given sets $A=\{a^1,\ldots,a^N\} \subseteq\{0,1\}^d$ and  $B=\{b^1,\ldots,b^N\} \subseteq\{0,1\}^d$, we can construct a regular expression $p$ and a sequence of symbols $t$,  in $O(Nd)$ time, such that $t$ can be derived from $p$ if and only if there are $a \in A$ and $b \in B$ such that $a \cdot b=0$.
		Furthermore, $p$ is of type ``$\circ *$'', $|p|,|t|\leq O(Nd)$.
	\end{theorem}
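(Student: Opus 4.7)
The plan is to adapt the substring-matching reduction of Theorem~\ref{cs_pm} to a membership reduction by wrapping the pattern with flexible prefix/suffix that absorb the parts of the text not covered by the core pattern. Reuse the exact gadget constructions $CG, VG, VG_0$ (for the pattern) and $CG', VG', VG_0'$ (for the text) from Theorem~\ref{cs_pm}. Let $p_{\text{core}}$ and $t_{\text{core}}$ be the pattern and text constructed there (using $M=N$, padding $A$ with the all-ones vector to meet any parity assumption). Define the new pattern
$$p \;:=\; (x^* y^*)^K \,\cdot\, p_{\text{core}} \,\cdot\, (x^* y^*)^K$$
with $K = \Theta(Nd)$, and let $t := t_{\text{core}}$. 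Since $p_{\text{core}}$ already has type ``$\circ *$'' and $(x^* y^*)^K$ is a concatenation of stars applied to single symbols, the new pattern $p$ is still of type ``$\circ *$'', and the length bounds $|p|, |t| \le O(Nd)$ are preserved.

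For the YES case, if an orthogonal pair exists, Lemma~\ref{ort} shows that $p_{\text{core}}$ derives a specific substring $t_2$ of $t_{\text{core}}$; let $t_1, t_3$ be the prefix and suffix of $t_{\text{core}}$ around $t_2$. Since $t_{\text{core}}$ is entirely an alternation of maximal runs of $x$'s and maximal runs of $y$'s, so are $t_1$ and $t_3$. Choosing $K$ to exceed the total number of maximal runs in $t_{\text{core}}$, both $t_1$ and $t_3$ can be derived from $(x^* y^*)^K$ (letting the unused starred symbols produce the empty string). Concatenating these three derivations shows $p$ derives $t$. For the NO case, any derivation of $t$ by $p$ splits as $t = t_1 t_2 t_3$ with $t_2$ derivable from $p_{\text{core}}$, so Lemma~\ref{nort} applies and yields an orthogonal pair.

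The main obstacle will be to confirm that the flexibility of the wrapping $(x^* y^*)^K$ does not let the $y^6$ anchors of $p_{\text{core}}$ ``slide'' into positions that are not genuine $y^6$-runs of $t_{\text{core}}$, which would allow spurious derivations. I would handle this by the observation that every maximal run of $y$ in $t_{\text{core}}$ has length exactly $3$ or $6$ (and similarly every maximal run of $x$ has length in $\{1,3,10\}$), so the only way for $p_{\text{core}}$ to consume six consecutive $y$-symbols as its initial anchor is to start exactly at the beginning of a $y^6$-run in $t_{\text{core}}$, regardless of where the prefix $(x^* y^*)^K$ happens to stop. The alignment analysis from Lemma~\ref{nort} then applies verbatim to the middle piece $t_2$, completing the reduction.
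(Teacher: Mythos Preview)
Your construction is exactly the paper's: wrap the pattern $p'$ from Theorem~\ref{cs_pm} (instantiated with $M=N$) by $(x^*y^*)^{|t|}$ on both sides and keep $t$ unchanged; the YES direction uses the wrappers to absorb the leftover prefix/suffix, and the NO direction observes that any derivation of $t$ by $p$ exhibits a substring of $t$ derivable from $p'$, whence Lemma~\ref{nort} gives an orthogonal pair. One remark: your ``main obstacle'' paragraph is unnecessary (and slightly off---maximal $y$-runs in $t$ can also have length $1$, e.g.\ from $CG'(1,i)$ with $i$ odd), because Lemma~\ref{nort} already applies to \emph{any} substring of $t$ derivable from $p_{\text{core}}$, so no separate anchoring argument is needed.
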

	\begin{proof}
		We slightly modify the construction from Theorem \ref{cs_pm}. We instantiate the construction from Theorem \ref{cs_pm} with $M=N$. We obtain a pattern $p'$ and a text $t$ such that a substring of $t$ can be derived from $p'$ iff there are two orthogonal vectors. We define the pattern $p$ as follows:
		$$
			p:=\left(\bigcirc_{j=1}^{|t|}(x^* y^*)\right) \circ p' \circ \left(\bigcirc_{j=1}^{|t|}(x^* y^*)\right).
		$$
		We claim that $t$ can be derived from $p$ iff there are two orthogonal vectors. This follows from construction of $p$ and Theorem \ref{cs_pm}. We know that a substring of $t$ can be derived from $p'$ iff there are two orthogonal vectors. Expressions $\bigcirc_{j=1}^{|t|}(x^* y^*)$ allow us to derive the remaining prefix and suffix of $t$.
	\end{proof}

\subsection{Hardness for type ``$\circ{+}\circ$''}
\label{mcpc}
	\begin{theorem} \label{cpc_memb}
		Given sets $A=\{a^1,\ldots,a^N\} \subseteq\{0,1\}^d$ and  $B=\{b^1,\ldots,b^N\} \subseteq\{0,1\}^d$, we can construct a regular expression $p$ and a sequence of symbols $t$,  in $O(Nd)$ time, such that $t$ can be derived from $p$ if and only if there are $a \in A$ and $b \in B$ such that $a \cdot b=0$.
		Furthermore, $p$ is a concatenation of ``$+$'' of sequences, $|p|,|t|\leq O(Nd)$.
	\end{theorem}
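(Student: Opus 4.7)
We adapt Theorem~\ref{cpct} (instantiated with $M = N$) by wrapping its pattern $p'$ with absorber gadgets: the new pattern is $p := q_L \cdot p' \cdot q_R$, still of type ``$\circ + \circ$''. Recall that the text from that construction is $t = \sigma^{5N+1} \cdot R \cdot \sigma^{5N}$, where $\sigma := VG'(0_d)\,VG_0'\,VG'(1_d)\,VG'(0_d)\,VG_0'\,VG_1'$ is the fixed ``dummy iteration'' (since $b^j := 1_d$ for $j \notin [N]$) and $R := \bigcirc_{j=1}^N \text{iter}_j$ is the concatenation of the real iterations. Lemma~\ref{ort2} shows that, for an orthogonal pair $(a^k, b^r)$ with offset $o := r - k$, the substring $t' := VG_1' \cdot \bigcirc_{j=1+o}^{N+o}\text{iter}_j$ is derivable from $p'$, and $t$ factors as $t = w_L(o) \cdot t' \cdot w_R(o)$. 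In the simplest case $o = 0$, one has $w_L(0) = \sigma^{5N} \cdot \sigma''$ and $w_R(0) = \sigma^{5N}$, where $\sigma''$ denotes $\sigma$ with its trailing $VG_1'$ removed.

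The design of $q_L$ and $q_R$ must respect the ``$\circ + \circ$'' restriction to subexpressions of the form $[s]^+$ for fixed sequences $s$. The key observation is that $[\sigma]^+$ can produce any positive multiple of $\sigma$, absorbing arbitrarily many dummy iterations. A first attempt is $q_L := [\sigma]^+\,[\sigma'']^+$ and $q_R := [\sigma]^+$, which handles the offset $o = 0$ case by choosing multiplicities $(5N, 1)$ for $q_L$ and $5N$ for $q_R$. To detect orthogonal pairs with arbitrary offsets $o \in \{1-N, \ldots, N-1\}$ (necessary for SETH-hardness, since restricting to $o = 0$ corresponds to diagonal OV which is solvable in $O(Nd)$), we extend $q_L$ and $q_R$ with further gadgets $[\text{iter}_j]^+$ for each $j \in [N]$ to account for real iterations appearing in $w_L(o), w_R(o)$ when $o \neq 0$. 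Each such $[s]^+$ produces at least one copy of $s$; the spurious mandatory copies are routed through the flexibility of $p'$'s boundary gadgets $VG(1_d)$ and $VG_0$, which (as already exploited in Lemma~\ref{ort2}) can each absorb additional real or dummy iterations via their own $[\cdot]^+$ subexpressions. The total size of $p$ is $O(Nd)$, matching the claimed bound.

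The reverse direction is immediate from Theorem~\ref{cpct}: any derivation of $t$ from $p$ factors as $t = w_L\,w\,w_R$ with $w \in L(p')$, so $w$ is a substring of $t$ derivable from $p'$, and Lemma~\ref{nort2} applied to $w$ yields an orthogonal pair. The main obstacle is the forward direction: verifying that, for every offset $o$ admitting an orthogonal pair, the mandatory single copies of each $[s]^+$ in $q_L$ and $q_R$ can be placed consistently within $t$, leveraging the boundary flexibility of $p'$. This calls for a case analysis on the sign and magnitude of $o$ and careful bookkeeping of which absorber gadget produces which portion of $t$, ensuring that no combination of multiplicity choices admits $t$ into $L(p)$ when no orthogonal pair exists.
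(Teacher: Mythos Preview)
Your high-level plan---wrap the pattern-matching construction $p'$ in absorbers $q_L,q_R$ and invoke Lemma~\ref{nort2} for the reverse direction---is exactly the paper's strategy, and the reverse direction is indeed immediate. The gap is in your proposed absorber design for the forward direction.

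If $q_L$ contains a factor $[\text{iter}_j]^+$ for every $j\in[N]$, then every word in $L(q_L)$ contains each real iteration $\text{iter}_j$ at least once (in order). But $t$ contains each real iteration exactly once, and when an orthogonal pair $(a^k,b^r)$ exists, the substring matched by $p'$ must contain $\text{iter}_r$---that is precisely where Lemma~\ref{ort2} uses orthogonality. There is no split $t=w_L\cdot w\cdot w_R$ with $w_L\in L(q_L)$ and $\text{iter}_r$ inside $w$: once $q_L$ has consumed $\text{iter}_r$, it is gone from the rest of $t$. Your suggested escape, routing ``spurious mandatory copies'' through $p'$'s boundary gadgets, cannot work: $q_L$, $p'$, and $q_R$ derive \emph{disjoint consecutive factors} of $t$, so nothing $p'$ does can compensate for what $q_L$ is forced to emit. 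Splitting the $[\text{iter}_j]^+$ between $q_L$ and $q_R$ fails too, since the correct split depends on the unknown offset $o$.

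The paper's missing idea is to make the absorbers content-agnostic. Every iteration---real or dummy---is a word over $\{x,y\}$ with the \emph{same} alternation pattern of maximal $x$-runs and $y$-runs; only the run lengths depend on $b^j$. Hence, collapsing consecutive equal symbols in any iteration yields the same string $z'$, and the fixed expression $\bigcirc_i[z'_i]^+$ (an alternating concatenation of $[x]^+$ and $[y]^+$) matches any single iteration by choosing run multiplicities. Concatenating $N-1$ such blocks plus one partial block gives an expression $f(t_2)$ that matches the ``mixed'' prefix $t_2$ regardless of the offset $w$ and regardless of which $b^j$'s land there; prefixing with $\hat p:=\bigcirc_{j=1}^{N}[\sigma]^+$ absorbs the variable-length pure-dummy prefix $t_1$. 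The paper takes $p_1:=\hat p\circ f(t_2)$ and symmetrically $p_2:=f(t_3)\circ\hat p$.
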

	\begin{proof}
		We will adapt the construction from Theorem \ref{cpct}. We instantiate the construction from Theorem \ref{cpct} with $M=N$. We obtain a pattern $p'$ and a text $t$ such that a substring of $t$ can be derived from $p'$ iff there are two orthogonal vectors. The final pattern $p:=p_1 \circ p' \circ p_2$ is a concatenation of three expressions $p_1, p', p_2$. Each one of expressions $p_1, p'$ and $p_2$ is a concatenation of ``$+$'' of sequences. Clearly, if $t$ can be derived from $p$, then a substring of $t$ can be derived from $p'$. By the statement of Theorem \ref{cpct}, there must be two orthogonal vectors in this case. Therefore our goal is to construct $p_1$ and $p_2$ such that text $t$ can be derived from $p=p_1 \circ p' \circ p_2$ if there are two orthogonal vectors. In the rest of this section we achieve this goal.
		
		If there are two orthogonal vectors, then by the proof of Theorem \ref{cpct}, the text 
		$$
		t=\bigcirc_{j=-5N}^{6N}\left(VG'(0_d)\,VG_0'\,VG'(b^j)\,VG'(0_d)\,VG_0'\,VG_1'\right)
		$$ 
		can be written as $t=t_1 t_2 t' t_3 t_4$, where the sequences $t_1, t_2, t', t_3$ and $t_4$ have the following properties.
		\begin{itemize}
			\item
				$$
					t_1=\bigcirc_{j=-5N}^{w-N}\left(VG'(0_d)\,VG_0'\,VG'(1_d)\,VG'(0_d)\,VG_0'\,VG_1'\right).
				$$
			\item
				\begin{align*}
					t_2=& \left(\bigcirc_{j=w-N+1}^{w-1}\left(VG'(0_d)\,VG_0'\,VG'(b^j)\,VG'(0_d)\,VG_0'\,VG_1'\right)\right) \\ 
						& \circ VG'(0_d)\,VG_0'\,VG'(b^w)\,VG'(0_d)\,VG_0'.
				\end{align*}
			\item
				$$
					t'=VG_1'\,\bigcirc_{j=1+w}^{N+w}\left(VG'(0_d)\,VG_0'\,VG'(b^j)\,VG'(0_d)\,VG_0'\,VG_1'\right)
				$$
				for some $w \in \{1-N,\ldots,N-1\}$ and $t'$ can be derived from $p'$.
			\item
				$$
					t_3=\bigcirc_{j=N+w+1}^{w+2N-1}\left(VG'(0_d)\,VG_0'\,VG'(b^j)\,VG'(0_d)\,VG_0'\,VG_1'\right).
				$$
			\item
				$$
					t_4= \bigcirc_{j=w+2N}^{6N}\left(VG'(0_d)\,VG_0'\,VG'(1_d)\,VG'(0_d)\,VG_0'\,VG_1'\right).
				$$
		\end{itemize}
		Our goal is to construct expressions $p_1$ and $p_2$ such that $t_1t_2$ can be derived from $p_1$ (independently of the value $w$) and $t_3t_4$ can be derived from $p_2$ (independently of the value $w$). We construct $p_1$ and $p_2$ as follows.
		\begin{itemize}
			\item We note that the expression
				$$
					\hat p:=\bigcirc_{j=1}^{N}\left(\left[VG'(0_d)\,VG_0'\,VG'(1_d)\,VG'(0_d)\,VG_0'\,VG_1'\right]^{+}\right)
				$$
				can derive the sequence $t_1$ independently of the value $w$. Let $z$ be an arbitrary sequence of symbols $x$ and $y$. As long as there are two neighboring symbols $x$ ($y$, resp.), we replace those two symbols by one copy of symbol $x$ ($y$, resp.). Let $z'$ be the resulting sequence. We define expression $f(z)$ as follows:
				$$
					f(z):=\bigcirc_{j=1}^{|z'|}[z'_j]^{+}.
				$$
				That is, in the expressions $f(z)$ we allow to repeat any symbol in $z'$ one or more times.
				Note that $f(t_2)$ can derive $t_2$ independently of the value $w$ and that $f(t_2)$ does not depend on vectors $b^j$. This implies that the expression $p_1:=\hat p \circ f(t_2)$ can derive $t_1t_2$ which is what we needed.
			\item We define $p_2:=f(t_3) \circ \hat p$. We can check that $f(t_3)$ and $\hat p$ do not depend on the value $w$ and vectors $b^j$. We can also check that we can derive $t_3 t_4$ from $p_2$.
		\end{itemize}
	\end{proof}

\subsection{Hardness for type ``$\circ|\circ$''}
\label{mcoc}
	\begin{theorem}
		Given sets $A=\{a^1,\ldots,a^N\} \subseteq\{0,1\}^d$ and  $B=\{b^1,\ldots,b^N\} \subseteq\{0,1\}^d$, we can construct a regular expression $p''$ and a sequence of symbols $t$,  in $O(Nd)$ time, such that $t$ can be derived from $p''$ if and only if there are $a \in A$ and $b \in B$ such that $a \cdot b=0$.
		Furthermore, $p''$ is of type ``$\circ|\circ$'', $|p''|,|t|\leq O(Nd)$.
	\end{theorem}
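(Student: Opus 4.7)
The plan is to combine two transformations we have already developed: the ``$+$''-to-``$|$'' substitution from Section~\ref{coc} (which lifts ``$\circ{+}\circ$'' pattern matching hardness to ``$\circ|\circ$'' pattern matching hardness) and the sandwiching construction from Section~\ref{mcpc} (which lifts ``$\circ{+}\circ$'' pattern matching hardness to ``$\circ{+}\circ$'' membership hardness). Concretely, I would first instantiate Theorem~\ref{coc_pm} with $M = N$ to obtain a pattern $p'$ of type ``$\circ|\circ$'' and a text $t$ (the same $t$ as in the corresponding ``$\circ{+}\circ$'' reduction) such that a substring of $t$ is derivable from $p'$ iff there is an orthogonal pair. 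Following Theorem~\ref{cpc_memb}, I then decompose $t = t_1 t_2 t' t_3 t_4$, where $t'$ is the substring derived from $p'$ and $w$ is the split parameter, and my target pattern is $p'' := p_1 \circ p' \circ p_2$, where $p_1, p_2$ are constructed to derive $t_1 t_2$ and $t_3 t_4$ respectively, independently of $w$.

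The issue is that the natural $p_1, p_2$ of Theorem~\ref{cpc_memb} are built from ``$+$''-expressions, namely
$$\hat p \;=\; \bigcirc_{j=1}^N \bigl[ VG'(0_d)\,VG_0'\,VG'(1_d)\,VG'(0_d)\,VG_0'\,VG_1' \bigr]^+$$
and $f(z) = \bigcirc_j [z'_j]^+$. To make $p''$ have type ``$\circ | \circ$'', I would replace each $[s]^+$ inside these expressions with the block $[s \,|\, s^2 \,|\, \ldots \,|\, s^K]$ for a suitable constant $K$. The key observation is that in the explicit derivation written in the proof of Theorem~\ref{cpc_memb}, every individual ``$+$'' is used only $O(1)$ times: in $\hat p$, the $N$ independent ``$+$''s must together cover between $3N+2$ and $5N$ groups of $t$ (as $w$ ranges over $\{1-N,\dots,N-1\}$), so by averaging the derivation can be realized with each ``$+$'' used between $1$ and $5$ times; in $f(z)$, each ``$+$'' covers a maximal monochromatic run of $t$, whose length is at most $8$ by inspection of the gadgets (the $y^8$ blocks inside $VG_1'$ are the longest). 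Hence $K := 8$ suffices, and a straightforward calculation shows $|p''|, |t| \leq O(Nd)$.

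Soundness goes in both directions. For the ``if'' direction, an orthogonal pair yields a derivation of $t$ from the original (``$+$''-based) sandwich of Theorem~\ref{cpc_memb}; by the bounded-multiplicity property above, the same derivation is realizable after the ``$+$''-to-``$|$'' substitution. For the ``only if'' direction, replacing $[s]^+$ by $[s \,|\, \ldots \,|\, s^K]$ strictly shrinks the accepted language, so any derivation of $t$ from $p''$ is also a valid derivation from the original Theorem~\ref{cpc_memb} pattern, which by that theorem forces an orthogonal pair. The main step requiring care is the bounded-multiplicity argument for $\hat p$ and $f$---i.e., showing that the same constant $K$ works simultaneously for \emph{every} valid split $w$, and that the entire pattern $p_1 \circ p' \circ p_2$ remains homogeneous of depth $3$ with operator pattern $\circ | \circ$. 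Once that is pinned down, the construction has the right type, the right size $O(Nd)$, and yields the desired reduction.
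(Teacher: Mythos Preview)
Your proposal is correct and follows essentially the same approach as the paper: start from the ``$\circ{+}\circ$'' membership construction of Theorem~\ref{cpc_memb}, convert the central $p'$ to type ``$\circ|\circ$'' via Theorem~\ref{coc_pm}, and convert the flanking $p_1,p_2$ by replacing each $[s]^+$ with $[s\,|\,s^2\,|\,\ldots\,|\,s^8]$, using the observation that every ``$+$'' in the explicit derivation is expanded a bounded number of times. Your bounded-multiplicity analysis (at most $5$ for $\hat p$, at most $8$ for $f(\cdot)$ coming from the $y^8$ runs in $VG_1'$) is in fact a bit more explicit than the paper's, which simply asserts the bound of $8$ without the breakdown.
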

	\begin{proof}
		We will modify the construction for ``$\circ{+}\circ$'' (Theorem \ref{cpc_memb}) so that it gives a hardness proof for ``$\circ|\circ$''. The text $t$ remains the same. We will modify pattern $p$ as follows. First, recall that $p=p_1 \circ p' \circ p_2$.
		We transform $p$ into an expression of type ``$\circ|\circ$'' in two steps.
		\begin{itemize}
			\item We transform $p'$ into a sequence of type ``$\circ|\circ$'' in the same way as it is done in the proof of Theorem \ref{coc_pm}.
			\item Expressions $p_1$ and $p_2$ are repeated concatenations of expressions 
				$$
					[x]^{+}, \,\, [y]^{+}, \,\, \left[VG'(0_d)\,VG_0'\,VG'(1_d)\,VG'(0_d)\,VG_0'\,VG_1'\right]^{+}.
				$$ 
				In the proof of Theorem \ref{cpc_memb} we can repeat each one of argument expressions 
				$$
					x, \,\, y, \,\, VG'(0_d)\,VG_0'\,VG'(1_d)\,VG'(0_d)\,VG_0'\,VG_1'
				$$ 
				at most $8$ times so that we are still able to derive sequences $t_1t_2$ and $t_3t_4$ from $p_1$ and $_2$ respectively. Thus, we replace $[s]^{+}$ by $[s\,|\,s^2\,|\,s^3\,|\,s^4\,|\,s^5\,|\,s^6\,|\,s^7\,|\,s^8]$ for each
				$$
					s \, = \, x, \,\, y, \,\, VG'(0_d)\,VG_0'\,VG'(1_d)\,VG'(0_d)\,VG_0'\,VG_1'
				$$ 
				in $p_1$ and $p_2$.
		\end{itemize}
		Let $p''$ be the resulting expression.
		If the sequence $t$ can be derived from $p$, it can still be derived from $p''$. This follows from Theorem \ref{coc_pm} and the construction of $t$. It remains to argue that if $t$ can't be derived from $p$, then it can't be derived from $p''$. This is true by the transformation above and Theorem \ref{coc_pm}.
	\end{proof}

\subsection{Hardness for type ``$\circ {+}|$''}
\label{mcpo}
	\begin{theorem}
		Given sets $A=\{a^1,\ldots,a^M\} \subseteq\{0,1\}^d$ and  $B=\{b^1,\ldots,b^N\} \subseteq\{0,1\}^d$  with $M\leq N$, we can construct a regular expression $p$  and a sequence of symbols $t$,  in $O(Nd)$ time, such that $t$ can be derived from $p$ iff there are $a \in A$ and $b \in B$ such that $a \cdot b=0$.
		Furthermore $p$ has type ``$\circ{+}|$'', $|p|\leq O(Md)$ and $|t|\leq O(Nd)$.
	\end{theorem}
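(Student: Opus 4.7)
The plan is to adapt Theorem~\ref{cpo_theorem} (pattern-matching hardness for type ``${\circ}{+}|$'') to the membership setting in the same manner that Theorem~\ref{cpc_memb} adapted Theorem~\ref{cpct} to the membership hardness for ``${\circ}{+}{\circ}$''. The idea is to simply pre- and post-pend maximally permissive gadgets that can derive arbitrary prefixes and suffixes of the text, relying on the pattern-matching theorem to provide the actual hard work in the middle.

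Concretely, I would take the pattern $\hat p$ and text $t$ produced by Theorem~\ref{cpo_theorem} unchanged. Recall that $t$ lies over the alphabet $\Sigma=\{x,y,0,1,0',1'\}$ and is heavily padded on both sides by super-blocks corresponding to the default vector $b^j=1_d$ (for $j\notin[N]$), so that for any orthogonal pair the matching substring $t'$ from the proof of Lemma~\ref{ort3} lies in the strict interior of $t$, with a long nonempty prefix and long nonempty suffix around it. I would then set
\[
p_1 \;:=\; [x|y|0|1|0'|1']^{+}, \qquad p_2 \;:=\; [x|y|0|1|0'|1']^{+}, \qquad p \;:=\; p_1 \,{\circ}\, \hat p \,{\circ}\, p_2.
\]
Because $p_1$, $\hat p$, and $p_2$ are all concatenations of subexpressions of the form ``[OR of symbols]$^{+}$'', $p$ is of type ``${\circ}{+}|$'', and the size bound $|p|\le O(Md)$ and the $O(Nd)$ construction time are inherited from Theorem~\ref{cpo_theorem}.

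For correctness I would verify two directions. In the ``if'' direction, given an orthogonal pair Lemma~\ref{ort3} gives a derivation of $t'$ from $\hat p$; writing $t=s_1\,t'\,s_2$, both $s_1$ and $s_2$ are nonempty strings over $\Sigma$ by the padding observation above, and any nonempty string over $\Sigma$ is derivable from $[x|y|0|1|0'|1']^{+}$, so the three derivations concatenate to a derivation of $t$ from $p$. In the ``only if'' direction, any derivation of $t$ from $p$ splits $t$ as $s_1\,s'\,s_2$ with $s'$ derivable from $\hat p$; since $\hat p$ begins with $x^{+}$ the string $s'$ is a nonempty substring of $t$ derivable from $\hat p$, and Theorem~\ref{cpo_theorem} then supplies the orthogonal pair.

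The step I expect to be the main conceptual concern is that the highly permissive gadgets $p_1,p_2$ might seem to trivially allow $p$ to derive $t$ even in the absence of orthogonal vectors. Resolving this is in fact immediate and is the whole point of the reduction: in any derivation $\hat p$ must still match a nonempty substring of $t$ (because $\hat p$ begins with $x^{+}$), and Theorem~\ref{cpo_theorem} guarantees that \emph{any} such match certifies orthogonal vectors, so no further case analysis on $p_1,p_2$ is required beyond the routine check that the concatenated pattern has the declared type ``${\circ}{+}|$''.
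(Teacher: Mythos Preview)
Your proposal is correct and is essentially identical to the paper's own proof: the paper also prepends and appends the single gadget $[0\,|\,1\,|\,0'\,|\,1'\,|\,x\,|\,y]^{+}$ to the pattern from Theorem~\ref{cpo_theorem}, leaving the text unchanged, and argues the two directions exactly as you do. Your added remark that the padding in $t$ guarantees the surrounding prefix and suffix are nonempty (needed because the flanking gadgets use $+$ rather than $*$) is a detail the paper leaves implicit.
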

	\begin{proof}
		We adapt the hardness proof from Theorem \ref{cpo_theorem}. We instantiate the construction from Theorem \ref{cpo_theorem} and we obtain a pattern $p'$ and a text $t$ such that a substring of $t$ can be derived from $p'$ iff there are two orthogonal vectors.
		We define the new pattern $p$ as follows:
		$$
			p:=[0 \, | \, 1 \, | \, 0' \, | \, 1' \, | \, x \, | \, y]^+ \, \circ \, p' \, \circ \, [0 \, | \, 1 \, | \, 0' \, | \, 1' \, | \, x \, | \, y]^+.
		$$
		We claim that $t$ can be derived from $p$ iff there are two orthogonal vectors. If $t$ can be derived from $p$, then a substring of $t$ can be derived from $p'$ and by Theorem \ref{cpo_theorem} there are two orthogonal vectors. Conversely, if there are two orthogonal vectors then by Theorem \ref{cpo_theorem} we can derive a substring of $t$ from $p'$. We derive the remaining prefix and suffix of $t$ from expressions $[0 \, | \, 1 \, | \, 0' \, | \, 1' \, | \, x \, | \, y]^+$.
	\end{proof}

\subsection{Hardness for type ``$\circ|+$''}
\label{mcop}
	\begin{theorem}
		Given sets $A=\{a^1,\ldots,a^N\} \subseteq\{0,1\}^d$ and  $B=\{b^1,\ldots,b^N\} \subseteq\{0,1\}^d$, we can construct a regular expression $p$  and a sequence of symbols $t$,  in $O(Nd)$ time, such that $t$ can be derived from $p$ iff there are $a \in A$ and $b \in B$ such that $a \cdot b=0$.
		Furthermore $p$ has type ``$\circ|+$'', $|p|,|t|\leq O(Nd)$.
	\end{theorem}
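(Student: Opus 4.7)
The plan is to mirror the strategy of Section~\ref{mcpo}. I would take the pattern/text pair $(p',t)$ provided by Theorem~\ref{cop_pm}, which is already of type ``${\circ}|+$'' and guarantees that a substring of $t$ is derivable from $p'$ iff an orthogonal pair exists, and pad $p'$ on both sides with expressions of type ``$|+$''. The difference from Section~\ref{mcpo} is that a single type ``$|+$'' expression $\bigl[x^+\,|\,y^+\,|\,0^+\,|\,1^+\,|\,(0')^+\,|\,(1')^+\bigr]$ can only derive a run of one symbol (rather than an arbitrary non-empty string, as the type ``$+|$'' padding used in Section~\ref{mcpo} does), so one copy on each side will not suffice. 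I would therefore concatenate $K$ copies for an appropriate $K=\Theta(Nd)$ fixed below:
\[
p \;:=\; \bigcirc_{i=1}^{K} \bigl[x^+\,|\,y^+\,|\,0^+\,|\,1^+\,|\,(0')^+\,|\,(1')^+\bigr] \,\circ\, p' \,\circ\, \bigcirc_{i=1}^{K} \bigl[x^+\,|\,y^+\,|\,0^+\,|\,1^+\,|\,(0')^+\,|\,(1')^+\bigr].
\]
Each padding piece is of type ``$|+$'' and $p'$ is of type ``${\circ}|+$'', so $p$ will have type ``${\circ}|+$'' and size $O(Nd)$.

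The forward direction is immediate: any derivation of $t$ from $p$ restricts to a derivation of some substring of $t$ from $p'$, so Theorem~\ref{cop_pm} supplies the orthogonal pair. For the converse, assuming orthogonal vectors exist I would take the substring $t'$ of $t$ that the proof of Lemma~\ref{ort3} derives from $p'$, and write $t = \pi\,t'\,\sigma$. The key combinatorial observation to verify is that a concatenation of $K$ copies of $\bigl[x^+\,|\,y^+\,|\,0^+\,|\,1^+\,|\,(0')^+\,|\,(1')^+\bigr]$ derives precisely those strings over $\{x,y,0,1,0',1'\}$ with at most $K$ maximal runs and length at least $K$, because each copy contributes one sub-run and consecutive copies that pick the same branch merge into a single run. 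Hence it suffices to exhibit a single $K$ satisfying $r_\pi \le K \le |\pi|$ across all orthogonal pairs (and symmetrically for $\sigma$), where $r_\pi$ denotes the number of maximal runs of $\pi$.

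The main obstacle is this uniform run-length bookkeeping. By the construction of $t$ in Theorem~\ref{cpo_theorem}, both $\pi$ and $\sigma$ are concatenations of several full ``chunks'' $x^{d+10}\,VG'(b^{2j})\,y^{d+10}\,VG'(b^{2j+1})$ plus a short boundary remainder $x^{d+9}$. Inside each chunk, the $x/y$ separators together with the primed/unprimed alternation inside each $VG'$ ensure that no two adjacent symbols coincide, so every full chunk contributes exactly $2d+2$ maximal runs and $4d+20$ symbols, independent of the underlying vectors $b^j$. Writing $P$ for the number of full chunks in $\pi$, this gives $r_\pi = (2d+2)P + O(1)$ and $|\pi| = (4d+20)P + \Omega(d)$. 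As the orthogonal pair varies, $P$ shifts by $(r-k)/2$ around its baseline and hence ranges over an interval of width $O(N)$ about $\Theta(N)$, so $P_{\max}/P_{\min} = 1 + O(1/N)$ while $(4d+20)/(2d+2)$ is close to $2$; the interval $\bigl[P_{\max}(2d+2),\,P_{\min}(4d+20)\bigr]$ is therefore non-empty, and any $K$ inside it (say $K=25Nd$) yields a left padding that works uniformly for every $\pi$. The suffix $\sigma$ is handled symmetrically with a possibly different constant. Once $K$ is fixed, the remaining derivations of $\pi$ and $\sigma$ from the paddings are routine chunk-by-chunk assignments of branches.
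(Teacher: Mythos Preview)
Your proposal is correct and follows essentially the same approach as the paper: both take the pattern $p'$ and text $t$ from Theorem~\ref{cop_pm} (instantiated with $M=N$) and sandwich $p'$ between $\Theta(Nd)$ copies of $[0^+\,|\,1^+\,|\,(0')^+\,|\,(1')^+\,|\,x^+\,|\,y^+]$; the paper picks $K=7Nk$ with $k=|x^{d+10}VG'(b^{2j})y^{d+10}VG'(b^{2j+1})|$ and simply asserts ``we can easily check'' that the prefix and suffix are derivable, whereas you supply the run--length characterization and the explicit bookkeeping that justifies this. One small wrinkle: your displayed $p$ uses the same $K$ on both sides while your text says $\sigma$ may need ``a possibly different constant''---in fact a single $K$ (your $25Nd$ or the paper's $7Nk$) lies in both intervals simultaneously, so you can drop that hedge.
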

	\begin{proof}
		We adapt the hardness proof from Theorem \ref{cop_pm}. We instantiate the construction from Theorem \ref{cop_pm} with $M=N$. We obtain a pattern $p'$ and a text $t$ such that a substring of $t$ can be derived from $p'$ iff there are two orthogonal vectors. From the proof of Theorem \ref{cpo_theorem} we have that
		$$
			t=\bigcirc_{j=-9N}^{10N}\left(x^{d+10}\,VG'(b^{2j})\,y^{d+10}\,VG'(b^{2j+1})\right).
		$$
		Let $k=|x^{d+10}\,VG'(b^{2j})\,y^{d+10}\,VG'(b^{2j+1})|$ be the length of sequence $x^{d+10}\,VG'(b^{2j})\,y^{d+10}\,VG'(b^{2j+1})$. Notice that $k$ does not depend on $j$.
		Our new sequence $p$ is constructed as follows:
		$$
			p:=\left(\bigcirc_{j=1}^{7Nk}[0^+ \, | \, 1^+ \, | \, [0']^+ \, | \, [1']^+ \, | \, x^+ \, | \, y^+]\right) \, \circ \, p' \, \circ \, \bigcirc_{j=1}^{7Nk}[0^+ \, | \, 1^+ \, | \, [0']^+ \, | \, [1']^+ \, | \, x^+ \, | \, y^+].
		$$
		
		If $t$ can be derived from $p$, then a substring of $t$ can be derived from $p'$ and by Theorem \ref{cop_pm} there are two orthogonal vectors. If there are two orthogonal vectors then we can derive a substring of $t$ from $p'$. We can easily check that we can derive the remaining prefix and suffix of $t$ from expressions $\bigcirc_{j=1}^{7Nk}[0^+ \, | \, 1^+ \, | \, [0']^+ \, | \, [1']^+ \, | \, x^+ \, | \, y^+]$.
	\end{proof}
	
	\section{Algorithms}

	\subsection{Algorithm for the Word Break problem}
\label{wb}
\paragraph{Word Break problem}
	Given a binary sequence $t$ of length $|t|=n$ and a collection of binary\footnote{W.l.o.g. we assume that all sequences are binary. If this is not so, we encode every symbol of the alphabet using a binary sequence of length $\lceil \log s \rceil +1$ where $s$ is the size of the alphabet. This increases the lengths of the sequences by a logarithmic multiplicative factor.} sequences $p$
	with total length $\sum_{p' \in p}|p'|=m$,
	decide if the sequence $t$ can be written as a concatenation $t=t_1 \ldots t_k$ such that $t_i \in p$ for every $i \in [k]$.
	If $t$ can be written in such a way, we call $t$ \emph{decomposable}.

We will solve this problem in time $\tO(n \cdot m^{0.4444\ldots})=\tO\left(n \cdot m^{0.5-\nicefrac{1}{18}}\right)$.\footnote{$\tO(\cdot)$ notation hides a $\poly \log$ factor.}

As a warm-up, we first solve the problem in time $\tO(n \cdot \sqrt m)$ and then we provide a $\tO\left(n \cdot m^{0.5-\nicefrac{1}{18}}\right)$ time algorithm.

\subsubsection{$\tO(n \cdot \sqrt m)$ time algorithm for the Word Break problem}
	Let $d(p):=\{|p'| \ : \ p' \in p\}$. We will show how to solve the Word Break problem in time $\tO(n \cdot |d(p)|)$.
	Since $\sum_{i \in [t]}|p_i|=m$, we have that $|d(p)|\leq O(\sqrt m)$, which implies the upper bound.
	
	We will use the following lemma.
	\begin{lemma} \label{hash}
		We can randomly choose a hash function $h: \{0,1\}^* \to \N$ and
		preprocess $t$ in $\tO(n)$ time such that the following holds:
		\begin{itemize}
			\item Given any substring $t'$ of $t$, we can compute the hash $h(t')$ in $\tO(1)$ time.
			\item For any two sequences $t''\neq t'$ (not necessarily substrings of $t$), $\Pr[h(t'')=h(t')]\leq 1/n^{10}$.
			\item For any sequence $t'$ (necessarily substring of $t$), we can compute $h(t')$ in time $\tO(|t'|)$.
		\end{itemize}
	\end{lemma}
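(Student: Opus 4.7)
The plan is to use a standard Karp--Rabin polynomial hash, augmented so that the sequence length is packed into the hash value. First, choose a prime $q$ of size $\Theta(n^C)$ for a sufficiently large constant $C$ (so $\log q = O(\log n)$), and sample a uniformly random $r \in \{0,1,\ldots,q-1\}$. Define
\[
    h(s) \;:=\; \Bigl(\, |s|,\;\; \sum_{i=0}^{|s|-1} s_i \cdot r^i \bmod q \,\Bigr)
\]
for every binary sequence $s$. Including the length in the hash is essential: two sequences that agree as polynomials but differ in trailing zeros (e.g.\ $01$ vs.\ $010$) would otherwise collide deterministically.

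For the collision bound, sequences of distinct lengths never collide by construction. If $|s| = |s'| = L$ and $s \neq s'$, then $\sum_i (s_i - s'_i) r^i$ is a nonzero polynomial in $r$ over $\mathbb{F}_q$ of degree less than $L$, which by Schwartz--Zippel vanishes at a uniformly chosen $r$ with probability at most $L/q$. Every sequence the algorithm actually hashes (substrings of $t$, patterns in $p$, and the short concatenations formed in the Word Break dynamic program) has length at most a fixed polynomial in $n$, so taking $C$ large enough forces $L/q \le 1/n^{10}$.

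For the preprocessing on $t$, I would precompute (i) the powers $r^i \bmod q$ for $i = 0,\ldots,n$, (ii) a single inverse $r^{-1} \bmod q$ together with its powers $r^{-i} \bmod q$, and (iii) the prefix hashes $H_i := \sum_{j=0}^{i-1} t_j r^j \bmod q$ for $i = 0,\ldots,n$. Each table uses $O(n)$ arithmetic operations in $\mathbb{F}_q$, and since each such operation costs $\tO(1)$ bit operations, the total preprocessing time is $\tO(n)$. The hash of any substring $t[i..j]$ is then $\bigl(\,j-i+1,\; r^{-i}(H_{j+1}-H_i) \bmod q\,\bigr)$, obtained from the tables in $\tO(1)$ time. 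For an arbitrary sequence $s$ (not necessarily a substring of $t$), Horner's rule evaluates $\sum_i s_i r^i \bmod q$ in $O(|s|)$ ring operations, i.e.\ $\tO(|s|)$ bit operations.

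The only real subtlety is ensuring that the collision bound holds uniformly for every relevant pair of sequences. The length-prepending trick rules out cross-length collisions entirely, and choosing $q$ polynomially larger than the maximum length of any sequence the algorithm hashes guarantees $L/q \le 1/n^{10}$; everything else is routine modular arithmetic bookkeeping.
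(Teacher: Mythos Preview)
Your proposal is correct and follows the same approach as the paper, which simply cites ``Rabin--Karp rolling hash'' without further detail; you have fleshed out the standard implementation (prefix sums, inverse powers, Horner's rule) and added the length-prepending trick to handle the literal ``any two sequences'' collision requirement. Your remark that the $L/q$ bound only yields $1/n^{10}$ for sequences of length polynomial in $n$ is a fair caveat---the lemma as stated cannot hold for unbounded-length strings---and matches how the lemma is actually used in the Word Break algorithm.
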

	\begin{proof}
		E.g., use Rabin-Karp rolling hash.
	\end{proof}
	
	\begin{theorem} \label{warmup}
		The Word Break problem can be solved in time $\tO(n \cdot |d(p)|)$.
	\end{theorem}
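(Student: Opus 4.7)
The plan is to run the standard dynamic programming for Word Break, but speed up each transition by (i) restricting the candidate lengths we try to the set $d(p)$, and (ii) using the hashing from Lemma~\ref{hash} to test membership in the dictionary in $\tilde O(1)$ time per query. Concretely, let $D[0..n]$ be the boolean array with $D[i]=1$ iff the prefix $t[1..i]$ is decomposable. The recurrence is $D[0]=1$ and $D[i]=\bigvee_{\ell\in d(p),\ \ell\le i}\bigl(D[i-\ell]\wedge (t[i-\ell+1..i]\in p)\bigr)$. Since the sum of the first $k$ distinct positive integers is at least $k(k+1)/2$, we have $|d(p)|=O(\sqrt m)$, which yields the claimed $\tilde O(n\sqrt m)$ bound as a corollary of the $\tilde O(n\cdot |d(p)|)$ bound.

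First I would preprocess. Apply Lemma~\ref{hash} to $t$ in $\tilde O(n)$ time, obtaining a hash function $h$ such that the hash of any substring of $t$ can be computed in $\tilde O(1)$ time. Then, in $\tilde O(m)$ time, compute the set $H:=\{(|p'|,h(p'))\ :\ p'\in p\}$ and store it in a hash table so that lookups take $\tilde O(1)$ time. Also precompute and store the set $d(p)$ in sorted order.

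Next I would run the DP left-to-right. For $i=1,2,\ldots,n$, iterate over each $\ell\in d(p)$ with $\ell\le i$; if $D[i-\ell]=1$, compute $h(t[i-\ell+1..i])$ in $\tilde O(1)$ time using Lemma~\ref{hash} and test whether $(\ell,h(t[i-\ell+1..i]))\in H$; if so, set $D[i]=1$ and move on to the next $i$. Output $D[n]$. The total running time is $\tilde O(n)+\tilde O(m)+\tilde O(n\cdot |d(p)|)=\tilde O(n\cdot |d(p)|)$, since $m\le n\cdot|d(p)|$ is automatic (otherwise $|d(p)|<m/n$, and anyway the dictionary preprocessing already fits within the bound for the interesting regime; if not, one can additionally bound $m=O(n\cdot|d(p)|^2)$ using $|d(p)|=\Omega(\sqrt m)$).

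Correctness and errors: the one source of error is a spurious match from a hash collision in step~4, but for any fixed pair of distinct strings this occurs with probability at most $1/n^{10}$ by Lemma~\ref{hash}, and there are at most $O(n\cdot|d(p)|)=O(n\sqrt m)=\mathrm{poly}(n)$ such queries, so by a union bound the algorithm is correct with probability $1-1/\mathrm{poly}(n)$, giving the advertised one-sided error. There is no real obstacle here; the only subtlety is ensuring that hash evaluations on substrings of $t$ are truly $\tilde O(1)$ (handled by the Rabin--Karp rolling hash behind Lemma~\ref{hash}) and that we only pay $\tilde O(|p'|)$ once per dictionary word during preprocessing, which totals $\tilde O(m)$.
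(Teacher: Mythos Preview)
Your proposal is correct and follows essentially the same approach as the paper: dynamic programming over positions of $t$, iterating only over the distinct lengths in $d(p)$ and using the rolling hash of Lemma~\ref{hash} to test dictionary membership in $\tilde O(1)$ per query. The only cosmetic differences are that the paper runs the DP right-to-left on suffixes (with $D[n+1]=1$) rather than left-to-right on prefixes, and stores just the set of hashes $h(p)$ rather than your length--hash pairs; your extra discussion of the union bound for hash collisions and of the $\tilde O(m)$ preprocessing term is more explicit than the paper's proof.
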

	\begin{proof}
		Preprocess $t$ according to Lemma \ref{hash} and compute $h(p):=\{h(p') \ : \ p' \in p\}$.
		
		We solve the problem using dynamic programming. We use the table $D:[n+1] \to \{0,1\}$. The algorithm determines the values $D[n], D[n-1], \ldots, D[1]$ (in this order). We set $D[i]=1$ iff the sequence $t_it_{i+1}\ldots t_n$ is decomposable.
		
		\begin{itemize}
			\item Set $D[i]=0$ for all $i=1,\ldots,n$ and set $D[n+1]=1$.
			\item For $i=n,\ldots,1$, set $D[i]=1$ iff there exists $j>i$ such that $D[j]=1$ and $(j-i) \in d(p)$, and $h(t_i\ldots t_{j-1}) \in h(p)$.
			\item $t$ is decomposable iff $D[1]=1$.
		\end{itemize}
	\end{proof}

\subsubsection{$\tO\left(n \cdot m^{0.5-\nicefrac{1}{18}}\right)$ time algorithm for the Word Break problem}
	Similarly as in the $\tO(n \cdot \sqrt m)$ time algorithm, we fill out the table $D:[n+1] \to \{0,1\}$:
	\begin{itemize}
		\item Set $D[i]=0$ for all $i=1,\ldots,n$ and set $D[n+1]=1$.
		\item For every $i=n,\ldots,1$ in this order, set $D[i]=1$ iff the sequence $t_it_{i+1}\ldots t_n$ is decomposable.
		\item $t$ is decomposable iff $D[1]=1$.
	\end{itemize}
	We will show that the second step can be performed in $\tO(n \cdot m^{0.5-\alpha})$ time for sufficiently small constant $\alpha>0$. We will later show that we can set $\alpha=\nicefrac{1}{18}$. For now we can think of $\alpha>0$ as a sufficiently small constant, say, $\alpha=0.01$. We will make the following two assumptions, justified by the next two lemmas.
	\begin{lemma} \label{length_lb}
		For all $p' \in p$, $|p'|\geq m^{0.5-\alpha}$.
	\end{lemma}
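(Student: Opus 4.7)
The plan is to view Lemma \ref{length_lb} as a without-loss-of-generality reduction: partition the pattern collection $p$ into a ``short'' part $p_{short} := \{p' \in p : |p'| < m^{0.5-\alpha}\}$ and a ``long'' part $p_{long} := p \setminus p_{short}$, and argue that the short part can be absorbed directly into the dynamic program at a total cost that fits inside the target $\tilde O(n \cdot m^{0.5-\alpha})$ budget. After doing so, what remains is exactly the Word Break problem with only the long patterns, and then every remaining $p' \in p$ has length at least $m^{0.5-\alpha}$, as claimed.

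Concretely, I would first preprocess $t$ according to Lemma \ref{hash} and compute the hash set $h(p_{short})$ in $\tilde O(m)$ time. Then, while filling out $D[i]$ from $i = n$ down to $i = 1$, I would split the transition into two parts. For the short part, iterate over all lengths $\ell \in \{1, 2, \ldots, \lfloor m^{0.5-\alpha} \rfloor\}$ and set $D[i] \leftarrow 1$ whenever $D[i+\ell] = 1$ and $h(t_i t_{i+1} \ldots t_{i+\ell-1}) \in h(p_{short})$; by Lemma \ref{hash} each such test costs $\tilde O(1)$, so this branch contributes $\tilde O(m^{0.5-\alpha})$ work per position, i.e.\ $\tilde O(n \cdot m^{0.5-\alpha})$ total, which is within budget. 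For the long part, defer the work to the algorithm developed in the rest of the section, which only needs to deal with patterns in $p_{long}$.

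Once the short-pattern contribution has been accounted for in this additive fashion, the remaining task is to decide, for each $i$, whether there exists a long pattern $p' \in p_{long}$ such that $t_i \ldots t_{i+|p'|-1} = p'$ and $D[i+|p'|] = 1$. This subproblem is precisely Word Break restricted to the collection $p_{long}$, and by construction every pattern in $p_{long}$ has length at least $m^{0.5-\alpha}$. Hence, at the cost of a free additive $\tilde O(n \cdot m^{0.5-\alpha})$ term, we may assume in the remainder of the analysis that $|p'| \geq m^{0.5-\alpha}$ for all $p' \in p$, which is the assertion of Lemma \ref{length_lb}.

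There is no real technical obstacle here: the reduction is essentially just a careful accounting argument built on top of the warm-up algorithm of Theorem \ref{warmup} and the hashing primitive of Lemma \ref{hash}. The only point to be mindful of is that $m$ in the bound $m^{0.5-\alpha}$ refers to the total length of the original pattern collection, so that the threshold used to separate short from long patterns is a fixed quantity independent of whether particular patterns survive the split; this keeps the subsequent analysis for $p_{long}$ consistent with the stated running time.
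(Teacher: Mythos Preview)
Your proposal is correct and follows essentially the same approach as the paper: split off the short patterns $\hat p=\{p'\in p:|p'|<m^{0.5-\alpha}\}$, handle them inside the dynamic program via hashed length-based lookups at cost $\tilde O(m^{0.5-\alpha})$ per position, and leave only the long patterns for the main algorithm. The only cosmetic difference is that the paper iterates over the set $d(\hat p)$ of distinct lengths actually occurring in $\hat p$ (which has size $<m^{0.5-\alpha}$), whereas you iterate over all lengths $\ell\in\{1,\ldots,\lfloor m^{0.5-\alpha}\rfloor\}$; both yield the same $\tilde O(n\cdot m^{0.5-\alpha})$ bound.
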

	\begin{proof}
		Let $\hat p:=\{p' \in p \ : \ |p'|<m^{0.5-\alpha}\}$.
		Clearly, we have that $|d(\hat p)|<m^{0.5-\alpha}$. Therefore, as we perform the second step of the algorithm, for every $i=n,\ldots,1$, we set $D[i]=1$ if there exists $j>i$ with $D[j]=1$ and $t_i\ldots t_{j-1} \in \hat p$ in the same way as it is done in the proof of Theorem \ref{warmup}. For every $i$ this takes $\tO(|d(\hat p)|) = \tO(m^{0.5-\alpha})$ time. Therefore, the total runtime corresponding to processing sequences in $\hat p$ is $\tO(n \cdot m^{0.5-\alpha})$.
	\end{proof}
	
	\begin{lemma} \label{length_ub}
		For all $p' \in p$, $|p'|\leq m^{0.5+\alpha}$.
	\end{lemma}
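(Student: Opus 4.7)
The plan is to handle the long patterns separately, analogously to how short patterns were handled in Lemma~\ref{length_lb}, and show that they can be processed within the target time budget $\tO(n \cdot m^{0.5-\alpha})$. Then the remaining patterns, all of length at most $m^{0.5+\alpha}$, can be assumed WLOG for the rest of the algorithm.

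First I would define $\hat p := \{p' \in p \ : \ |p'| > m^{0.5+\alpha}\}$, the set of ``long'' patterns. The key observation is a counting bound: since $\sum_{p' \in p} |p'| = m$ and each element of $\hat p$ contributes more than $m^{0.5+\alpha}$ to this sum, we have $|\hat p| < m / m^{0.5+\alpha} = m^{0.5-\alpha}$. So while individual long patterns are expensive to scan naively, there are very few of them.

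Next I would preprocess $t$ using the hashing scheme from Lemma~\ref{hash} and compute $h(p')$ for every $p' \in \hat p$ in a preprocessing step of total cost $\tO(m)$. Then, during the main dynamic programming loop, for each index $i$ from $n$ down to $1$, I augment the update to $D[i]$ with an additional check over $\hat p$: set $D[i] = 1$ if there exists $p' \in \hat p$ such that, writing $\ell = |p'|$, we have $i + \ell \le n+1$, $D[i+\ell] = 1$, and $h(t_i \ldots t_{i+\ell-1}) = h(p')$. Each such comparison takes $\tO(1)$ time by Lemma~\ref{hash}, so the per-index cost is $\tO(|\hat p|) = \tO(m^{0.5-\alpha})$, giving a total of $\tO(n \cdot m^{0.5-\alpha})$ across all $i$. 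A standard union bound on hash collisions (over the at most $n \cdot |\hat p|$ comparisons) shows that the one-sided error is $\le 1/n^{O(1)}$.

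Having processed $\hat p$ within the allowed time, we may replace $p$ by $p \setminus \hat p$ for the remainder of the algorithm without missing any valid decomposition, since any contribution of a long pattern to a decomposition has already been recorded in $D$. Therefore, in what follows, we may assume $|p'| \le m^{0.5+\alpha}$ for all $p' \in p$. The only subtlety to double-check is that combining this preprocessing pass with the ``short pattern'' pass from Lemma~\ref{length_lb} is sound, i.e.\ that doing both augmentations in the same DP loop still yields the correct value of $D[i]$; this is immediate because $D[i]$ is the OR over all valid $(j, p')$ pairs, and we are simply partitioning $p$ into short, medium, and long parts and handling the short and long parts by direct hashing while deferring the medium parts to the faster algorithm to be described below.
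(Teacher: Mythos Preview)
Your proposal is correct and follows essentially the same approach as the paper: define $\hat p$ as the long patterns, bound $|\hat p| \le m^{0.5-\alpha}$ via the total-length constraint, and handle them by a direct hashing check inside the DP loop at cost $\tO(m^{0.5-\alpha})$ per index. The only cosmetic difference is that the paper phrases the per-index check as iterating over the distinct lengths $d(\hat p)$ (as in Theorem~\ref{warmup}) rather than over the patterns in $\hat p$ themselves, but since $|d(\hat p)| \le |\hat p| \le m^{0.5-\alpha}$ both variants give the same bound.
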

	\begin{proof}
		Let $\hat p:=\{p' \in p \ : \ |p'|>m^{0.5+\alpha}\}$. Since $\sum_{p' \in p}|p'|=m$, we have $|d(\hat p)|\leq |\hat p|\leq m^{0.5-\alpha}$. Similarly as in the proof of Lemma \ref{length_lb} we can set $D[i]=1$ if there exists $j>i$ with $D[j]=1$ and $t_i\ldots t_{j-1} \in \hat p$. Therefore, the total runtime corresponding to processing sequences in $\hat p$ is $\tO(n \cdot |\hat p|)\leq \tO(n \cdot m^{0.5-\alpha})$.
	\end{proof}
	
	In the rest of the section we will show that the second step of the algorithm can be implemented in $\tO(n \cdot m^{0.5-\alpha})$ time if $\alpha>0$ is a sufficiently small constant. By Lemmas \ref{length_lb} and \ref{length_ub}, we can assume that for all $p' \in p$, $m^{0.5-\alpha}\leq |p'| \leq m^{0.5+\alpha}$.

	We build a trie data structure $T$ for $p$. It is a binary tree where each node has two children. Each node corresponds to a prefix of a sequence in $p$. The root node corresponds to an empty sequence. If a node $u$ corresponds to the sequence $s$ and has two children then one of the children corresponds to the sequence $s0$ ($s$ followed by $0$) and the other corresponds to the sequence $s1$. If $u$ has only one child, it corresponds to either $s0$ or $s1$. If a node $u$ corresponds to a sequence of length $i\geq 0$, $u$ has \emph{depth} $i$. If a node $u$ corresponds to a sequence $p'$ and $p' \in p$, then we call the node $u$ \emph{marked}. We preprocess $T$ in such a way that for any node $u$ we have a pointer to node $v$ such that $v$ is a marked ancestor of $u$ of maximal depth. A node is not its own ancestor. This data structure can be constructed in $O(m)$ time. Because $\sum_{p' \in p}|p'|=m$ and for all $p' \in p$, $m^{0.5-\alpha}\leq |p'|$, we have the the number of marked nodes in the tree is upper bounded by $|p|\leq m^{0.5+\alpha}$.

	\paragraph{Preprocessing of the tree $T$} We further preprocess the tree $T$ and the sequence $t$ such that given any index $i=1,\ldots, n$, we can answer the following query in $\tO(1)$ time. Specifically, the query algorithm outputs the maximal $j>i$ such that $t_it_{i+1}\ldots t_{j-1}$ is a prefix of a sequence in $p$, and reports the node $u$ corresponding to $t_it_{i+1}\ldots t_{j-1}$. We build a data structure that stores the hash values for all non-empty prefixes of all sequences in $p$ and supports lookups in $\tO(1)$ time. We use Rabin-Karp rolling hash to compute the hashes. This takes $\tO(m)$ time. We also preprocess the sequence $t$ so that we can compute Rabin-Karp rolling hash value in $\tO(1)$ time for every substring. This takes $\tO(n)$. The total runtime of the preprocessing steps is $\tO(m+n)$. Given $i$, the query algorithm does the binary search to find the largest $j>i$ such that the hash of sequence $t_it_{i+1}\ldots t_{j-1}$ is in the table. Since we can do lookups in the table in $\tO(1)$ and there are $\tO(1)$ binary search steps, and we can compute the rolling hash value for every substring in $\tO(1)$ time, we get the required upper bound $\tO(1)$ on the query time. We can easily augment the data structure so that we can output the corresponding node $u$ from the tree $T$.

	For $j=1,\ldots, n/m^{0.5-\alpha}$ we call the sequence $D[n-j\cdot m^{0.5-\alpha}+1]\ldots D[n-(j-1)\cdot m^{0.5-\alpha}]$ the $j$-th {\em chunk} of the table $D$.
	We will show how to determine the values in the $j$-th chunk in time $\tO(m^{1-2\alpha})$ assuming that we have all values for chunks with indices smaller than $j$. This gives the required upper bound on the runtime because there are $n/m^{0.5-\alpha}$ chunks and $\tO(m^{1-2\alpha})\cdot n/m^{0.5-\alpha} = \tO(n \cdot m^{0.5-\alpha})$.
	Since for all $p' \in p$, $m^{0.5-\alpha}\leq |p'| \leq m^{0.5+\alpha}$, the only previously computed values of $D$ that are needed  to compute the $j$-th chunk are:
	$$
		D[n-(j-1)\cdot m^{0.5-\alpha}+1]\ldots D[n-(j-1)\cdot m^{0.5-\alpha}+m^{0.5+\alpha}]
	$$
	 To simplify the notation, we relabel the table $D$ by defining the table $D'$.
	Specifically, we identify $D[n-j\cdot m^{0.5-\alpha}+1]\ldots D[n-(j-1)\cdot m^{0.5-\alpha}+m^{0.5+\alpha}]$ with $D'[1]\ldots D'[m^{0.5-\alpha}+m^{0.5+\alpha}]$.  Thus, our goal is to find the values $D'[1]\ldots D'[m^{0.5-\alpha}]$ knowing the values $D'[m^{0.5-\alpha}+1]\ldots D'[m^{0.5-\alpha}+m^{0.5+\alpha}]$. Let $t'$ be the corresponding substring of $t$ of length $m^{0.5-\alpha}+m^{0.5+\alpha}$.
	
	\paragraph{Intuition} Suppose that we want to determine value of $D'[i]$. We look for the largest $j>i$ such that $t'_i\ldots t'_{j-1}$ is in $p$. If $D'[j]=1$, we set $D'[i]=1$ and move to determine $D'[i-1]$. However, it might be that $D'[j]=0$ and there are integers $j'$ such that $i<j'<j$ and $t'_i \ldots t'_{j'-1}$ is in $p$. For every such $j'$ we have to check whether $D'[j']=1$ and set $D'[i]=1$ if this happens. If there are not too many such $j'$, we can work through all of them. It might happen that there are many such $j'$. In this case we build a characteristic vector of the set of such $j'$s, i.e.,  set the entry corresponding to each such $j'$ to $1$. We then convolve the characteristic vector with $D'$. Although this does not reduce the runtime when working with $D'[i]$, the saving will occur in the future if we will need to determine $D'[i']$ such that $i'<i$ and $t'_i t'_{i+1} \ldots$ and $t'_{i'}t'_{i'+1}\ldots$ share long prefixes. We will make this more precise below.
	
	We determine the unknown values of $D'$ in two phases - preprocessing phase and online phase. In the preprocessing phase we preprocess the known part of $D'$ together with $T$ in $\tO(m^{1-2\alpha})$ time. In the online phase we determine the unknown values $D'[i]$ for $i=m^{0.5-\alpha},\ldots, 1$ in this order. We spend time $O(m^{0.5-\alpha})$ for every $D'[i]$.
	
	\paragraph{Preprocessing phase}
		In the following we will define a subset of the marked nodes that we call \emph{special}.  Initially the set of special nodes is empty.
		We will keep the invariant that if a node is special, then all its marked ancestors are also special.
		Since $|p|\leq m^{0.5+\alpha}$, there are at most $m^{0.5+\alpha}$ leaves in the tree $T$. Fix an arbitrary ordering of the leaves and consider the leaves one by one. Let $\ell$ be the current leaf that we consider. Let $c$ be the number of marked ancestors of $\ell$ that are not special. We can determine $c$ in time $O(c)$ because every node in $T$ keeps a pointer to the marked ancestor of maximal depth. We distinguish two cases.

		\paragraph{Case 1: $c>m^{0.5-3\alpha}$} We mark $\ell$ and all its marked ancestors as special. Since $|p|\leq m^{0.5+\alpha}$, the number of marked nodes is at most $m^{0.5+\alpha}$. This means that we happen to be in this case at most $m^{0.5+\alpha}/m^{0.5-3\alpha}=m^{4\alpha}$ times. Let $d$ denote the depth of the node $\ell$. Since for every $p' \in p$, $|p'|\leq m^{0.5+\alpha}$, we have that $d \leq m^{0.5+\alpha}$. Let $u_0, u_1, \ldots, u_d=\ell$ be the nodes on the path from the root of $T$ to $\ell$. The node $u_0$ is the root of $T$. Let $l:=m^{0.5-\alpha}$. We define $\lfloor d/l \rfloor$ binary vectors $r_1, \ldots, r_{\lfloor d/l \rfloor} \in \{0,1\}^l$ as follows. For $i=1,\ldots,\lfloor d/l \rfloor$ and $j=1,\ldots,l$ we set the $j$-th entry $r_i[j]$ of the vector $r_i$ to be equal to $r_i[j]=1$ if the node $u_{(i-1)l+j}$ is marked and equal to $r_i[j]=0$ if the node is not marked. For every $i=1,\ldots,\lfloor d/l \rfloor$, we compute the convolution between the binary vector $r_i$ and the binary vector $D'[m^{0.5-\alpha}+1]\ldots D'[m^{0.5-\alpha}+m^{0.5+\alpha}]$ in the following sense. We output a binary vector $c_i$ with $m^{0.5-\alpha}+m^{0.5+\alpha}+l-1$ entries such that for $j=1,\ldots,m^{0.5-\alpha}+m^{0.5+\alpha}+l-1$, 
		$$
			c_i[j]:=\sum_{k=1}^l\left( r_i[k] \cdot D'[k+j+m^{0.5-\alpha}-l] \right).
		$$
		When computing $c_i[j]$, if we need to access an entry $D'[z]$ with $z\leq m^{0.5-\alpha}$ or $z \geq m^{0.5-\alpha}+m^{0.5+\alpha}+1$, we assume that it is equal to $0$.
		Computing the convolution $c_i$ for $r_i$ takes $O(m^{0.5+\alpha}\log m)$ time using the Fast Fourier Transform. Since we have to compute the convolution for $i=1,\ldots,\lfloor d/l \rfloor$, in total it takes $O(m^{0.5+\alpha}\log m) \cdot d/l=O(m^{0.5+3\alpha} \log m)$ time. Since we happen to be in this case at most $m^{4\alpha}$ times, the total runtime corresponding to this case is bounded by $O(m^{0.5+7\alpha} \log m) = \tO(m^{1-2\alpha})$ assuming that $\alpha\leq 1/18$.

		\paragraph{Case 2: $c\leq m^{0.5-3\alpha}$} In this case we do not do anything. Since there are at most $m^{0.5+\alpha}$ leaves, the total time corresponding to this case is upper bounded by $m^{0.5+\alpha} \cdot O(c) \leq O(m^{1-2\alpha})$ which is what we wanted.
		
		\begin{figure*}
        \centering
		
		\begin{tikzpicture} [solid]
			\node at (-2.5,0) {Tree $T$:};
		
			\draw (-5,-10) -- (0,0) -- (5,-10);
			
			\draw[dashed] (-6,-5) -- (6,-5);
			\draw[dashed] (-6,-9.5) -- (6,-9.5);
			
			\node at (7.5,-5) {Depth $m^{0.5-\alpha}$};
			\node at (7.5,-9.5) {Depth $m^{0.5+\alpha}$};
			
			\draw (0,0) .. controls (0,-4) and (-3,-5) .. (-2,-8.5);
			
			\draw (-2,-8.5) circle (0.7mm);
			\draw (-2.15,-7.9) circle (0.7mm);
			\draw (-2.2,-7.3) circle (0.7mm);
			\draw (-2.13,-6.3) circle (0.7mm);
			\draw (-2.03,-5.9) circle (0.7mm);
			\draw (-1.8,-5.3) circle (0.7mm);
			
			\node[anchor=east] at (-2,-8.5) {$\ell_1$};

			\draw (0,0) .. controls (1,-4) and (-1,-5) .. (0,-9);
			
			\draw[fill=black] (0,-9) circle (0.7mm);
			\draw[fill=black] (-0.05,-8.8) circle (0.7mm);
			\draw[fill=black] (-0.14,-8.3) circle (0.7mm);
			\draw[fill=black] (-0.23,-7.7) circle (0.7mm);
			\draw[fill=black] (-0.25,-7.5) circle (0.7mm);
			\draw[fill=black] (-0.28,-7) circle (0.7mm);
			\draw[fill=black] (-0.285,-6.8) circle (0.7mm);
			\draw[fill=black] (-0.28,-6.6) circle (0.7mm);
			\draw[fill=black] (-0.25,-6) circle (0.7mm);
			\draw[fill=black] (-0.17,-5.5) circle (0.7mm);
			
			\node[anchor=north west] at (-0,-9) {$\ell_2$};

			\draw (0.26,-2.5) .. controls (2,-3) and (2,-8) .. (2.8,-8.3);
			
			\draw (2.8,-8.3) circle (0.7mm);
			\draw (2.425,-7.7) circle (0.7mm);
			\draw (2.225,-7) circle (0.7mm);
			\draw (2.11,-6.5) circle (0.7mm);
			\draw (1.89,-5.5) circle (0.7mm);
			
			\node[anchor=north] at (2.8,-8.3) {$\ell_4$};

			\draw (-0.25,-6.2) .. controls (1,-7) and (1.5,-9) .. (2,-9.3);
			
			\draw[fill=black] (2,-9.3) circle (0.7mm);
			\draw[fill=black] (1.68,-8.9) circle (0.7mm);
			\draw[fill=black] (1.45,-8.5) circle (0.7mm);
			\draw[fill=black] (1.3,-8.2) circle (0.7mm);
			\draw[fill=black] (1.15,-7.9) circle (0.7mm);
			\draw[fill=black] (0.73,-7.2) circle (0.7mm);
			\draw[fill=black] (0.25,-6.6) circle (0.7mm);
			
			\node[anchor=west] at (2,-9.3) {$\ell_3$};
		\end{tikzpicture}
		
		\captionsetup{singlelinecheck=off}
        \caption[]{
			An example of the preprocessing phase of tree $T$ and vector $D'$. Each white circle denotes a marked node that is not special. Each black circle denotes a marked node that is special. We do not depict the nodes that are not marked. Each solid line is a sequence of marked or unmarked nodes. Notice that all marked nodes are at depth at least $m^{0.5-\alpha}$ and at most $m^{0.5+\alpha}$. The tree has $4$ leaves $\ell_1, \ell_2, \ell_3, \ell_4$. We process them in order $\ell_1, \ell_2, \ell_3, \ell_4$:
			\begin{itemize}
				\item Leaf $\ell_1$ has $c=5$ marked ancestors. We assume that $m^{0.5-3\alpha}=5$ and therefore we happen to be in case $c\leq m^{0.5-3\alpha}$. We do not make the marked ancestors special.
				\item Leaf $\ell_2$ has $c=9$ marked ancestors. We are in $c>m^{0.5-3\alpha}$ case and we make $\ell_2$ and all marked ancestors of $\ell_2$ special. We split the path from the root to $\ell_2$ into shorter paths of length $l$. For every shorter path we construct a binary characteristic vector where we set entry to be equal to $1$ if the corresponding node is marked. Then we convolve each characteristic vector with binary vector $D'$.
				\item Leaf $\ell_3$ has $c=6$ marked ancestors that are not special and two marked ancestors that are special. We are in $c=6>m^{0.5-3\alpha}$ case and we make $\ell_3$ and all marked ancestors of $\ell_3$ special. Then we split the path from the root to $\ell_3$ into shorter paths of length $l$ and proceed similarly as when processing leaf $\ell_2$.
				\item Leaf $\ell_4$ has $c=4$ marked ancestors. We are in $c\leq m^{0.5-3\alpha}$ case and we do not make any of the marked ancestors special.
			\end{itemize}
		}
	\label{figure_trie}
\end{figure*}
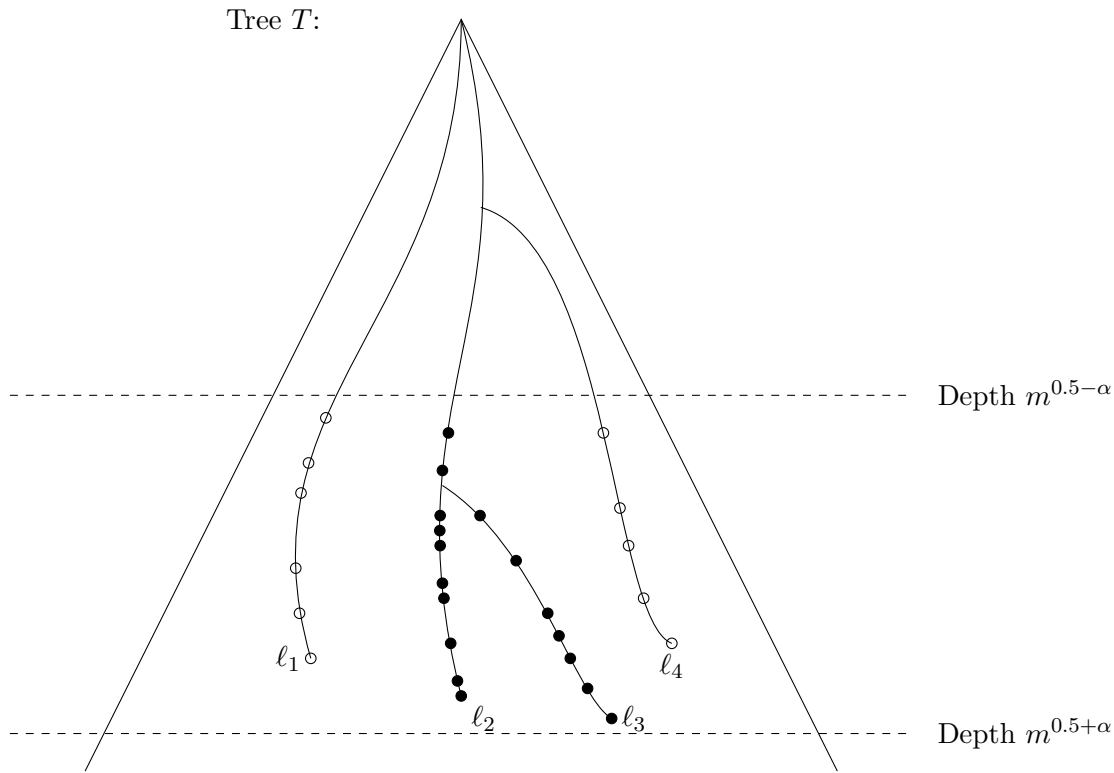

		See Figure \ref{figure_trie} for an example run of the preprocessing phase.
	
		Notice that after the preprocessing phase, the set of marked and special nodes of $T$ form a rooted subtree among all marked nodes of $T$.
	
	\paragraph{Online phase}
		We determine the values of $D'[i]$ for $i=m^{0.5-\alpha},\ldots, 1$ in this order. Fix $i$ for which we want to determine $D'[i]$. Let $j>i$ be the largest integer such that $t'_i\ldots t'_{j-1}$ corresponds to a node $u$ in tree $T$. We find $j$ using the query algorithm described before, in $\tO(1)$ time. Let $a$ be the ancestor of $u$ which is marked and special and whose depth $d$ is the largest. Let $c$ be the number of marked ancestors of $u$ that are marked but not special. By the preprocessing phase, $c\leq m^{0.5-3\alpha}$. For every such marked ancestor which is not special we want to determine whether the corresponding entry of $D'$ is equal to $1$. This takes at most $O(m^{0.5-3\alpha})$ total time. If we found such an entry of $D'$ equal to $1$, we set $D'[i]=1$ and move to determining the value of $D'[i-1]$. By the choice of the node $a$, there are at most $l$ marked ancestors of $u$ that are special and are of depth more than $d':=\lceil d/l \rceil l$. We can determine whether the corresponding entry of $D'$ is equal to $1$ for any of those nodes and set $D'[i]=1$ if this happens. This takes at most $O(l)$ total time. It remains to consider the marked ancestors of $u$ of depth at most $d'$. For this we use the convolutions that we performed in the preprocessing step. Let $u_0, u_1, \ldots, u_{d'}=a$ be the nodes on the path from the root of $T$ to $l$. The node $u_0$ is the root of $T$. Let $r_1, \ldots, r_{d'/l}$ be the binary vectors of length $l$ constructed as follows. For $k=1,\ldots, d'/l$ and $j=1,\ldots,l$ we set the $j$-th entry $r_k[j]$ of the vector $r_k$ to be equal to $r_k[j]=1$ if the node $u_{(k-1)l+j}$ is marked and equal to $r_k[j]=0$ if the node is not marked. We want to determine whether there is a marked node corresponding to an entry equal to $1$ in $r_k$ such that the corresponding entry in $D'$ is equal to $1$. We can determine whether this is the case in $O(1)$ because this we can check whether the corresponding entry of $c_k$ is equal to $0$ or at least $1$. If the entry is at least $1$, we set $D'[i]$ to be equal to $1$ and continue with $D'[i-1]$. We spend $O(1)$ for each $k=1,\ldots, d'/l$. The whole process takes $O(d'/l)$ time.
		The total runtime spent on computing the value of $D'[i]$ is bounded by $O(m^{0.5-3\alpha})+O(l)+O(d'/l)\leq O(m^{0.5-\alpha})$.
		
	\paragraph{The runtime of the algorithm} We have seen that the runtime of the algorithm is  bounded by $\tO(n \cdot m^{0.5-\alpha})$ for any constant $0<\alpha\leq \nicefrac{1}{18}$. We obtain the required runtime $\tO\left(n \cdot m^{0.5-\nicefrac{1}{18}}\right)$ by setting $\alpha=1/18$.

	\subsection{Algorithm for type ``$\circ+$''}
\label{cp}
	\begin{theorem}
		Let $p$ be a regular expression of type ``$\circ +$''  and $t$ be a text. In time $O(|p|+|t|)$ we can reduce the pattern matching problem on $p$ and $t$ to one instance of Subset Matching and one instance of Wildcard Matching.
	\end{theorem}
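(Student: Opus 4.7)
First I would parse $p$ into maximal blocks $B_1\circ\cdots\circ B_L$, each $B_j$ of the form $s_j^{k_j}$ (``exact'', a pure concatenation of $k_j$ copies of the single symbol $s_j$) or $s_j^{\ge k_j}$ (``at-least'', with $k_j$ the minimum number of copies forced by the $+$), coalescing consecutive same-symbol blocks so that $s_j\neq s_{j+1}$. In parallel I compute the run-length encoding of $t$, giving runs $R_1,\ldots,R_m$ with $R_i=(c_i,\ell_i)$; both take $O(|p|+|t|)$. The critical structural observation is that because adjacent block-symbols differ, any match of $p$ in $t$ forces every interior block $B_j$ ($2\le j\le L-1$) to consume an \emph{entire} run of $t$, whereas $B_1$ consumes a suffix and $B_L$ a prefix of a flanking run. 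I would handle $L=1$ directly from the RLE; henceforth $L\ge 2$.

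Next I build one Subset Matching instance to locate all run indices $a$ at which $B_2,\ldots,B_{L-1}$ align with the runs $R_a,\ldots,R_{a+L-3}$. At text-position $i$ I put $T_i:=\{(c_i,\ge,k):1\le k\le\ell_i\}\cup\{(c_i,=,\ell_i)\}$, whose sizes sum to $O(|t|)$; at pattern-position $j-1$ I put the singleton $\{(s_j,\mathrm{op}_j,k_j)\}$, where $\mathrm{op}_j\in\{\ge,=\}$ matches the type of $B_j$, for $j=2,\ldots,L-1$. A reported subset-match at $a$ certifies exactly that the interior alignment is feasible, and the remaining task is to confirm that $B_1$ and $B_L$ fit into the flanking runs $R_{a-1}$ and $R_{a+L-2}$. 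For $L\in\{2,3\}$ the subset-matching pattern is empty or of length one; the full answer then hinges on the boundary check below.

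Then I build one Wildcard Matching instance for the flanking conditions. After an $O(|t|)$ precomputation that labels every position $i$ with its forward-run-length $\mathrm{FRL}(i)$ and the symbol immediately following its maximal run, producing an augmented text $\tilde t$, the admissibility of a start $i$ for $B_1$ becomes a local fixed-offset test: ``positions $i$ through $i+k_1-1$ are $s_1$, and either $t_{i+k_1}=s_2$ (exact $B_1$) or the stored post-run marker at $i$ equals $s_2$ (at-least $B_1$)''; a symmetric test describes admissible end positions for $B_L$. I encode these two fixed-shape tests as short wildcard patterns over the augmented alphabet, concatenate them with a unique separator so that one Wildcard Matching call resolves both, and finally join its output with the Subset Matching output in $O(|p|+|t|)$ postprocessing to report all match positions of $p$ in $t$.

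The main obstacle I foresee is folding the two boundary checks—sitting at opposite ends of the matched region and each independently of exact or at-least type—into a \emph{single} Wildcard Matching instance. The forward-run-length and next-symbol augmentation of $t$ is what converts the variable-length at-least boundary condition into a constant-offset symbol lookup; with that in hand, the separator-concatenation trick lets the left-end and right-end queries share one wildcard-matching call, keeping the whole reduction within the $O(|p|+|t|)$ budget.
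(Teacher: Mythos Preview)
Your route is quite different from the paper's, and the Wildcard Matching step has a real gap.

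The paper works entirely at the run-length level and splits the constraints by \emph{type} rather than by \emph{location}. Its Subset Matching instance encodes, for \emph{every} block $B_j$ (boundary blocks included), only the lower-bound test ``symbol $=s_j$ and run length $\ge k_j$'': pattern position $j$ is $\{(s_j,1),\dots,(s_j,k_j)\}$, text position $i$ is $\{(c_i,1),\dots,(c_i,\ell_i)\}$. Its Wildcard Matching instance, also over the RLE, handles the \emph{exact} constraints: an interior exact block becomes the symbol $(s_j,k_j)$, an interior at-least block becomes a wildcard, and the two boundary positions are forced to wildcards because for $B_1$ and $B_L$ only the lower bound matters (one can always slide the start or end inside the flanking run). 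A run-index is a valid alignment iff both instances match there.

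Your Subset Matching encoding is different and in fact stronger: by tagging elements with $\ge$ or $=$ you make one Subset Matching call certify \emph{both} the lower-bound and the exact constraints for the interior blocks. That part is correct. But then your Wildcard Matching instance is carrying almost nothing essential: once an interior alignment at run $a$ is certified, the only remaining conditions are ``run $a-1$ has symbol $s_1$ and length $\ge k_1$'' and ``run $a+L-2$ has symbol $s_L$ and length $\ge k_L$'', each an $O(1)$ lookup in the RLE. You try to package these two $\ge$-tests as a character-level Wildcard Matching instance, and here is the gap: the ``separator-concatenation trick'' you appeal to does not do what you need. Concatenating two short wildcard patterns with a separator and matching against some arrangement of the augmented text returns matches where the two sub-patterns sit at a \emph{fixed} relative offset; but your left-end and right-end admissibility tests live at opposite ends of a match whose character length is variable (because of the at-least blocks). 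One Wildcard Matching call cannot, in general, produce the two \emph{independent} match sets you then want to join. If you move the boundary tests back to run-space they are at fixed offset $L-1$, but there they are $\ge$-constraints, which is precisely why the paper puts them in the Subset Matching instance rather than the Wildcard one.

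The clean repair of your approach is simply to extend your Subset Matching pattern to all $L$ blocks, taking $\mathrm{op}_1=\mathrm{op}_L={\ge}$ regardless of the actual block type; a single Subset Matching instance then decides the problem and no Wildcard Matching is needed at all. That is a valid (indeed tighter) reduction, just not the two-instance split the theorem statement advertises.
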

	\begin{proof} Below we reduce this regular expression pattern matching problem to Subset Matching and to Wildcard Matching. Then we combine the two outputs and solve the initial regexp problem.

\paragraph{Reduction to Subset Matching}
		We partition the pattern $p$ into substrings of maximal length such that all symbols in each substring are equal, i.e., each substrings is a concatenation of copies of $a^+$ or $a$. 
		Consider one particular substring and suppose that it contains $l$ copies of $a$ or $a^+$. We replace this substring with a set      $\{(a,1),(a,2), \ldots, (a,l)\}$. We perform this operation for every substring of maximal length and concatenate the resulting sets to obtain the pattern $p'$ for Subset Matching. Similarly, we partition the text $t$ into substrings of maximal length such that all symbols in each substring are equal. As before, every such substring is replaced with a set $\{(a,1),(a,2), \ldots, (a,l)\}$. This yields a text $t'$.

		Now we run Subset Matching algorithm on the pattern $p'$ and the text $t'$. For each position $i$ of $t'$, we find whether $p'$ matches  the substring of $t'$ that starts at $i$.

\paragraph{Reduction to Wildcard Matching} Similarly as before, we partition pattern $p$ into substrings of maximal length such that all symbols in each substring are equal, i.e., each substrings is a concatenation of copies of $a^+$ or $a$. There are two kinds of substrings. If the substring contains only $l$ symbols $a$ (i.e., there are no $a^+$s), we replace this substring with the symbol $(a,l)$. If the substring has at least one $a^+$, we replace it with a wildcard. We concatenate all symbols and wildcards into one regular expression $p''$. If $p''$ starts with a symbol, we replace it with a wildcard. Also, if $p''$ ends with a symbol, we replace it with a wildcard.
Similarly, we partition the text $t$ into substrings of maximal length such that all symbols in each substring are equal. We replace each such substring, say $a^l$, with a symbol $(a,l)$. The symbols are concatenated into the text $t''$.

		Now we run Wildcard Matching algorithm on the pattern $p''$ and the text $t''$. For each position $i$ of $t''$, we find whether $p''$ matches the substring of $t''$ that starts at $i$.

\paragraph{Combining the results} A substring of $t$ can be derived from $p$ iff there is a position $i$ such that two conditions hold: $p'$ matches a substring of $t'$ starting at the position $i$ and $p''$ matches a substring of $t''$ starting at position $i$. The first condition (coming from the Subset Matching instance) ensures that, if we match $p$ to a substring of $t$, the number of appropriate symbols  in $t$ is {\em at least} as large as in $p$. The second condition (coming from Wildcard Matching) ensures that, if $p$ contains a maximal substring consisting only of symbols $a$ (no $a^+$s) for some $a$, then the corresponding position in $t$ contains {\em exactly} the same number of symbols $a$. Notice that we do not need to satisfy this condition if $p$ starts or ends with a maximal substring consisting only of $a$s, which is why we start and end $p''$ with wildcards.
	\end{proof}
	
	\subsection{Algorithm for types ``$|{*}\circ$'' and ``$|{+}\circ$''}
\label{period}

\begin{theorem}
	Let $p$ be a regular expression of type ``$|{*}\circ$'' or ``$|{+}\circ$'' and $t$ be a text. In time $O(|p|+|t|)$ we can decide if $t$ can be derived from $p$.
\end{theorem}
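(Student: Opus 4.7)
The plan is to reduce the membership question to testing, for each alternative $s_i$ of $p$, whether the text $t$ is a positive power of $s_i$. Parsing $p$ in $O(|p|)$ time, we write it as $s_1^{\diamond} \mid s_2^{\diamond} \mid \cdots \mid s_k^{\diamond}$ with $\diamond \in \{{*},{+}\}$ and each $s_i \in \Sigma^*$, noting that $\sum_i |s_i| = O(|p|)$. Then $t \in L(p)$ iff either ($\diamond = {*}$ and $t = \epsilon$) or there exist $i$ and $j \geq 1$ with $t = s_i^j$. So it suffices to answer, for each $s_i$, the single question ``is $t$ a positive power of $s_i$?'' in $O(|s_i|)$ amortised time after $O(|t|)$ preprocessing.

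The preprocessing is based on the \emph{primitive root} of $t$, the shortest prefix $r$ of $t$ such that $t = r^c$ for some integer $c \geq 1$. This $r$ can be extracted in $O(|t|)$ time from the KMP failure function of $t$: letting $\pi := |t| - \mathrm{fail}(|t|)$ be the shortest period of $t$, we take $r$ to be the prefix of $t$ of length $\pi$ and $c = |t|/\pi$ when $\pi \mid |t|$, and otherwise $r = t$ and $c = 1$. A standard consequence of the Fine--Wilf theorem (via the fact that $|r|$ is the smallest period of $t$ that divides $|t|$) is the characterisation: $t = s^j$ for some $j \geq 1$ if and only if $|r|$ divides $|s|$, $|s|/|r|$ divides $c$, and $s$ equals the prefix of $t$ of length $|s|$ (equivalently, $s = r^{|s|/|r|}$).

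Given $r$, $|r|$ and $c$, checking this characterisation for one $s_i$ is straightforward: the two divisibility tests cost $O(1)$, and, if they pass, comparing $s_i$ symbol by symbol with the first $|s_i|$ symbols of $t$ costs $O(|s_i|)$. Summing over $i$ yields $O(\sum_i |s_i|) = O(|p|)$, so the entire algorithm runs in $O(|p| + |t|)$ time. For the ``$|{*}\circ$'' variant we additionally accept when $|t| = 0$, which does not affect the bound. No substantive obstacle arises; the only care required is to handle the boundary case where the KMP shortest period does not divide $|t|$ (in which case $t$ is itself primitive and $c = 1$), which the case split above covers cleanly.
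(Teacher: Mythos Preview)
Your proof is correct and takes a genuinely different route from the paper. The paper preprocesses $t' := tt$ with a suffix tree and an LCA data structure so that, for any shift $\ell$, one can test in $O(1)$ time whether the cyclic rotation of $t$ by $\ell$ equals $t$; then for each alternative $p_i$ it checks (in $O(|p_i|)$ time) that $p_i$ is a prefix of $t$, that $|p_i|$ divides $|t|$, and that the rotation by $|p_i|$ fixes $t$. You instead compute the primitive root $r$ of $t$ once via the KMP failure function and then, for each $s_i$, test two divisibility conditions and a prefix comparison. Both reach the same $O(|p|+|t|)$ bound with $O(|s_i|)$ work per alternative. Your argument is more elementary---KMP rather than suffix trees plus LCA---and makes the structural reason explicit (any power decomposition of $t$ must factor through its primitive root, by Fine--Wilf); the paper's version avoids invoking Fine--Wilf but leans on heavier preprocessing machinery.
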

\begin{proof}
	Let $t':=tt$ be the sequence $t$ repeated twice. By \cite{gusfield1997algorithms} (page 196), we can use the suffix tree and the lowest common ancestor data structures to preprocess the sequences $t'$ and $t$ in $O(|t|)$ time such that the following holds. For any $i=1,\ldots,|t|$, we can in $O(1)$ decide if the substring $t'_{i}t'_{i+1}\ldots t'_{i+|t|-1}$ of the sequence $t'$ of length $|t|$ is equal to the sequence $t$.
	
	Let $p'$ be an arbitrary sequence of non-zero length. We notice that the sequence $t$ can be derived from $[p']^*$ (or, equivalently, from $[p']^+$ if $|t|>0$) iff $p'$ is a prefix of $t$ and the substring $t'_{|p'|+1}t'_{|p'|+2}\ldots t'_{|p'|+|t|}$ is equal to $t$, and $|t|$ is divisible by $|p'|$. We can check the first condition in time $O(|p'|)$ and the second condition in time $O(1)$ assuming that we did the preprocessing step.
	
	Let $p=[p_1]^* \ | \ [p_2]^* \ | \ \ldots \ | \ [p_k]^*$ be the input pattern to the membership problem for some integer $k\geq 1$. We do the preprocessing step on $t'$ and $t$ which takes $O(|t|)$. For every $p_i$, $k\geq i \geq 1$ we decide in $O(|p_i|)$ time if $t$ can be derived from $[p_i]^*$. This yields the required runtime.
	
	The algorithm for the case when $p=[p_1]^+ \ | \ [p_2]^+ \ | \ \ldots \ | \ [p_k]^+$ is the same except we can't derive text $t$ if $|t|=0$.
\end{proof}

\subsection{Algorithms for types ``$|{\circ} +$'' and ``$*{\circ} +$''}
\label{run}

\begin{theorem}
	Let $p$ be a regular expression of type ``$|{\circ} +$'' or ``$*{\circ} +$'' and $t$ be a text. In time $O(|p|+|t|)$ we can decide if $t$ can be derived from $p$.
\end{theorem}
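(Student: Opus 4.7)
My plan is to reduce both problems to linear scans over the run-length encoding (RLE) of $t$. First, compute RLE$(t)$ in $O(|t|)$ time, obtaining runs $(b_1,d_1),\ldots,(b_m,d_m)$ with $b_j\neq b_{j+1}$. Second, normalize each concatenation subexpression in $p$: collapse consecutive factors sharing the same underlying symbol into a single \emph{block} $(a,\mu,\mathrm{plus})$, where $\mu$ is the minimum required count (the sum of the literal count and the number of merged $+$-factors) and the plus flag is true iff at least one of the merged factors was an $a^+$. Such a block matches a run $(b,d)$ exactly when $b=a$ and $d\geq\mu$ (if plus is true) or $d=\mu$ (otherwise). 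This normalization costs $O(|p|)$.

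For type ``$|{\circ}+$'', write $p=q_1\mid\cdots\mid q_k$ with block sequences $B_{i,1},\ldots,B_{i,l_i}$. Then $t\in L(q_i)$ iff $m=l_i$ and each pair $(b_j,d_j)$ matches $B_{i,j}$. For each $i$ I would scan RLE$(t)$ against the block sequence of $q_i$, aborting on the first mismatch (a length mismatch is detected in $O(1)$). The per-$i$ cost is $O(1+\min(l_i,m))$, so $\sum_i O(1+l_i) + O(|t|) = O(|p|+|t|)$ overall.

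For type ``$*{\circ}+$'', write $p=[q]^*$ with block sequence $B_1,\ldots,B_l$. The empty text is accepted directly, so suppose $t\neq\epsilon$. A nonempty word in $L(p)$ is the concatenation of $k\geq 1$ words in $L(q)$, and its RLE structure depends on whether $a_l=a_1$. If $a_l\neq a_1$, no seam merging occurs, so RLE$(t)$ must have $m=kl$ runs for some $k\geq 1$ with letter pattern $(a_1,\ldots,a_l)$ repeated $k$ times and each run length satisfying the constraint of its corresponding block. If $a_l=a_1$ with $l\geq 2$, then the last run of one copy of $q$ fuses with the first run of the next, so RLE$(t)$ has length $m=1+k(l-1)$ with letter pattern $a_1,(a_2,\ldots,a_l)^k$; the $k-1$ interior seam runs (at positions $1+s(l-1)$ for $1\leq s<k$) must satisfy the merged constraint $(a_1,\mu_l+\mu_1,\mathrm{plus}_l\vee\mathrm{plus}_1)$, while all other runs use their individual block constraints. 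The degenerate case $l=1$ (so $p=[a_1^+]^*$ or $p=[a_1^{\mu_1}]^*$) is checked in $O(1)$ from RLE$(t)$. Each case is a single linear pass, taking $O(m+l)=O(|p|+|t|)$.

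The main subtlety is the seam-constraint computation in the $a_l=a_1$ subcase of ``$*{\circ}+$'': the merged run length must be expressible as a sum of a $B_l$-valid count and a $B_1$-valid count, which simplifies cleanly to requiring that the merged count be at least $\mu_l+\mu_1$ when either flag is plus, and exactly $\mu_l+\mu_1$ when both are exact. Once this bookkeeping is laid out, correctness follows because an expression of type ``${\circ}+$'' over a normalized block sequence generates precisely the strings whose RLE matches the block sequence position by position, and outer $|$ or $*$ aggregates these block patterns in the way the case analysis describes.
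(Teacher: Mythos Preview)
Your proposal is correct and follows essentially the same approach as the paper: both compute the run-length encoding of $t$, normalize each ``${\circ}+$'' subexpression into a sequence of blocks carrying a symbol, a minimum count, and an ``exact vs.\ at-least'' flag, then compare block sequences against RLE$(t)$ directly for ``$|{\circ}+$'' and with a first/last-symbol case split (handling seam merging) for ``$*{\circ}+$''. If anything, your bookkeeping for the seam case is stated more precisely than the paper's, which contains a few typos in its description of the merged constraint sequence.
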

\begin{proof}
	We do the run-length encoding of $t$ defined as follows. Set $t':=t$ and initialize $r$ to be an empty ordered sequence of tuples. While $|t'|>0$, let $a$ be the first symbol of $t'$ and let $l>0$ be the largest integer such that $a^l$ (symbol $a$ repeated $l$ times) is a prefix of $t'$. Remove the prefix $a^l$ from $t'$ and add tuple $(a,l)$ at the end of the ordered sequence $r$, and repeat. This takes $O(|t|)$ time in total. Let $|r|$ be the number of tuples in $r$.
	
	\paragraph{Type ``$|{\circ} +$''} Let $p=p_1 \ | \ \ldots \ | \ p_k$ for some integer $k>0$, where each $p_i$ is a concatenation of $a$ and $a^+$ for various symbols $a$. We want to decide if there exists an integer $i=1, \ldots, k$ such that the text $t$ can be derived from the expression $p_i$. Fix an arbitrary $i=1, \ldots, k$. We do the run-length encoding of the expression $p_i$ and produce a sequence of tuples $r(p_i)$ defined as follows. Set $p_i':=p_i$ and $r(p_i)$ to be an empty sequence of tuples. While $|p_i'|>0$, choose the largest integer $l>0$ such that there exists a prefix of $p_i'$ of form $a a^+ a^+ a a \ldots$ (an arbitrary concatenation of $a$ and $a^+$) for some symbol $a$. Let $l'\geq 0$ be such that the prefix of $p_i'$ has $l'$ occurrences of $a$ and $l-l'$ occurrences of $a^+$. If $l'=l$, we add tuple $(a,=l)$ to the end of the sequence of tuples $r(p_i)$. Otherwise, if $l'<l$, we add tuple $(a,\geq l)$ to the end of the sequence $r(p_i)$. We delete the prefix of $p_i'$ and repeat (until $|p_i'|=0$). Let $|r(p_i)|$ be the number of tuples in $r(p_i)$. We can derive $t$ from the expression $p_i$ iff the following two conditions hold:
	\begin{itemize}
		\item $|r|=|r(p_i)|$.
		\item For all $j=1,\ldots,|r|$, if $r_j=(a,l)$ (the $j$-th tuple of $r$ is $(a,l)$), then $r(p_i)_j=(a,=l)$ or $r(p_i)_j=(a,\geq l')$ for some integer $l'\leq l$.
	\end{itemize}
	For every $i=1,\ldots,k$ this takes $O(|p_i|)$ time and the required upper bound on the runtime follows.
	
	\paragraph{Type ``$*{\circ} +$''} We have to consider the case when $p=[p']^*$ for some $p'$. We do the run-length encoding on the sequence $p'$ and get the sequence of tuples $r(p')$ (as described in the case for type ``$|\circ +$''). For every tuple $(a,=l)$ or $(a,\geq l)$, let $a$ be it's type. Consider two subcases.
	\paragraph{The first and the last tuple of $r(p')$ are of different types} We can derive the text $t$ from the expression $p$ iff the following two conditions
	\begin{itemize}
		\item $|r|$ is divisible by $|r(p')|$.
		\item Let $r'=(r(p'),r(p'),\ldots,r(p'))$, where $r(p')$ is repeated $|r|/|r(p')|$ time in the r.h.s.. For all $j=1,\ldots,|r|$, if $r_j=(a,l)$, then $r'_j=(a,=l)$ or $r'_j=(a,\geq l')$ for some integer $l'\leq l$.
	\end{itemize}
	
	\paragraph{The first and the last tuple of $r(p')$ are of the same type} If $|r(p')|=1$ we can check if $t$ can be derived from $p$ easily. If there is no integer $k\geq 1$ such that $|r|=k|r(p')|-(k-1)$, then $t$ can't be derived from $p$. Otherwise, let $r'$ be $r(p')$ except the last tuple. Let $r''$ be $r$ except we change the first tuple of $r$. Let $z$ be the first and the last tuple of $r$ merged (in the natural way). We replace the first tuple of $r$ by $z$ and get $r''$. We define $r'''=(r',r'',r'',r'',\ldots,r'')$, where $r''$ is repeated $k-1$ times. Furthermore, we add the last tuple of $r(p')$ at the end of $r'''$. The text $t$ can be derived from the expression $p$ iff for all $j=1,\ldots,|r|$, if $r_j=(a,l)$, then $r'''_j=(a,=l)$ or $r'''_j=(a,\geq l')$ for some integer $l'\leq l$.
	
	In the both subcases the runtime is upper bounded by $O(|p|+|t|)$.
\end{proof}
	
	\section*{Acknowledgments}
We thank Ludwig Schmidt and the reviewers for  many helpful comments.
This work was supported by grants from the NSF, the MADALGO center, and the Simons Investigator award.

		\bibliographystyle{alpha}
		\bibliography{ref}

\newcommand{\etalchar}[1]{$^{#1}$}
\begin{thebibliography}{WvdMBW16}

\bibitem[ABVW15]{followup1}
Amir Abboud, Arturs Backurs, and Virginia Vassilevska~Williams.
\newblock {Tight Hardness Results for LCS and other Sequence Similarity
  Measures}.
\newblock {\em FOCS}, 2015.

\bibitem[AC75]{aho1975efficient}
Alfred~V Aho and Margaret~J Corasick.
\newblock {Efficient string matching: an aid to bibliographic search}.
\newblock {\em Communications of the ACM}, 18(6):333--340, 1975.

\bibitem[AKL{\etalchar{+}}15]{amir2015mind}
Amihood Amir, Tsvi Kopelowitz, Avivit Levy, Seth Pettie, Ely Porat, and B~Riva
  Shalom.
\newblock Mind the gap.
\newblock {\em arXiv preprint arXiv:1503.07563}, 2015.

\bibitem[AWW14]{AWW}
Amir Abboud, V.~Vassilevska Williams, and Oren Weimann.
\newblock {Consequences of faster sequence alignment}.
\newblock {\em ICALP}, 2014.

\bibitem[BI15]{BI}
Arturs Backurs and Piotr Indyk.
\newblock {Edit Distance Cannot Be Computed in Strongly Subquadratic Time
  (unless SETH is false)}.
\newblock {\em STOC}, 2015.

\bibitem[BK15]{followup2}
Karl Bringmann and Marvin K{\"u}nnemann.
\newblock Quadratic conditional lower bounds for string problems and dynamic
  time warping.
\newblock In {\em Foundations of Computer Science (FOCS), 2015 IEEE 56th Annual
  Symposium on}, pages 79--97. IEEE, 2015.

\bibitem[Bri14]{Bring}
Karl Bringmann.
\newblock {Why walking the dog takes time: Frechet distance has no strongly
  subquadratic algorithms unless SETH fails}.
\newblock {\em FOCS}, 2014.

\bibitem[BT09]{bille2009faster}
Philip Bille and Mikkel Thorup.
\newblock {Faster regular expression matching}.
\newblock In {\em {Automata, Languages and Programming}}, pages 171--182.
  Springer, 2009.

\bibitem[CH97]{cole1997tree}
Richard Cole and Ramesh Hariharan.
\newblock {Tree pattern matching and subset matching in randomized $O(n\log^3
  m)$ time}.
\newblock In {\em {Proceedings of the twenty-ninth annual ACM symposium on
  Theory of computing}}, pages 66--75. ACM, 1997.

\bibitem[CH02]{cole2002verifying}
Richard Cole and Ramesh Hariharan.
\newblock {Verifying candidate matches in sparse and wildcard matching}.
\newblock In {\em {Proceedings of the Thirty-fourth annual ACM symposium on
  Theory of computing}}, pages 592--601. ACM, 2002.

\bibitem[FP74]{fischer1974string}
Michael~J Fischer and Michael~S Paterson.
\newblock {String-Matching and Other Products.}
\newblock Technical report, DTIC Document, 1974.

\bibitem[Gal85]{galil1985open}
Zvi Galil.
\newblock {Open problems in stringology}.
\newblock In {\em {Combinatorial Algorithms on Words}}, pages 1--8. Springer,
  1985.

\bibitem[GRS99]{garofalakis1999spirit}
Minos~N Garofalakis, Rajeev Rastogi, and Kyuseok Shim.
\newblock {SPIRIT: Sequential pattern mining with regular expression
  constraints}.
\newblock In {\em {VLDB}}, volume~99, pages 7--10, 1999.

\bibitem[Gus97]{gusfield1997algorithms}
Dan Gusfield.
\newblock {\em {Algorithms on strings, trees and sequences: computer science
  and computational biology}}.
\newblock Cambridge university press, 1997.

\bibitem[Ind98]{indyk1998faster}
Piotr Indyk.
\newblock {Faster algorithms for string matching problems: Matching the
  convolution bound}.
\newblock In {\em {Foundations of Computer Science, 1998. Proceedings. 39th
  Annual Symposium on}}, pages 166--173. IEEE, 1998.

\bibitem[IP01]{impagliazzo2001complexity}
Russell Impagliazzo and Ramamohan Paturi.
\newblock {On the Complexity of k-SAT}.
\newblock {\em Journal of Computer and System Sciences}, 62(2):367--375, 2001.

\bibitem[Kal02]{kalai2002efficient}
Adam Kalai.
\newblock {Efficient pattern-matching with don't cares}.
\newblock In {\em {SODA}}, volume~2, pages 655--656, 2002.

\bibitem[KDY{\etalchar{+}}06]{kumar2006algorithms}
Sailesh Kumar, Sarang Dharmapurikar, Fang Yu, Patrick Crowley, and Jonathan
  Turner.
\newblock {Algorithms to accelerate multiple regular expressions matching for
  deep packet inspection}.
\newblock {\em ACM SIGCOMM Computer Communication Review}, 36(4):339--350,
  2006.

\bibitem[KHDA12]{kin2012proton}
Kenrick Kin, Bj{\"o}rn Hartmann, Tony DeRose, and Maneesh Agrawala.
\newblock {Proton: multitouch gestures as regular expressions}.
\newblock In {\em {Proceedings of the SIGCHI Conference on Human Factors in
  Computing Systems}}, pages 2885--2894. ACM, 2012.

\bibitem[KMP77]{knuth1977fast}
Donald~E Knuth, James~H Morris, Jr, and Vaughan~R Pratt.
\newblock {Fast pattern matching in strings}.
\newblock {\em SIAM journal on computing}, 6(2):323--350, 1977.

\bibitem[Lee]{LC}
LeetCode.
\newblock {Problem 139. Word Break}.
\newblock {\em https://leetcode.com/problems/word-break/}.

\bibitem[Mye92]{myers1992four}
Gene Myers.
\newblock {A four russians algorithm for regular expression pattern matching}.
\newblock {\em Journal of the ACM (JACM)}, 39(2):432--448, 1992.

\bibitem[NR03]{navarro2003fast}
Gonzalo Navarro and Mathieu Raffinot.
\newblock {Fast and simple character classes and bounded gaps pattern matching,
  with applications to protein searching}.
\newblock {\em Journal of Computational Biology}, 10(6):903--923, 2003.

\bibitem[Tun11]{NC}
Daniel Tunkelang.
\newblock {Retiring a Great Interview Problem}.
\newblock {\em
  http://thenoisychannel.com/2011/08/08/retiring-a-great-interview-problem},
  2011.

\bibitem[Wil05]{williams2005new}
Ryan Williams.
\newblock {A new algorithm for optimal 2-constraint satisfaction and its
  implications}.
\newblock {\em Theoretical Computer Science}, 348(2):357--365, 2005.

\bibitem[WvdMBW16]{weideman2016analyzing}
Nicolaas Weideman, Brink van~der Merwe, Martin Berglund, and Bruce Watson.
\newblock Analyzing matching time behavior of backtracking regular expression
  matchers by using ambiguity of nfa.
\newblock In {\em International Conference on Implementation and Application of
  Automata}, pages 322--334. Springer, 2016.

\end{thebibliography}
\end{document}